\keywords{computational complexity, logic in computer science}
\crefname{thm}{Thm.}{Thm.}
\Crefname{thm}{Theorem}{Theorems}
\crefname{lemma}{Lem.}{Lem.}
\crefname{proposition}{Prop.}{Prop.}
\crefname{cfact}{Fact}{Fact}
\Crefname{cfact}{Fact}{Facts}
\crefname{equation}{Eqn.}{Eqn.}
\crefname{definition}{Def.}{Def.}
\let\originalleft\left
\let\originalright\right
\renewcommand{\left}{\mathopen{}\mathclose\bgroup\originalleft}
\renewcommand{\right}{\aftergroup\egroup\originalright}
\def\namedlabel#1#2{\begingroup
	#2%
	\def\@currentlabel{#2}%
	\phantomsection\label{#1}\endgroup
}
\newcommand{\hfun}{\mathcal{H}}
\newcommand{\lfun}{\mathcal{L}}
\newcommand{\setTypes}[2]{ \mathcal{#1}\ifnotempty{#2}{^{(#2)}} }
\newcommand{\representation}[2]{ #1\ifnotempty{#2}{^{(#2)}} }
\newcommand{\sizedescriptor}[2]
{
	\ifthenelse{\equal{#1}{0}}{}{
	\ifthenelse{\equal{#1}{1}}{\big}{
	\ifthenelse{\equal{#1}{2}}{\Big}{
	\ifthenelse{\equal{#1}{3}}{\bigg}{
	\ifthenelse{\equal{#1}{4}}{\Bigg}{
	#2}}}}}
}
\newcommand{\st}[3][auto]{\sizedescriptor{#1}{\left}\{#2\;\sizedescriptor{#1}{\middle}|\;#3\sizedescriptor{#1}{\right}\}}
\newcommand{\some}[4][auto]{\exists\, #2 \,{\in}\, #3\,.\sizedescriptor{#1}{\left}({#4}\sizedescriptor{#1}{\right})}
\newcommand{\xall}[3]{\forall\, #1 \,{\in}\, #2\,.\,#3}
\newcommand{\xsome}[3]{\exists\, #1 \,{\in}\, #2\,.\,#3}
\newcommand{\xusome}[2]{\exists\, #1\,.\,#2}
\newcommand{\enc}[2][auto]{\sizedescriptor{#1}{\left}< #2 \sizedescriptor{#1}{\right}>}
\newcommand{\encdot}{\enc{\cdot,\cdot}}
\newcommand{\len}[1]{\ell(#1)}
\newcommand{\dH}{d_\mathsf{H}}
\newcommand{\ID}{\mathbb{D}}
\newcommand{\IN}{\mathbb{N}}
\newcommand{\IR}{\mathbb{R}}
\newcommand{\IZ}{\mathbb{Z}}
\newcommand{\id}{\mathrm{id}}
\newcommand{\parcol}{\colon\mathnormal\subseteq}
\newcommand{\dom}{\mathrm{dom}} 
\newcommand{\cod}{\mathrm{cod}} 
\newcommand{\img}{\mathrm{img}} 
\newcommand{\range}{\img}
\newcommand{\boundary}{\partial}
\newcommand{\inner}[1]{{#1}^\circ}
\newcommand{\Sast}{\Sigma^\ast}
\newcommand{\Cantor}{\Sigma^\omega}
\newcommand{\Baire}{\Sigma^{\ast\ast}}
\newcommand{\Reg}{\mathsf{LM}}
\newcommand{\bigO}{\mathcal{O}}
\newcommand{\const}{\mathrm{const}}
\newcommand{\co}{\mathsf{co}}
\newcommand{\PTime}{\mathsf{P}}
\newcommand{\FPTime}{\mathsf{FP}}
\newcommand{\UPTime}{\mathsf{UP}}
\newcommand{\NPTime}{\mathsf{NP}}
\newcommand{\modcont}{\overline{\mu}}
\newcommand{\modsu}{\underline{\mu}}
\newcommand{\unary}{\mathsf{un}}
\newcommand{\binary}{\mathsf{bin}}
\newcommand{\dyrep}[1][\empty]{ \representation{\binary}{#1}_{\ID} }
\newcommand{\dyadicrep}[1][\empty]{ \representation{\binary}{#1}_{\ID} }
\newcommand{\unatrep}[1][\empty]{ \representation{\unary}{#1}_\mathbb{N} }
\newcommand{\bnatrep}[1][\empty]{ \representation{\binary}{#1}_\mathbb{N} }
\newcommand{\binatrep}[1][\empty]{ \representation{\binary}{#1}_\mathbb{N} }
\newcommand{\bintrep}[1][\empty]{ \representation{\binary}{#1}_\mathbb{Z} }
\newcommand{\uintrep}[1][\empty]{ \representation{\unary}{#1}_\mathbb{Z} }
\newcommand{\binaryrep}[1][\empty]{ \representation{\bm{\mathsf{bin}}_\mathbb{N}}{#1} }
\newcommand{\unaryrep}[1][\empty]{ \representation{\bm{\mathsf{un}}_\mathbb{N}}{#1} }
\newcommand{\uintXrep}[1][\empty]{ \representation{\bm{\unary}}{#1}_\mathbb{Z} }
\newcommand{\sastrep}{\bm{\nu}_{\Sast}}
\newcommand{\realrep}[1][\empty]{ \bm{\rho}\ifnotempty{#1}{^{#1}} }
\newcommand{\distrep}[1][\empty]{ \representation{\bm{\delta}}{#1} }
\newcommand{\reldistrep}[1][\empty]{ \representation{%
	\bm{\delta}\kern.05ex{\scriptstyle\mathsf{rel}}\kern.1ex%
	}{#1} }
\newcommand{\setrep}[1][\empty]{ \representation{\bm{\psi}}{#1} }
\newcommand{\sisetrep}[1][\empty]{ \representation{ \widehat{\bm{\psi}} }{#1} }
\newcommand{\gridrep}[1][\empty]{ \representation{\bm{\kappa}}{#1} }
\newcommand{\wmemrep}[1][\empty]{ \representation{\bm{\omega}}{#1} }
\newcommand{\wopt}[1][\empty]{ \representation{\bm{\varpi}}{#1} }
\newcommand{\reptpl}[1][\empty]{ \representation{\bm{\xi}}{#1} }
\NewDocumentCommand{\funrep}{O{\empty} O{\empty}}{\bm{\lambda}^{#1}_{#2}}
\NewDocumentCommand{\parfunrep}{O{\empty} O{\empty}}{\bm{\lambda}^{#1}_{\subseteq}}
\NewDocumentCommand{\invrep}{O{\empty} O{\empty}}{\bm{\iota}^{#1}_{#2}}
\NewDocumentCommand{\parinvrep}{O{\empty} O{\empty}}{\bm{\iota}^{#1}_{\subseteq}}
\NewDocumentCommand{\setinvrep}{O{\empty} O{\empty}}{\bm{\theta}^{#1}}
\NewDocumentCommand{\imgfunrep}{O{\empty} O{\empty}}{\bm{\Lambda}^{#1}_{\subseteq}}
\newcommand{\norm}[2][\empty]{
   \ifthenelse{\equal{#1}{\empty}}{%
      \left\|#2\right\|
   }{%
      \left\|#2\right\|_{#1}
   }
}
\newcommand{\normdot}[1][\empty]{\norm[#1]{\cdot}}
\newcommand{\ndot}[1][\empty]{\normdot[#1]}
\newcommand{\wrtn}[2]{#1_{#2}}
\NewDocumentCommand{\vnorm}{O{auto} m O{\empty}}{%
	\sizedescriptor{#1}{\left}\|#2\sizedescriptor{#1}{\right}\|_{#3}%
}
\NewDocumentCommand{\abs}{O{auto} m}{%
	\sizedescriptor{#1}{\left}|#2\sizedescriptor{#1}{\right}|%
}
\newcommand{\repnorm}[2]{\wrtn{#1}{\normdot[#2]}}
\newcommand{\closedset}[1][\empty]{ \representation{\mathcal{A}}{#1} }
\newcommand{\clset}[1][\empty]{ \representation{\mathcal{A}}{#1} }
\newcommand{\compset}[1][\empty]{ \representation{\mathcal{K}}{#1} }
\newcommand{\regset}[1][\empty]{ \representation{\mathcal{R}}{#1} }
\newcommand{\convset}[1][\empty]{ \representation{\mathcal{C}}{#1} }
\newcommand{\ifnotempty}[2]{ \ifthenelse{ \equal{#1}{\empty} }{}{#2} }
\newcommand{\CR}[1][\empty]{\setTypes{CR}{#1}}
\newcommand{\KC}[1][\empty]{\setTypes{KC}{#1}}
\newcommand{\KR}[1][\empty]{\setTypes{KR}{#1}}
\newcommand{\KCR}[1][\empty]{\setTypes{KCR}{#1}}
\newcommand{\cb}{\CR}   
\newcommand{\bcb}{\KCR} 
\newcommand{\pleq}{\preceq_\mathrm{p}}
\newcommand{\pless}{\prec_\mathrm{p}}
\newcommand{\pequiv}{\equiv_\mathrm{p}}
\newcommand{\parampleq}{\preceq_{\mathrm{pp}}}
\newcommand{\ball}{\mathrm{B}}
\newcommand{\cls}[1]{\overline{#1}}
\newcommand{\cball}{\cls{\ball}}
\NewDocumentCommand{\clb}{D(){auto} O{\empty} m m}{%
	\cls{\ball}_{#2}
	\sizedescriptor{#1}{\left}(%
		#3,#4%
	\sizedescriptor{#1}{\right})%
}
\NewDocumentCommand{\opb}{D(){auto} O{\empty} m m}{%
	\ball_{#2}%
	\sizedescriptor{#1}{\left}(%
		#3,#4%
	\sizedescriptor{#1}{\right})%
}
\NewDocumentCommand{\dist}{O{\empty} m D(){\empty}}{%
	d_{\ifthenelse{\equal{#1}{\empty}}{}{#1,} #2}%
	\ifthenelse{\equal{#3}{\empty}}{}{(#3)}
}
\newcommand{\mto}{\rightrightarrows}
\newcommand{\mmapsto}{\Mapsto}
\newcommand{\shortto}{\!\!\to\!\!}
\newcommand{\dffn}{\colon}
\newcommand{\dfeq}{\coloneqq}
\newcommand{\dfeqrev}{=\mathrel{\mathop:}}
\newcommand{\tuple}[1]{\langle #1 \rangle}
\newcommand{\trsp}[1]{{#1}^\mathsf{T}}
\newcommand{\COMMENT}[1]{}
\newcommand{\secref}[1]{\S{#1}}
\newcommand{\ie}{\mbox{i.\,e.}\xspace}
\newcommand{\eg}{\mbox{e.\,g.}\xspace}
\newcommand{\wrt}{with respect to\xspace}
\newcommand{\enp}[1]{\sqcap \mathsf{#1}}
\newcommand{\ens}[1]{\mathsf{#1}}
\newcommand{\opname}[1]{\mathsf{#1}}
\newcommand{\dsoch}{\mathsf{Choice}}
\newcommand{\dsocap}{\mathsf{Intersection}}
\newcommand{\dsocup}{\mathsf{Union}}
\newcommand{\dsoproj}{\mathsf{Projection}}
\newcommand{\dsoinv}{\mathsf{Inversion}}
\newcommand{\dsoimg}{\mathsf{Image}}
\newcommand{\cfn}{\mathrm{C}}
\DeclareMathOperator{\lb}{lb}
\newcommand{\ul}[1]{\underline{#1}}
\newcommand{\ol}[1]{\overline{#1}}
\newcommand{\polar}[1]{ {#1}^{\bullet} }
\newcommand{\dpolar}[1]{ {#1}^{\bullet\bullet} }
\newcommand{\mc}[1]{\mathcal{#1}}
\newcommand{\eqnsp}{\;}
\newcommand{\eword}{\varepsilon}
\newcommand{\emdash}{\leavevmode\unskip\kern.2ex---\kern.2ex\ignorespaces}
\newcommand{\usubseteq}{\mathnormal{\subseteq}}
\begin{document}

\title[Closed Sets and Operators thereon]{Closed Sets and Operators thereon:\\%
	Representations, Computability and Complexity}

\author[C.~R\"osnick-Neugebauer]{Carsten R\"osnick-Neugebauer}	
\address{Technische Universit\"at Darmstadt, Germany\thanks{until March 2015}}	
\email{research@carstenrn.com}  
\thanks{%
	Supported by the \emph{German Research Foundation} (DFG) with project
	\texttt{Zi\,1009/4} and by the \emph{Marie Curie International Research
	Staff Exchange Scheme Fellowship} \texttt{294962} within the 7th European
	Community Framework Program.
	A preliminary version (extended abstract) have been appeared in Proc.~CCA'2013.
  Some parts have also been published in the author's PhD thesis \cite{crnphdthesis}.
}	

\begin{abstract}
	The TTE approach to Computable Analysis is the study of so-called
	representations (encodings for continuous objects such as reals, functions,
	and sets) with respect to the notions of computability they induce.
	A rich variety of such representations had been devised over the past
	decades, particularly regarding closed subsets of Euclidean space plus
	subclasses thereof (like compact subsets). In addition, they had been
	compared and classified with respect to both non-uniform computability of
	single sets and uniform computability of operators on sets.
	In this paper we refine these investigations from the point of view of
	computational complexity.
	Benefiting from the concept of second-order representations and complexity
	recently devised by Kawamura \& Cook (2012), we determine parameterized
	complexity bounds for operators such as union, intersection, projection,
	and more generally function image and inversion. By indicating natural
	parameters in addition to the output precision, we get a uniform view on
	results by Ko (1991-2013), Braverman (2004/05) and Zhao \& M\"uller (2008),
	relating these problems to the $\PTime$/$\UPTime$/$\NPTime$ question in
	discrete complexity theory.
\end{abstract}

\maketitle

\section{Introduction}
	\label{sec:intro}

Closed subsets of Euclidean space, and in particular subclasses thereof like
compact subsets, are important throughout many parts of pure theoretical
mathematics, but also of no less relevance in disciplines like numerical
analysis, convex optimization, or computational geometry. It is
necessary to first define encodings for sets in order to describe computations
on them which can be performed in a reasonably realistic computational model
(which can even be implemented and used in practice \cite{Mueller:iRRAM}).
We choose the function oracle Turing machine model as in
\cite{KF82,Ko91,KawamuraCook} with encodings (functions of form
$\phi \colon \Sast \to \Sast$) of continuous objects (reals, functions, sets)
are given as oracles.
Computational efficiency is gauged by second-order polynomial runtime bounds \cite{KawamuraCook}
-- with the explicit addition of parameters which leads to a second-order
equivalent of discrete parameterized complexity \cite{KMRZarXiv,Ret13}.

The introduction of such encodings for sets, called \emph{representations} in
the TTE-branch of Computable Analysis, 
constitutes the first of two parts of this paper.
One possible representation, $\distrep$, of a closed non-empty set
$S \subseteq \IR^d$ is by a function $\phi$ approximating its distance
function $d_S$ at any point up to arbitrary precision.
Another representation, $\setrep$, asserts,
given a point $q$ and a precision parameter $n$, that either
$q$ is of distance less than $2^{-n}$ to $S$, or that it is of distance
greater than $2 \cdot 2^{-n}$. Both representations allow for printing an
arbitrarily precise picture of the respective set. So are these two
representations computably equivalent, and if, how are they related
complexity-wise? While computably equivalent in any dimension $d$, they are
only polynomial-time equivalent in dimension $d = 1$.
From dimension $d = 2$ onward, the question of whether a set $S$ is polynomial-time
computable with respect to $\distrep$ iff it is \wrt $\setrep$, has been linked
to the $\PTime$ vs.~$\NPTime$ question \cite{Braverman04}.
Several more representations had been suggested
\cite{Weih87,Hertling02,RettingerHabil}
and compared with respect to their computable equivalence
\cite{KuSch95,BW99,Weih00,Ziegler02,Hertling02,BrattkaPresser}.
The complexities of these relations, and in particular the \emph{uniform}
formulations (\ie, the complexity of an operator translating between two
representations) of them, appear to be mostly unmentioned or unknown (except
for a few examples \cite{GLS88,Braverman04}). We strengthen these
previously known equivalence results from mere computable equivalence to
\emph{parameterized polynomial-time equivalence}, and prove \emph{uniform
exponential lower bounds} for the other relations.
For dimension $d = 2$ these uniform (non-)polynomial-time equivalence results
relate to complexity results for subsets of $\IR^2$ with respect to various
representations \cite{CK95:Rtwo,CK05:distance}; and they allow us to restate
complexity results like for Julia sets \cite{RW02:JuliaSets,Brav05:JuliaSets}
with respect to polynomial-time equivalent representations.

The second part of this work investigates the parameterized computational complexity
of natural operators on sets, such as binary intersection and union,
or projection to lower dimensional subspaces, but also forming
the image or local inverse of a function with respect to a given set.
The situation concerning their parameterized complexity is similar to
that for representations of sets: Operators on closed, compact or regular
subsets have been considered with respect to computability (\eg,
\cite{zhou1996computable,Ziegler04:Operators,ziegler2004linalg})
and non-uniform complexity bounds (\eg,
\cite[Chap.~4]{Ko91}, \cite{KoYu08}),
but it appears that less is known about the uniform complexity of operators
(exceptions include
\cite{ZM08}).
In addition, complexity bounds of \eg~projection
and function inversion had been linked to classical problems from discrete
complexity theory, namely $\PTime$ vs.~$\NPTime$ and $\PTime$ vs.~$\UPTime$.
Results like these are in the spirit of well-known ones for maximization and
integration of functions \cite{KF82,Fried84} (we refer the reader \eg to
\cite{Ko98:Survey,BHW08:Tutorial} for an overview and more examples of
this kind).
In this paper we present uniform worst-case parameterized complexity bounds
for all of the aforementioned operators. Providing operators through
parameters with more information about their arguments turns out to be
valuable and fruitful approach to achieve uniformity and also allows for a
fine-grained perspective on their inherent complexity.\footnote{%
	Take the projection of a subset of $\IR^d$ to its first component as an
	example:
	The question whether a given point $q$ is contained in the projection is
	uncomputable as long as no information about a bound of the set is given.}
In addition to upper bounds we also present exponential-time lower bounds,
thus extending upon the former non-uniform bounds that depended on
believed-to-be-hard problems from discrete complexity theory.

\subsection*{Results obtained in this paper}

Primarily based on \cite[Chap.~4]{GLS88}, \cite[\secref{5}]{Weih00},
\cite{Ziegler02} and \cite{RettingerHabil}, we introduce five representations
in \cref{sec:representations}: $\distrep$ and $\reldistrep$, $\setrep$,
$\gridrep$, and $\wmemrep$. Each representation will depend on a fixed yet
arbitrary norm \emdash a dependence we will show in
\cref{subsec:poly-norm-invariance} to be of ``polynomial-time irrelevance''
for all but one representation ($\distrep$:
\cref{thm:distrep-not-polytime-norm-invariant}). We furthermore compare
relations between representations by means of (parameterized) polynomial-time
translations; and observe that, although they are all (uniformly)
computably equivalent over \emph{convex regular} sets
\cite[Cor.~4.13]{Ziegler02}, mutual (parameterized) polynomial-time reductions exist
only for intervals (\cref{lem:polytime-relations-dim-one}), but in general not from
dimension $d = 2$ onward.
In fact, we identify a kind of hierarchy:
$\wmemrep$ forms the \emph{poorest}, $\distrep$ the \emph{richest}
representation, and $\reldistrep$, $\setrep$, $\gridrep$ are parameterized
polynomial-time equivalent over compact sets
(\cref{s:Braverman-distrep,thm:setrep-gridrep-equiv,thm:enrichments-wmem}).
Parameterized polynomial-time equivalence of $\wmemrep$ with $\distrep$
(hence of all of the former representations) is finally achieved if
restricted to compact convex regular sets (\cref{s:convex-opt}).

\cref{sec:operators} then uses the formerly unveiled connections between
representations by discussing the complexity of operators. Operators include
$\dsoch$ (finding \emph{some} point in a set; the presumably most basic set
operation) and $\dsocup$, which are fully polynomial-time computable
for all of the above representations but $\wmemrep$); and $\dsocap$, which
(in contrast to the former result) polynomial-time computable \emph{only} for $\wmemrep$.
More involved operators, $\dsoinv$ and $\dsoimg$, are discussed in
\cref{sec:dsoinv,sec:dsoimg}: We prove that $\dsoinv$ is parameterized
polynomial-time computable for Lipschitz-continuous functions whose inverse
is also Lipschitz continuous (\cref{s:inv-lip}) \emdash which
fits right in the gap between naive exponential-time algorithms and results
of Ko \cite[Thm.~4.23+4.26]{Ko91}, the latter relating \emph{non-uniform}
$\dsoinv$ for a more general class of functions to the (considered to be hard)
question whether $\PTime \neq \UPTime$ holds true.

\subsection*{Preliminaries, nomenclature}

We introduce some notations and concepts we frequently use throughout this
paper.
Let $\Sigma$ be the binary alphabet $\{0,1\}$, $\Sast$ denotes the set of
finite $0/1$-strings, $\Cantor$ the set of $0/1$ \emph{sequences}
(isomorphic to $\Sigma^{(\Sast)}$, the set of all total functions from
$\Sast$ to $\Sigma$), and $\Baire \dfeq (\Sast)^{(\Sast)}$ the set of all
total functions from $\Sast$ to $\Sast$.
A finite string $s = s_1 \cdots s_k \in \Sast$ with $s_i \in \Sigma$ is also
called \emph{word}.
The \emph{length} of $s$ (as above) is defined by $\len{s} \dfeq k$, and
$\eword$ denotes the unique word of length $0$ (the \emph{empty word}).
We further consider the sets $\Cantor$ and $\Baire$ as topological spaces
equipped with the product topology (equip $\Sigma$, $\Sast$ with the
discrete topology).

Let $\IN \dfeq \{0\} \cup \IN_+$ with $\IN_+ \dfeq \{1,2,3,\dots\}$,
and abbreviate the \emph{binary logarithm of $x$} by $\lb x \dfeq \log_2 x$.
By $\ID_m \dfeq \st{a/2^m}{a \in \IZ}$ we denote the set of \emph{dyadic
rationals of precision $m \in \IZ$}, and set
$\ID \dfeq \bigcup_{m \in \IZ} \ID_m$.
Let further $\bnatrep \dffn \IN \to \Sast$ denote the usual \emph{binary
coding} of naturals as words, $\unatrep$ denotes the \emph{unary coding}
which we usually abbreviate by $0^n \dfeq \unatrep(n)$.
Even though $\unatrep$ is not surjective and thus does not admit an inverse,
we like to understand by $\unatrep^{-1}$ the mapping
$s \in \Sast \mapsto \len{s} \in \IN$.
Note that the notions of binary and unary encodings naturally extend to the
set of integers by embedding $\IZ$ into $\IN$.
By abuse of notations, we casually write $0^{k}$ for the unary coding of
an \emph{integer} $k$.
Pairing functions (usually total, bijective, computable and invertible in
polynomial time, although we do not need them to be surjective) are denoted
by $\encdot_X \colon X \times X \to X$ whereby the ``${}_X$'' will be usually
clear from context (typically $\IN$, $\Sast$ or $\Baire$) and henceforth
omitted.
Explicitly, $\enc{s,t}_{\Sast} \dfeq
	\bnatrep \big( \enc[1]{\bnatrep^{-1}(s), \bnatrep^{-1}(t)}_{\IN} \big)$
with $\encdot_{\IN}$ being the \emph{Cantor pairing function}.
Further define the pairing function $\enc{\phi,\psi}$ on Baire space
$\Baire$ through
$\enc{\phi,\psi}(\eword) \dfeq \eword$,
$\enc{\phi,\psi}(0\,s) \dfeq 0^{\len{\psi(s)}}\,1\,\phi(s)$, and
$\enc{\phi,\psi}(1\,s) \dfeq 0^{\len{\phi(s)}}\,1\,\psi(s)$
for all $s \in \Sast$.
The binary encoding $\dyrep[d] \dffn \ID^d \to \Sast$ of dyadic rationals
is recursively defined:
Let $\dyrep[1] \dffn a/2^{n} \mapsto \enc[1]{\bintrep(a),0^n}_{\Sast}$
and for $d \geq 2$ let $\dyrep[d] \dffn (q_1,\dots,q_d)
\mapsto \enc[1]{ \dyrep[1](q_1), \dyrep[d-1](q_2,\dots,q_d) }_{\Sast}$.

A \emph{normed (vector) space} is a pair $(X,\normdot)$ of a vector space $X$
together with a norm $\normdot$ on $X$.
A set $S \subseteq X$ is \emph{open in $X$} if it is the set of its
inner points, \ie, $\inner{S} = S$, and closed (in $X$) if it is the closure
of itself, \ie, $\cls{S} = S$. The boundary is defined as
$\boundary{S} \dfeq S \setminus \inner{S}$.

On $(\IR^d, \normdot)$ we denote \emph{closed} balls with center $x \in \IR^d$
and radius $\delta > 0$ by $\clb[\normdot]{x}{\delta} \dfeq
\st[0]{y \in \IR^d}{\norm{x-y} \leq \delta}$ \emdash or simply by
$\clb{x}{\delta}$ if the norm used is understood.
Similarly denote \emph{open} balls as $\opb[\normdot]{x}{\delta}$.
Whenever useful, we use the abbreviation
$\ID^d_n(R) \dfeq \ID^d_n \cap \cball(0,R)$.
A ``ball'' (actually a neighborhood) around a set $S \subseteq \IR^d$ of
radius $\delta > 0$ is defined through the union of balls around the points
of $S$, \ie, $\cball(S,\delta) \dfeq \bigcup_{x \in S} \cball(x,\delta)$,
and similarly for open balls.
The same works in the reverse direction, \ie, for negative radii:
Denote by $\cball(S,-\delta) \dfeq
	\st{x \in \IR^{d}}{\cball(x,\delta) \subseteq S}$
the (possibly empty) set of all points $x$ lying \emph{$\delta$-deep in $S$}.
Further define \emph{hollow} closed balls centered at $x$ with inner radius
$\delta' > 0$ and outer radius $\delta \geq \delta'$ through
$\cball_{\normdot}(x,\delta,\delta') \dfeq
\clb[\normdot]{x}{\delta} \setminus \opb[\normdot]{x}{\delta'}$.

The domain and co-domain (also: image) of a function $f$ mapping from a set
$X$ into $Y$ are denoted by $\dom(f)$ and $\cod(f)$, respectively. Besides
total functions, $f \colon X \to Y$ with $\dom(f) = X$, we also consider
partial functions $f \parcol X \to Y$ ($f$ is defined only on a subset of $X$,
thus the ``$\usubseteq X$''), and partial multi-valued functions as
$f \dffn X \mto Y$, $f(x) \subseteq Y$.
Phrased differently, a multi-valued partial function $f \dffn X \mto Y$ is
a partial function from $X$ into the powerset of $Y$.
The multi-valued assignment of an element $x \in X$ to a subset
$Y' \subseteq Y$ is abbreviated by $x \mmapsto Y'$.



\section{Model, representations and complexity} \label{sec:model}


All the concepts we discuss in this section are guided by the question how
computations on subsets of real vector spaces could be carried out on a
reasonably realistic machine model. Using a \emph{Turing-like machine model}
(\cref{sec:models}), we define encodings of sets through \emph{representations}
(\cref{sec:representations}) and proceed by giving definitions for
\emph{computability} and \emph{parameterized complexity}
(\cref{subsec:enrichments,sec:complexity}).

\subsection{Computational models}
	\label{sec:models}

Two mainstream models in Computable Analysis are the \emph{Type-2 Theory
of Effectivity} \cite{Weih00} and the \emph{oracle Turing machine model}
\cite{KF82,Ko91}. The former even yields a topological interpretation of
computability. 
Therefore, we start
by introducing computability using the former model.
Carrying these notation over to the latter model will then naturally give rise
to a suitable notion of complexity.
From a computational point, however, both models are equivalent.
The notions discussed in this section are based on
\cite[\secref{2.1+2.2}]{KawamuraPhD} and \cite[\secref{2}]{KMRZarXiv}.

\subsubsection{Type-2 machines: computations on finite and infinite strings.}


Type-2 machines extend upon classical Turing machines by operating on
\emph{infinite} strings instead of finite ones, \ie, on $\Cantor$
instead of $\Sast$. Such a machine consists of finitely many left-to-right
readable input and bidirectional working tapes, and one left-to-right writable
output tape.
A \emph{computation} on finite strings is carried out as on classical Turing
machines: Given a type-2 machine $M$ with $k \in \IN$ input tapes
plus an input $(s_1,\dots,s_k) \in \Sast \times \dots \times \Sast$ to it,
$M$ either reads one symbol from one of its $k$ input tapes, reads or
writes one symbol on one of its working tapes, or writes a symbol on its output
tape. The same applies if the input is not a tuple of finite strings, but of
infinite ones from $\Cantor \times \dots \times \Cantor$.
Given such a machine $M$, we say it \emph{computes a partial function}
$f \parcol \Sast \times \dots \times \Sast$ if it terminates on all inputs
$(s_1,\dots,s_k)$ from $f$'s domain and writes $f(s_1,\dots,s_k)$
symbol-by-symbol on the output tape.
As for the infinite case, $M$ computes a partial function
$f \parcol \Cantor \times \dots \times \Cantor \to \Cantor$ if $M$ continues
forever on input $(s_1,\dots,s_k)$ whilst producing $s \dfeq f(s_1,\dots,s_k)$
on its output tape.


Machines which have to run infinitely long to produce their answer would
certainly not deserve to be called ``practicable''. The key here, however, is
that for every \emph{finite prefix} of the input read a type-2 machine
produces a \emph{non-revisable finite prefix} of the still infinitely long
correct output. We postpone discussing of the strong topological implications
to computability until \cref{s:tte-cont} and \cref{sec:proof-techniques}.


As for classical Turing machines, type-2 machines are also capable to compute
functions $f \parcol X_1 \times \dots \times X_k \to X'$ for sets $X_i, X'$
different from $\Cantor$ by encoding their elements through items from
$\Cantor$ (appropriate cardinalities assumed).
Following \cite[Def.~2.3.1(2)]{Weih00}, we call such encodings
\emph{representations}. Building upon them, we formulate the computability of
functions $f \parcol X \to X'$ through \emph{realizers}.

\begin{defi}[representations, realizers] \label{def:t2m-realizer}
	Let $X$ and $X'$ be sets.
	\begin{enumerate}
	\item A \emph{representation of $X$} is a surjective partial function
		$\reptpl \parcol \Cantor \to X$.
	\item An element $\phi \in \reptpl^{-1}[\{x\}]$ is said to be a
		\emph{$\reptpl$-name} of $x$.
	\end{enumerate}
	Further assume $\reptpl$ and $\reptpl'$ to be representations of $X$ and
	$X'$, respectively, and let $f \parcol X \to X'$ be some function.
	\begin{enumerate}[resume]
		\item A function $g \parcol \Cantor \to \Cantor$ is called a
			\emph{$(\reptpl,\reptpl')$-realizer} of function $f$ if for all
			$\phi \in \reptpl^{-1}\big[ \dom(f) \big]$,
			$(\reptpl' \circ g)(\phi) = (f \circ \reptpl)(\phi)$ holds true.
			A more visual way to think of realizers is by a commuting diagram:
			\begin{equation*}
				\xymatrix@R+.2em@C+.2em{
					\Cantor \ar[r]^g \ar[d]^{ \reptpl } &
					\Cantor \ar[d]^{ \reptpl' } \\
					X \ar[r]^f &
					X'
				}
			\end{equation*}
		\item Function $f$ is called \emph{$(\reptpl,\reptpl')$-computable}
			(-\emph{continuous}) if it has a computable (continuous)
			$(\reptpl,\reptpl')$-realizer.
	\end{enumerate}
\end{defi}

The above notions 
are reasonable in the
sense that topological continuity of a function corresponds to
$(\reptpl,\reptpl')$-continuity if $\reptpl$ and $\reptpl'$ are both
\emph{admissible} (cf.~\cite[\secref{3.2}]{Weih00}, \cite{Schroeder02}).
Note that all representations mentioned in this paper are admissible.

We present a few examples of representations in form of a definition.

\newpage
\begin{defi}[representations]                           \label{def:repr}
	\begin{enumerate}
		\item Representations $\unaryrep$ and $\binaryrep$ extend upon the
			unary and binary encodings, $\unatrep^{-1}$ and $\binatrep^{-1}$
			from $\Sast$ to $\Cantor$.\footnote{%
				To justify notational between $\bnatrep^{-1}$ and $\binaryrep$:
				Computations are performed on the level of names (\ie, objects
				from $\Cantor$).
				Objects like natural numbers or dyadic rationals, on the contrary,
				are usually used ``as the are'', \ie, not encoded as words or
				sequences.
				They are encoded back into words (via $\bnatrep$ or
				$\unatrep$) not before the end of the respective argument.}
			We say, a natural number $k \in \IN$ is represented by a
			$\binaryrep$-name
			$\phi = 0\,b_0\,0\,b_1 \dots 0\,b_\ell\,1^\omega \in \Cantor$,
			$\binaryrep \parcol \Cantor \to \IN$, if $\phi$ essentially is $k$'s
			binary encoding: $\binaryrep(\phi) = \sum_{i=0}^\ell b_i 2^{i} = k$.
			Its unary counterpart, denoted $\unaryrep$, can be obtained through
			representing a natural number $k \in \IN$ by
			$\phi \dfeq (0\,1)^k\,1^\omega \in \Cantor$.
		\item \label{def:realrep}
			We define a $\realrep$-name $\phi$ of a real number $x \in \IR$ to
			be a suitably encoded sequence $(q_n)_n$ of dyadic rationals
			$q_n \in \ID_n = \st{a/2^n}{a \in \IZ}$ (\ie,
			$\phi = \left< (q_n)_{n\in\IN} \right> \in \Cantor$;
			cf.~\cite[Def.~4.1.5+4.1.17]{Weih00}) converging to $x$ in the sense
			that $|q_n - x| \leq 2^{-n}$ holds true for all $n \in \IN$.
		\item Based on $\realrep$, a representation
			$[\realrep \!\to\! \realrep] \parcol \Cantor \to \cfn(\IR)$ of
			continuous functions $f \colon \IR \to \IR$ may intuitively be
			understood as follows: A $[\realrep\shortto\realrep]$-name encodes
			how ($\realrep$-names of) $x \in \IR$ are translated into
			$\realrep$-names of $f(x)$ (cf.~\cite[Def.~3.3.13]{Weih00}
			and \cite{Grzegorczyk57}).
	\end{enumerate}
\end{defi}

An important property of the TTE model and its
representations is due to its concise topological roots, resulting in the
\emph{Main Theorem} in the TTE-branch of Computable Analysis.

\begin{fact} \label{s:tte-cont}
	Computability implies (topological) continuity.
\end{fact}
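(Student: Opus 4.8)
The plan is to show that every computable $(\reptpl,\reptpl')$-realizer $g\parcol\Cantor\to\Cantor$ is topologically continuous (with respect to the product topologies on $\Cantor$), and to conclude that a $(\reptpl,\reptpl')$-computable $f$ is $(\reptpl,\reptpl')$-continuous by definition. Thus the heart of the matter is the statement that a function computed by a type-2 machine on $\Cantor$ is continuous. Once this is in hand, ``computability implies continuity'' at the level of represented spaces follows immediately from \cref{def:t2m-realizer}, since the notions of $(\reptpl,\reptpl')$-computable and $(\reptpl,\reptpl')$-continuous are defined by quantifying over realizers of the respective kind.

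First I would recall the defining operational property of a type-2 machine $M$ computing a partial function $g\parcol\Cantor\to\Cantor$: for every input $\phi$ in the domain and every finite prefix of the output $g(\phi)$ already written, only finitely many input symbols have been read so far, and these symbols were read from a left-to-right readable tape, so they form a prefix of $\phi$. Formally, for each $n\in\IN$ there is $m\in\IN$ (depending on $\phi$ and $n$) such that the first $n$ symbols of $g(\phi)$ are determined by the first $m$ symbols of $\phi$: any $\phi'$ in $\dom(g)$ agreeing with $\phi$ on its first $m$ symbols yields an output agreeing with $g(\phi)$ on its first $n$ symbols. This is exactly the ``non-revisable finite prefix of the output from a finite prefix of the input'' phenomenon emphasised in \cref{sec:models}. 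I would extract it as a small lemma, proving it by induction on the length of the output prefix, using that the output tape is write-only (so prefixes are never revised) and the input tapes are read-only one-directional (so the set of scanned cells only grows and always forms a prefix).

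Next I would translate this into the topological statement. Basic open sets of $\Cantor$ are the cylinders $[w]\dfeq\{\psi\in\Cantor : w \text{ is a prefix of }\psi\}$ for $w\in\Sast$. Given $\phi\in\dom(g)$ and a basic open neighbourhood $[w]$ of $g(\phi)$ with $|w|=n$, pick $m$ as in the lemma and let $v$ be the length-$m$ prefix of $\phi$; then $g\big(\dom(g)\cap[v]\big)\subseteq[w]$, which gives continuity of $g$ as a partial map with the subspace topology on its domain. Finally, chasing the commuting square in \cref{def:t2m-realizer}: if $f$ is $(\reptpl,\reptpl')$-computable it has a computable realizer $g$, which we have just shown is continuous, hence $f$ is $(\reptpl,\reptpl')$-continuous. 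I expect the main obstacle to be purely expository rather than mathematical: stating the ``finite prefix of output depends only on finite prefix of input'' invariant cleanly enough that the induction is rigorous yet not tediously machine-level, and being careful that all topological claims are made relative to $\dom(g)\subseteq\Cantor$ with the subspace topology, since realizers are partial. No genuinely hard step arises; the content is the read-only/write-only discipline of the tapes.
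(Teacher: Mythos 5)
Your proposal is correct and follows essentially the same route as the paper: the paper does not spell out a proof but appeals to \cite[Thm.~2.2.3+3.2.11]{Weih00} and supplies precisely the ``finite prefix of output from finite prefix of input'' observation as the key point, which your argument formalizes by showing cylinder preimages are open in $\dom(g)$ and then chasing the commuting square in \cref{def:t2m-realizer}.
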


Recall that $(\reptpl,\reptpl')$-computability by a Type-2 machine $M$ means
that $M$ maps finite prefixes of a $\reptpl$-name $\phi$ to finite prefixes of
a $\reptpl'$-name $\phi'$.
The reader is referred to \cite[Thm.~2.2.3+3.2.11]{Weih00} for detailed
explanations and proofs.

\subsubsection{Oracle machines.}

The type-2 model, and in particular the way we have introduced representations
so far, does not yield a viable notion of complexity:
Say $\phi$ is a $\realrep$-name of a real number $x \in \IR$ as
defined in \cref{def:repr}(\labelcref{def:realrep}), \ie, an encoded sequence
$(q_n)_{n \in \IN}$ of dyadic rationals.
In order to access a specific element encoded through $\phi$,
say $q_N$, a type-2 machine has first to skip over a possibly large (compared
to the coding length of $q_N$) prefix of $\phi$. Such an initial motion has to
reflect in some way in any complexity notion, although the search for $q_N$
does not contribute anything to the \emph{actual} computation on it.
Granting a machine access to individual information encoded through $\phi$
(black-box approach) without charging too much for such access can be realized
by \emph{oracle Turing machines} (oracle machines, or OTMs, for short).

Recall that an oracle machine is a classical (possibly multi-tape) Turing
machine with the addition of a special query tape and two new states: One to
initiate the query to the oracle with the content of the question written on
the query tape,
and a second to mark that the oracle has written its respective answer on
the query tape. The oracle attached to a machine can either be a subset of
$\Sast$ (a possibly undecidable decision problem), or a string function.
We choose the latter type, a \emph{function-oracle} machine model
(cf.~\cite[Def.~2.11]{Ko91}).

\begin{defi}[second-order representations]\ %
	\label{def:second-order-rep}
	\begin{enumerate}
	\item A \emph{second-order representation} $\reptpl$ of a set $X$
		is a partial surjective function $\reptpl \parcol \Baire \to X$.
	\item Any ordinary representation $\reptpl \parcol \Cantor \to X$
		(\ie, in the sense of \cref{def:t2m-realizer})
		induces a second-order representation $\tilde{\reptpl}$:
		Any $\reptpl$-name $\phi = (b_i)_i$, $b_i \in \Sigma$,
		yields a $\tilde{\reptpl}$-name $\tilde{\phi}$ through
		$\tilde{\phi}(s) \dfeq b_{\len{s}}$ for any $t \in \Sast$.%
		\footnote{The reader is referred to \cite[Def.~1.16]{KMRZarXiv} for
			more (formal) definitions and extensions of second-order
			representations.}
		\item Second-order representations $\tilde{\reptpl}_1$ and
			$\tilde{\reptpl}_2$ of $X_1$ and $X_2$, respectively, induce a
			second-order representation
			$\tilde{\reptpl}_1 \times \tilde{\reptpl}_2$
			of $X_1 \times X_2$:
			If $\phi_i$ is a $\tilde{\reptpl}_i$-name of $x_i \in X_i$, then
			$\enc{\phi_1,\phi_2}_{\Baire}$ is a $\tilde{\reptpl}_1 \times \tilde{\reptpl}_2$-name of $X_1 \times X_2$.
	\end{enumerate}
\end{defi}

Functions computable by oracle machines can be defined over realizers similar
to \cref{def:t2m-realizer}.

\begin{defi}[computable functions, realizers]
	\label{def:otm-realizer}
	Assume $\reptpl$ and $\reptpl'$ to be second-order representations of $X$
	and $X'$, respectively, and let $f \parcol X \to X'$ be some function.
	\begin{enumerate}
		\item A function $g \parcol \Baire \times \Sast \to \Sast$ is computable
			by an oracle machine $M^?$ if for all
			$(\phi,s) \in \Baire \times \Sast$,
			$M^\phi$ started with $s$ halts and writes $g(\phi,s)$ on its
			output tape.
		\item A function $g \parcol \Baire \times \Sast \to \Sast$ is called a
			$(\reptpl,\reptpl')$-realizer of function $f$ if for all
			$\phi \in \reptpl^{-1}\big[ \dom(f) \big]$,
			$(f \circ \reptpl)(\phi) = \reptpl'\big( g(\phi,\cdot) \big)$
			holds true. (Note that $g(\phi,\cdot) \in \Baire$.)
		\item Function $f$ is called $(\reptpl,\reptpl')$-computable if it has
			a $(\reptpl,\reptpl')$-realizer computable by some oracle machine.
	\end{enumerate}
\end{defi}

\subsubsection{Relation between both models.}
	\label{sec:ttm-otm-relation}

Although the type-2 machines on one hand side and oracle machines on the
other are seemingly different approaches to \emph{Real Computability}, they
are actually computably identical.

\begin{fact}
	Let $\reptpl$ and $\reptpl'$ be ordinary representations of $X$ and $X'$,
	respectively. Every $(\reptpl,\reptpl')$-computable (-continuous) function
	(\ie, realized by some type-2 machine computable function)
	$f \parcol X \to Y$ is also $(\tilde{\reptpl},\tilde{\reptpl}')$-computable
	(-continuous) (\ie, realized by some oracle machine computable function);
	and vice versa.
\end{fact}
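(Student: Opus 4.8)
The plan is to establish the equivalence of the two models by giving explicit mutual simulations between type-2 machines (operating on $\Cantor$) and oracle machines (operating on $\Baire$ with function oracles), and checking that these simulations preserve the realizer relation for the paired representations $(\reptpl,\reptpl')$ versus $(\tilde{\reptpl},\tilde{\reptpl}')$. Since the fact asserts an ``if and only if'' for both computability and continuity, I would prove the two directions separately but in parallel, each time noting that the simulation I construct is itself a computable (respectively continuous) operation, so both flavours follow from the same construction.

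For the direction from type-2 machines to oracle machines, suppose $g \parcol \Cantor \to \Cantor$ is a $(\reptpl,\reptpl')$-realizer of $f$ computed by a type-2 machine $M$. I would build an oracle machine $N^?$ that, on oracle $\tilde{\phi}$ (where $\tilde{\phi}(s) = b_{\len{s}}$ encodes the sequence $\phi = (b_i)_i$) and input $s \in \Sast$, simulates $M$ on input $\phi$ until $M$ has written $\len{s}+1$ symbols on its output tape, then outputs the $\len{s}$-th such symbol; whenever $M$ wants to read the $i$-th input symbol, $N^?$ queries the oracle on any word of length $i$ (say $0^i$) to recover $b_i$. One must argue this halts: because $M$ produces an infinite output on every name in the domain, after finitely many steps it has emitted the required prefix, so $N^?$ terminates. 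The resulting $g' \parcol \Baire \times \Sast \to \Sast$, $g'(\tilde\phi,s) = N^{\tilde\phi}(s)$, satisfies $\reptpl'(g'(\tilde\phi,\cdot)) = (f\circ\reptpl)(\phi) = (f\circ\tilde{\reptpl})(\tilde\phi)$ by unwinding \cref{def:second-order-rep}(2) and the fact that $\reptpl'$ and $\tilde{\reptpl}'$ are related the same way. For the continuity version one replaces ``computable $M$'' by a continuous $g$ and observes the same prefix-chasing simulation is continuous.

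For the converse, suppose $g \parcol \Baire \times \Sast \to \Sast$ is a $(\tilde{\reptpl},\tilde{\reptpl}')$-realizer of $f$ computed by an oracle machine $M^?$. Given a type-2 machine that is fed an ordinary $\reptpl$-name $\phi = (b_i)_i \in \Cantor$, it first internally reconstructs (lazily, on demand) the oracle $\tilde\phi$ determined by $\phi$ — answering a query $t \in \Sast$ by reading $b_{\len t}$ off the input tape — and then runs $M^?$ successively on inputs $\eword, 0, 00, \dots$ (i.e.\ on $0^n$ for $n = 0,1,2,\dots$), appending $g(\tilde\phi, 0^n)$ to the output tape. Since $\reptpl'$-names are built by the convention $\tilde\phi'(s) = (\text{the } \len s\text{-th symbol})$ from the induced second-order name, I would arrange the enumeration so that the output sequence is exactly the $\reptpl'$-name $\phi'$ with $\tilde{\reptpl}'(g(\tilde\phi,\cdot)) = \reptpl'(\phi')$; this requires a small bookkeeping observation that $g(\tilde\phi,0^n)$ can be taken to be a single symbol (or, if not, a routine reindexing through the length-encoding used in the definition of the induced representation). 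Again continuity transfers verbatim.

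The main obstacle is not any deep mathematics but the careful matching of conventions: the excerpt's induced second-order representation $\tilde{\reptpl}$ reads the $n$-th symbol of $\phi$ off queries of length $n$, and ordinary $\realrep$-type names pack data (dyadic approximations) at successive positions, so one must check that the ``finite prefix of input $\mapsto$ finite prefix of output'' guarantee of type-2 machines lines up precisely with the ``input $0^n$, finitely many oracle queries, output of length polynomial'' behaviour of oracle machines, in both directions, without a length mismatch. Once the bookkeeping is pinned down, halting on the type-2-to-oracle side follows from the output being genuinely infinite, and correctness on both sides is a diagram chase through \cref{def:second-order-rep} and \cref{def:otm-realizer}; I would remark that this is essentially \cite[\secref{2.1+2.2}]{KawamuraPhD} and refer the reader there for the full details. $\hfill\square$
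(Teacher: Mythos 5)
Your proof is correct, but it proceeds by a different route than the paper's. The paper does not give the machine-level simulation at all: it invokes \emph{type conversion} \cite[Lem.~2.1.6]{Weih00} to uncurry the type-2 realizer $g \parcol \Cantor \to \Cantor$ into a function $G \parcol \Cantor \times \Sast \to \Sast$ that is still type-2 computable, then simply observes that a type-2-computable $G$ of this signature is automatically computable by an oracle machine (since only finite information about the first argument can be consumed before the finite output is produced), which hands you the $(\tilde{\reptpl},\tilde{\reptpl}')$-realizer for free; the reverse direction is dispatched by citing \cite[Ex.~3.2(17)]{Weih00} that TTE can equivalently be set up over $\Baire$ instead of $\Cantor$. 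Your argument instead unfolds what type conversion and that exercise actually do: the prefix-chasing simulation $N^?$ (run $M$ until $\len{s}+1$ output symbols appear, answer $M$'s $i$-th read by querying $0^i$) is precisely the content of Lem.~2.1.6 specialized to the induced encoding $\tilde\phi(s)=b_{\len s}$, and the enumeration over $0^n$ on the type-2 side is the content of the converse. The paper's approach buys brevity and robustness to the exact encoding conventions, because the heavy lifting is black-boxed in Weihrauch's results; your approach buys self-containedness and makes the termination argument (finitely many symbols emitted $\Rightarrow$ finitely many input symbols read $\Rightarrow$ finitely many oracle queries) and the length-reindexing bookkeeping explicit, which is exactly the subtlety the paper silently delegates. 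Both arguments are sound, and your closing remark that the details are those of \cite[\secref{2.1+2.2}]{KawamuraPhD} correctly locates where the fully spelled-out version lives.
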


This follows by type conversion \cite[Lem.~2.1.6]{Weih00}: Since $f$ is
$(\reptpl,\reptpl')$-computable (-continuous), it has some computable
(continuous) realizer $g \parcol \Cantor \to \Cantor$. By the aforementioned
Lemma, a function $G \parcol \Cantor \times \Sast \to \Sast$,
$(\reptpl,\reptpl')$-computable by some type-2 machine, exists such that the
following hold true:
\begin{enumerate}
	\item Function $G$ has a suitable domain: for all $\phi \in \Cantor$,
		$\phi \in \dom(g)$ if $\xall{s}{\Sast} (\phi,s) \in \dom(G)$;
	\item and $G$ does behave like $g$ (extensionally):
		$\xall{\phi}{\dom(g)} \xall{s}{\Sast} G(\phi,s) = g(\phi)(s)$.
\end{enumerate}
Since $G$ is computable by some type-2 machine, it is computable by some
oracle machine as well, thus $(\tilde{\reptpl},\tilde{\reptpl}')$-realizing
$f$. The reverse direction follows similarly when used that representations
in TTE can equivalently written stated over $\Baire$ instead of $\Cantor$
\cite[Ex.~3.2(17)]{Weih00}.

\emph{Convention.}
From now on we omit the ``tilde'' and simply write ``$\reptpl$'' whenever we
reason about second-order representations $\tilde{\reptpl}$. Arguments
$s \in \Sast$ of names $\phi$ are usually of form $s = \enc{q,0^n}$ for a
dyadic point $q \in \ID^d$ and a precision parameter $n \in \IN$. We use
$\phi(q,0^n)$ as a shorthand for the correct but more verbose
$\phi \big( \tuple{\dyadicrep[d](q),\unatrep(n)} \big)$.


\subsection{Second-order representations of sets}
	\label{sec:representations}

Throughout this paper, we solely concentrate on closed \emph{non-empty}
subsets of $\IR^d$ (for various $d$) and subclasses thereof.
More precisely:
In any dimension $d \in \IN$ we denote the class of \emph{closed non-empty
subsets of $\IR^d$} by $\closedset[d]$, the class of \emph{compact subsets} by
$\compset[d] \dfeq \st[0]{S \in \clset[d]}{
	\xusome{\delta > 0} S \subseteq \clb{0}{\delta}}$,
\emph{convex subsets} are in
$\convset[d] \dfeq \st[0]{S \in \clset[d] \cap (\IR^d,\ndot[2])}{
	\xall{x,y}{S} \xall{\lambda}{[0,1]} \lambda x + (1-\lambda)y \in S}$, and
\emph{regular subsets} in
$\regset[d] \dfeq \st[0]{S \subseteq \clset[d]}{\ol{\inner{S}} = S}$.\footnote{%
	Normed vector spaces over equivalent norms are homeomorphic, thus imply
	the same topology.
	It therefore is not necessary to tie any of these subclasses (except for
	$\convset$) to a concrete norm.
}
All notions are depicted on \cref{fig:kinds-of-sets}.

\begin{figure}[htb]
	\centering
	\includegraphics[width=\textwidth]{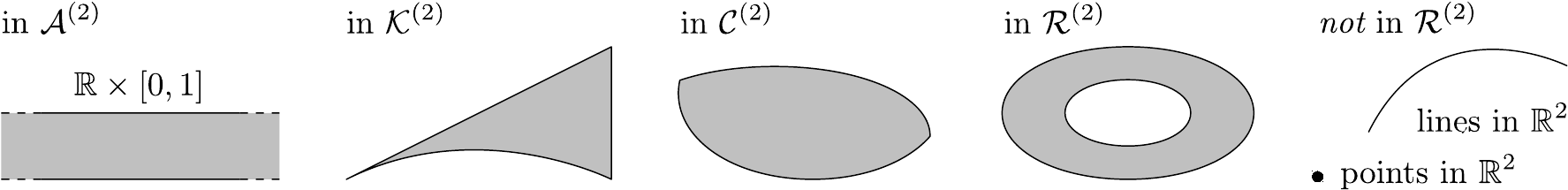}
	\caption{Classes of subsets of Euclidean space:
		Closed $\closedset$,
		compact $\compset$,
		convex $\convset$, and regular $\regset$.}
	\label{fig:kinds-of-sets}
\end{figure}

Intersections of the above subclasses will also be of interest, \eg,
the class $\KR[d] \dfeq \compset[d] \cap \regset[d]$ of \emph{bounded bodies};
understand $\CR[d]$ (\emph{convex bodies}), $\KC[d]$, and $\KCR[d]$
(\emph{bounded convex bodies}) similarly.
Omitting the dimension on any class of subsets denotes the union over all $d$,
\ie, $\clset \dfeq \bigcup_{d \in \IN} \clset[d]$ and so forth.

For a set $S \in \closedset[d]$ and a norm $\normdot$ we define
the distance function $\dist[\normdot]{S} \colon \IR^d \to \IR_{\geq 0}$,
mapping any point $q \in \IR^d$ to its minimal distance to set $S$,
by $\dist[\normdot]{S}(q) \dfeq \min_{x \in S} \norm{q - x}$.

Every representation for a class of sets provides approximate
information to the specific set $S$ it encodes in terms of answers to a
specific type of questions:
Given a point $x$, is $x \in S$?
If not, is $x$ far from $S$? How far?
From \cref{s:tte-cont} we can infer that only trivial sets $S$ (\ie,
$S = \emptyset$ or the whole space, $S = \IR^d$) are representable by their
characteristic functions since they are discontinuous in all other cases.
We thus have to allow any name $\phi$ of a reasonable
representation to be in some sense vague or "fuzzy"
when queries are close to the boundary of the set $S$ it encodes.
To be more precise, we have to allow any name $\phi$ to make errors somewhere if
$\phi$ represents $S \in \closedset[d]$ with $\emptyset \subset S \subset \IR^d$.
This error, however, has to be controllable through a precision parameter $n$,
just like for reals and functions.


Subsequently, we cite five different definitions (visualized in
\cref{fig:set-representations}) for representations of sets.

\begin{figure}[htb]
	\centering
	\includegraphics[width=\textwidth]{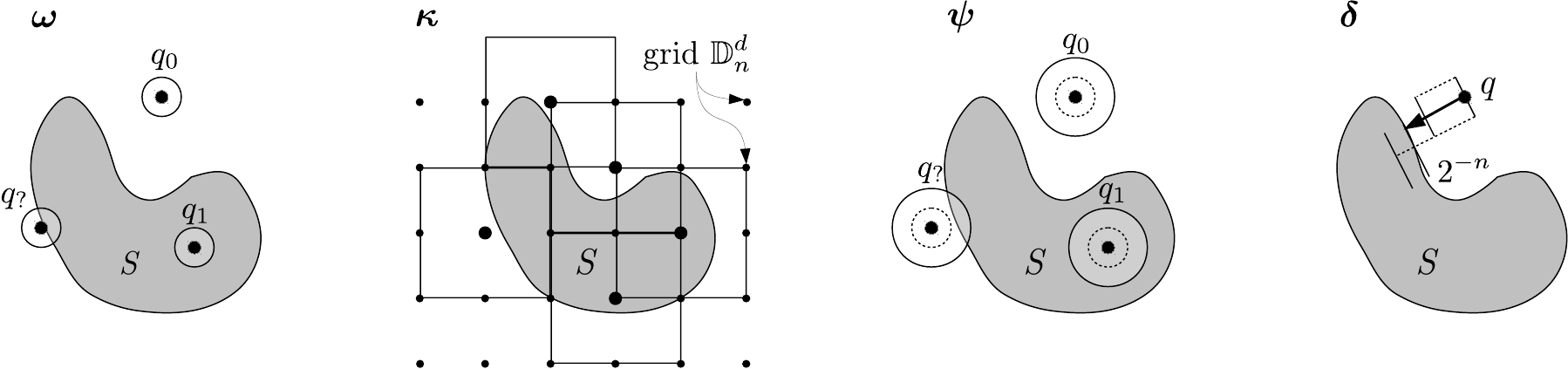}
	\caption{Representations of sets: Illustration of \cref{def:set-representations}}
	\label{fig:set-representations}
\end{figure}

\begin{defi} \label{def:set-representations}
	Fix a dimension $d \in \IN$ and a norm $\normdot$ on $\IR^d$. Points $q$
	are chosen from $\ID^d$, and \emph{precision parameters} are denoted by
	$n \in \IN$.
	\begin{enumerate}
		\item	\emph{Weak-membership representation:}
			A $\wrtn{\wmemrep[d]}{\ndot}$-name $\phi$ of $S \in \regset[d]$
			satisfies
			\begin{enumerate}
				\item $\phi(q,0^n) = 1$ if $\cball_{\normdot}(q,2^{-n}) \subseteq S$
					(\ie, $q$ lies $2^{-n}$-deep within $S$), or
				\item $\phi(q,0^n) = 0$ if $\cball_{\normdot}(q,2^{-n}) \cap S = \emptyset$
					(\ie, $q$ is off by more than $2^{-n}$).
			\end{enumerate}
	\item A $\wrtn{\gridrep[d]}{\normdot}$-name $\phi$ of $S \in \compset[d]$
		satisfies
		\begin{enumerate}
			\item $b \dfeq \uintrep^{-1}(\phi(\eword))$ is an upper bound on the
				size of $S$, \ie, $S \subseteq \cball_{\ndot}(0,2^{b})$, and
			\item $\phi$ encodes a sequence of sets $B_n \subset \ID^d_n$ through
				$\phi(q,0^n) = \chi_{B_n}(q)$ that is $2^{-n}$-close to $S$ in the
				Hausdorff-distance $\dH$, \ie,
				$\dH\left( B_n, S \right) \leq 2^{-n}$:
				\begin{enumerate}
					\item[\namedlabel{eqn:kappa-x-d}{($\kappa1$)}]
						$\xall{n}{\IN} \xall{x}{S} \xsome{q}{B_n}
						\vnorm{q-x} \leq 2^{-n}$, and
					\item[\namedlabel{eqn:kappa-d-x}{($\kappa2$)}]
						$\xall{n}{\IN} \xall{q}{B_n} \xsome{x}{S}
						\vnorm{q-x} \leq 2^{-n}$.
				\end{enumerate}
		 \end{enumerate}
	\item A $\wrtn{\setrep[d]}{\normdot}$-name $\phi$ of $S$ satisfies
		\begin{enumerate}
			\item $\phi(q,0^n) = 1$ if
				$\ball_{\normdot}(q, 2^{-n}) \cap S \neq \emptyset$
				($q$ is $2^{-n}$-close to $S$), or
			\item $\phi(q,0^n) = 0$ if $\cball_{\normdot}(q, 2^{-n+1}) \cap S = \emptyset$
				($q$ is at least $2^{-n+1}$-far off of $S$).
		\end{enumerate}
	\item A $\phi \in \Baire$ is a $\wrtn{\distrep[d]}{\normdot}$-name of
		$S \in \clset[d]$
		whenever $\phi$ encodes $S$' distance function, \ie,
		$\abs[0]{ \phi(q,0^n) - \dist[\normdot]{S}(q) } \leq 2^{-n}$.
	\item Relative version of $\distrep[d]$ (specialization of
		\cite[Def.~1.27]{RettingerHabil}):
		A $\phi \in \Baire$ is a $\wrtn{\reldistrep[d]}{\normdot}$-name of
		$S \in \clset[d]$ if for all $q \in \ID^d$ and $n \in \IN$,
		$\phi$ satisfies
		\begin{align} \label{eqn:def-reldistrep}
			3/4 \cdot \dist[\normdot]{S}(q) - 2^{-n}
			\leq \phi(q,0^n)
			\leq 5/4 \cdot \dist[\normdot]{S}(q) + 2^{-n}.
		\end{align}
	\end{enumerate}
\end{defi}

\subsubsection{A few historical remarks.}

The concept underlying $\distrep$ (representation of the distance function;
cf.~\cite[$\bm{\psi}^\mathrm{dist}$: Def.~5.1.6]{Weih00}) is the same as for
\emph{Turing located sets} (\cite{ge1994extreme}; dating even back to Brouwer
\cite[\emph{katalogisierte Mengen}]{Brouwer19}),
and the concept of \emph{recognizable sets} \cite[Def.~3.5]{CK95:Rtwo} is
underlying the \emph{weak membership problem/representation} $\wmemrep$
(\cite[Def.~2.1.14]{GLS88}; also in \cite[Def.~4.2]{KuSch95}).
A strengthening of $\wmemrep$, where the positive information is exact
(in the sense that condition ``$x$ is $2^{-n}$-close to $S$'' is replaced
with ``$x \in S$''), was also considered under the notion of \emph{strong
recognizability} \cite[Def.~4.1]{CK95:Rtwo} and revisited later as
\emph{weak computability} by \cite[Def.~3, Thm.~4]{Braverman2005complexity}.
Although Chou/Ko seemed to be the first to \emph{formally} present this
strengthening, this concept of one-sided error also has already been present
implicitly as part of \cite[Lem.~4.3.3]{GLS88}.
Representation $\gridrep$ was defined and used \eg~in
\cite[Def.~5.2.1]{Weih00}, \cite[$\bm{\kappa}_\mathrm{G}$: Def.~2.2]{ZM08};
and $\setrep$ in \cite[$\setrep$: Def.~5.1.1]{Weih00},
\cite[$\bm{\psi}_\varocircle$: \secref{2.2.3}]{KawamuraCook}.
Questions regarding both the computability of sets \wrt different
representations (which, however, are not part of this work) and the
\emph{computability relation} of representations (which are and will be
discussed to some extent in \cref{sec:comparisons}) has been covered in many
articles (see, \eg,
\cite{BW99,Weih00,Ziegler02,Hertling02,BrattkaPresser}).

\subsection{Enrichments}
	\label{subsec:enrichments}


The representations we have seen in the previous section are rather generic.
In practice, however, additional parameters are usually known, \eg, bounds on
diameters of sets or rate of growth of functions.
Such additional discrete information (\emph{discrete advice parameters}, or
just \emph{advice parameters} for short \cite[p.18]{KMRZarXiv}) may be
\emph{uncomputable} from a given representation, but will turn out to be of
great use (complexity-wise) in
\cref{sec:comparisons,sec:operators,sec:fun-set-ops}.

\begin{defi}[enrichments; cf.~{\cite[Def.~2.4(c+d)]{KMRZarXiv}}]
	\label{def:enrichments}
Let $\sastrep \parcol \Baire \to \Sast$ denote a representation of $\Sast$.
Further, let $\reptpl \parcol \Baire \to X$ be a representation of a set $X$,
and $\ens{E} \dffn X \mto \Sast$ a multi-valued function (encodes information
$\reptpl$-names are \emph{enriched} with).
Then $\phi$ is a $\reptpl \enp{E}$-name of $x \in X$ if it is of form
$\phi = \enc{\phi_1,\phi_2}$ with $\reptpl(\phi_1) = x$ and
$\sastrep(\phi_2) \in \ens{E}(x)$.
\end{defi}

More specific, we use the following four concrete enrichments in the remainder
of this paper.

\begin{defi}[concrete enrichments for sets]\ %
\begin{enumerate}
\item Outer radii:
	Consider the enrichment
	\[
		\ens{b} \dffn \compset[d] \mto \Sast
		\eqnsp , \quad
		\ens{b} \dffn S \mmapsto \st{ \uintrep(b) }{
			b \in \IZ \text{ and } S \subseteq \cball_{\ndot}(0,2^b) }
	\]
	By definition, $\setrep[d]|^{\compset} \enp{b}$ then is a representation
	of $\compset[d]$ whereby each name contains an \emph{outer radius parameter
	$b$} (encoded in unary according to $\ens{b}$) on the encoded compact set.
	We refer to $2^b$ as an \emph{outer radius} \wrt the outer radius parameter
	$b$.
\item Inner radii and inner points:
	In a similar fashion to $\ens{b}$ define enrichments
	\begin{alignat*}{-1}
	&	\ens{r} \dffn \regset[d] \mto \Sast
		\eqnsp , \quad
	&&	\ens{r} \dffn \kern.3ex S \mmapsto \st{ \uintrep(r) }{
		r \in \IZ ,\, \xsome{x}{\inner{S}} \cball_{\ndot}(x,2^{-r}) \subseteq S }
		\eqnsp , \\
	&	\ens{a} \dffn \regset[d] \mto \Sast
		\eqnsp , \quad
		&&	\ens{a} \dffn S \mmapsto \st[1]{ \dyrep[d](a) }{
	a \in \ID^d ,\, \xusome{\delta > 0} \cball_{\ndot}(a,\delta) \subseteq S }
		\eqnsp .
	\end{alignat*}
	We refer to decoded images under $\ens{r}$ as \emph{inner radii parameter}
	(giving an \emph{inner radius of $2^{-r}$}),
	and to decoded images under $\ens{a}$ as \emph{inner points}.
\item Information encoded by $\ens{a}$ and $\ens{r}$ is independent of the
	other, \ie, the bound on an inner radius parameter according to $\ens{r}$
	need not necessarily be centered at $\ens{a}$.
	If we need both information, \ie, an inner \emph{ball}, then we combine
	it to
	\[
		\ens{ar} \dffn S \in \regset[d] \mmapsto
		\st[1]{ \enc[1]{\dyrep[d](a), \uintrep(r)} }{
				\ball_{\ndot}(a,2^{-r}) \subseteq S}
		\eqnsp .
	\]
\end{enumerate}
\end{defi}

These choices of encodings also meet both theory and practice: cf., for
example, \cite[Def.~2.1.20]{GLS88} and \cite[Def.~2.2+2.3]{Hoover90}.
Note that both the dimension and the norm will be always understood from the
context and therefore is not considered enrichment.

\emph{Convention.}
The correct way to work with enriched representations would be like this:
Let $\ens{E}$ be an enrichment, and $\enc{\phi_1,\phi_2}$ be a
$\reptpl \enp{E}$-name.
Then $E \dfeq \sastrep(\phi_2)$ is a concrete instance of
enrichment $\ens{E}$ of object $x \dfeq \reptpl(\phi_1)$.
As this intermediate step of ``extracting'' $E$ from $\enc{\phi_1,\phi_2}$
is just a technical though necessary detail which does not add to any proof
argument, we use the typographical convention to denote a concrete decoded
instance of $\ens{E}(x)$ ``variable style'', that is, as $E$.
In the above spirit, further abbreviate a
$\reptpl[d]|^{\KR} \enp{a} \enp{r} \enp{b}$-name
$\enc{\phi_1,\phi_2,\phi_3,\phi_4}$ by $\enc{\phi_1,a,0^r,0^b}$.
This notation has been purposefully chosen as a reminder that
\begin{enumerate*}[label=(\emph{\alph*})]
\item inner points as advice parameters are encoded in \emph{binary}, while
\item both inner and outer radii are encoded in \emph{unary} according to
	enrichments $\ens{r}$ and $\ens{b}$, respectively.
\end{enumerate*}


\subsection{Complexity of functions and operators: upper and lower bounds}
	\label{sec:complexity}

We briefly recap some facts from discrete complexity theory.
Assume $M$ to be a Turing machine that either accepts its input $s \in \Sast$
or rejects it; \ie, $M$ always terminates. The computation time of such a
machine $M$ is bounded by some non-decreasing function $t \colon \IN \to \IN$
(or: $t$-time bounded) if for all $s \in \Sast$, $M$ started on $s$ holds
within $t(\len{s})$ steps.

Unless stated otherwise, we use ``Turing machine'' as a synonym for
``\emph{deterministic} Turing machine''. Allowing a machine also to
\emph{guess} strings from $\Sast$ makes it \emph{non-deterministic}.
Through the course of this paper we need three complexity classes:
$\PTime$ marks the class of all problems $A \subseteq \Sast$ decidable by a
deterministic polynomial-time bounded Turing machine, and $\NPTime$ the class
of problems decidable by a \emph{non}-deterministic polynomial-time Turing
machine.
Decision problems $A \in \NPTime$ can equivalently be stated as being
\emph{polynomial-time verifiable} by a deterministic Turing machine; \ie,
there exists a decision problem $A' \subseteq \Sast$ which is polynomial-time
equivalent to $A$ and satisfies $\xsome{B}{\PTime} A' = \st[0]{s \in \Sast}{
	\xsome{w}{\Sigma^{\len{s}}} \enc{w,s} \in B}$.
Given an $s \in \Sast$, a $w$ which \emph{verifies} $\enc{w,s} \in B$ is
usually called a \emph{witness for $s \in A'$}.

The class $\UPTime$ contains problems $A \subseteq \Sast$ decidable by an
\emph{unambiguous} non-deterministic poly\-nomial-time Turing machine; that is,
a machine that for each $s \in A$ has \emph{exactly one} accepting path.
It is easy to see that $\PTime \subseteq \UPTime \subseteq \NPTime$, but
whether any of these inclusions is proper is a wide-open problem. We defer
the discussion of the hypothetical case $\PTime \neq \UPTime$ and its
implications until \cref{sec:fun-set-ops}.


As for example pointed out in \cite{FlumGrohe}, problems usually come
with a variety of structural information (like the number of nodes in a graph,
number of variables in a formula, number of faces of a polyhedron), which
however is not reflected in the above one-ary notion of complexity.
\emph{Parameterized complexity} extends upon that:
A parameterized decision problem $(A,k)$ ($A \subseteq \Sast$ with
parameterization $k \colon \Sast \to \IN$, which typically is required to be
at least computable) is $(\tau,t)$-time computable, iff a deterministic Turing
machine $M$ exists whose computation time is bounded by
$\tau(k(s)) \cdot t(\len{s})$ for all $s \in \Sast$. If $t$ moreover is a
polynomial, then $(A,k)$ is said to be \emph{parameterized polynomial-time}
decidable (also: fixed-parameter tractable).

\subsubsection{Time complexity.}

The complexity notion for oracle machines is similar to that for classical
Turing machines, except for the extension that it takes the oracle in total
one step to read the content written on the oracle tape and to produce its
answer. The computation time of an oracle machine $M^\phi$ (and thus the
complexity of the element of $\Baire$ it computes) with \emph{set-oracle} (or
equivalently a function-oracle $\phi \in \Cantor$) can solely be measured in
the length of the discrete input given to $M^?$; which is the case for
representations $\wmemrep$, $\gridrep$ and $\setrep$.

\begin{defi}   \label{def:cantor-polytime}
Let $t \colon \IN \to \IN$ be some non-decreasing function.
\begin{enumerate}
\item A function $g \parcol \Cantor \times \Sast \to \Sast$ is $t$-time
	computable if an oracle machine $M^?$ exists which for all
	$\phi \in \Cantor$ and $s \in \Sast$ computes $g(\phi,s)$ in time
	bounded by $t(\len{s})$.
\item Let $\reptpl$ and $\reptpl'$ be a second-order representations of
	sets $X$ and $X'$, respectively, and let
	$\dom(\reptpl) \subseteq \Cantor$.\footnote{%
		That is, $\reptpl$-names are predicates.}
	A $(\reptpl,\reptpl')$-computable function $f \parcol X \to X'$ is
	\emph{$t$-time computable} if it is realized by a $t$-time
	$(\reptpl,\reptpl')$-computable function
	$g \parcol \Cantor \times \Sast \to \Sast$.
\end{enumerate}
\end{defi}

As hinted prior to the definition, it is not that obvious how to define
complexity in case of names $\phi$ from $\Baire$ instead of $\Cantor$.
The problem with names from $\Baire$ is that oracle answers in general are
\emph{not} bounded in the length of its argument as it has been the case for
$\phi \in \Cantor$.
However, combining enrichments with \cref{def:cantor-polytime} allows us to
define time bounds whenever the oracle answers can be bounded in terms of the
parameters a representation has been enriched with.
For that purpose, force both $\distrep[d]$- and $\reldistrep[d]$-names $\phi$
to additionally satisfy $\phi(q,0^n) \in \ID^d_{n+1}$.\footnote{%
	Note that the latter condition is only added to prevent unnecessarily
	long answers as they are not more accurate (with respect to the conditions
	on $\distrep$- and $\reldistrep$-names) when provided with a precision
	much higher than $n$.
	This restriction is not necessary in the general theory of
	\emph{second-order polynomials} and \emph{second-order polynomial-time},
	but we defer this discussion until \cref{sec:fun-set-ops}.}
The subsequent definition is based on \cite[Def.~2.1+2.2]{KMRZarXiv}.

\begin{defi}[parameterized complexity] \label{def:baire-polytime}
Let $\reptpl$ and $\reptpl'$ be second-order representations of sets $X$
and $X'$, respectively, and $\ens{E} \dffn X \mto \Sast$ an enrichment of $X$.
Moreover, let $\tau, t \dffn \IN \to \IN$ be non-decreasing functions.
\begin{enumerate}
\item A function $f \parcol X \to X'$ is $(\tau,t)$-time
	$(\reptpl \enp{E},\reptpl')$-computable if it has a
	$(\reptpl \enp{E},\reptpl')$-realizer
	$g \parcol \Baire \times \Sast \to \Sast$ such that the computation time
	on every input $(\phi,s) \in \dom(\reptpl) \times \Sast \mapsto g(\phi,s)$ is bounded
	by $\tau\big( \len{\ens{E}(\reptpl(\phi))} \big) \cdot t(\len{s})$.%
	\footnote{Notice the term ``$\ens{E}(\reptpl(\phi))$'' in the time bound:
		Enrichments are by definition multi-valued but $\tau$ is not set-valued.
		Although technically incorrect, the meaning ``this bound has to hold
		true for \emph{every advice parameter in $\ens{E}(...)$}'' clearly is
		supported by this notation while the correct statement would be
		``the computation time has to be bounded by
		$\tau(\len{E}) \cdot t(\len{s})$
		for \emph{all} $E \in \ens{E}(\reptpl(\phi))$''.}
\item If $t$ is a polynomial, then $f$ is said to be \emph{parameterized
	polynomial-time $(\reptpl \enp{E},\reptpl')$-computable}.
\item If both $t$ and $\tau$ are polynomials, then $f$ is said to be
	\emph{fully polynomial-time $(\reptpl \enp{E},\reptpl')$-computable}.
\end{enumerate}
As advice parameters are part of a name anyway, we simply speak about
``polynomial time'' whenever ``fully polynomial-time'' is meant.
This identification is justified as fully polynomial-time and unparameterized
polynomial-time coincide for $\ens{E} \dffn x \mmapsto \{\eword\}$.
\end{defi}

On compact sets $K \in \compset[d]$, this definition allows to bound the
answer lengths in terms of an outer radius parameter $b$ as in
\cref{def:enrichments}(1).
Take representation $\distrep$ as an example:
Assume $\phi \dfeq \enc[1]{\phi',0^b}$ to be a $\distrep[d] \enp{b}$-name of $K$.
Then $\len{\phi'(q,0^n)}$ can be bounded linearly in
$\abs{b} + \len{\enc{q,0^n}}$ for all $q \in \ID^d$ and $n \in \IN$.

\subsection{Common proof arguments}
	\label{sec:proof-techniques}

We review two common arguments that allows us to prove lower bounds or
even the uncomputablitiy of operations.

\subsubsection{Adversary argument.} \label{sec:adversary-method}

The \emph{adversary method} is used to prove lower bounds on the \emph{uniform}
computational complexity of functions.
Let $\reptpl$, $\reptpl'$, $X$, $X'$ and $f$ as in the previous subsections.
For any discrete argument $s \in \Sast$ pick an element $x \in X$ and construct
a subset $Y \subset X$ of cardinality at least exponential in $\len{s}$
such that every $y \in Y$ has a $\reptpl$-name $\phi$ close to one of $x$,
yet $f(x)$ differs by at least $2^{-n}$ from $f(y)$.
Then any machine $M^?$ that $(\reptpl,\reptpl')$-realizes $f$ necessarily has to
ask exponentially many queries to $\phi$.

This approach is similar to the adversary method from \emph{Information-Based
Complexity} \cite{IBC} where computations are exact, but only finite
information is known about the input. As an example, take Riemann
integration, done on $2^n$ many sampling points in order to achieve an
approximation which is always guaranteed to be within error $2^{-n}$.

\subsubsection{Topological discontinuity.}

Given second-order representations $\reptpl$ and $\reptpl'$ of sets $X$ and
$X'$, respectively, and a function $f \parcol X \to X'$. By \cref{s:tte-cont}
we already know that $f$ is not $(\reptpl,\reptpl')$-computable whenever it is
not $(\reptpl,\reptpl')$-continuous. Recall that $\Baire$ comes equipped with
the product topology, providing a way to prove the latter:
Construct an $x \in X$ and an
appropriate $\reptpl$-name $\phi$. Any machine for a hypothetical
$(\reptpl,\reptpl')$-realizer for $f$ does only inspect finitely many values
of $\phi$. Now pick a slightly different $\reptpl'$-name, say $\phi'$, for a
different value, say $x'$, which coincides with $\phi$ on values observed by
$M^?$, but leads it to produce an answer exceeding the prescribed error bound.


\section{Comparing representations of sets} \label{sec:comparisons}

In this section, we compare the representations introduced in
\cref{def:set-representations} with respect to their mutual polynomial-time
\emph{reducibility}.
Two aspects will play a key role in these comparisons:
Whether a representation $\reptpl$ is \emph{norm-invariant},
\ie, if $\wrtn{\reptpl}{\ndot}$ and $\wrtn{\reptpl}{\ndot'}$ are
polynomial-time equivalent,
and the influence of the dimension parameter.
Both together will prove $\distrep[d]$ to be \emph{richer} (intuition: to
carry \emph{more} information) than all of the other representations from
dimension $d=2$ onward by combining that%
\begin{enumerate*}[label=(\emph{\alph*})]
\item $\distrep[d]$ is not norm-invariant for $d \geq 2$,
\item all of the other representations we discuss \emph{are} norm-invariant,
	and
\item $\distrep$ reduces to all of the other representations in polynomial
	time.
\end{enumerate*}
Representation $\wmemrep$, on the other hand, will prove to be \emph{poorer}
than all of the other representations.
However, this gap between $\distrep$ and $\wmemrep$ can be closed by
restricting to $\KCR$, adding parameters to $\wmemrep$ and applying techniques
from discrete optimization (\cref{s:convex-opt}), which proves \emph{all}
representations to be polynomial-time equivalent in this particular setting.

We now turn to the formalization of what has been described above.

\begin{defi}[translations/reductions; cf.~{\cite[Def.~2.3.2]{Weih00}}]
	\label{def:translations}
Let $\reptpl$ and $\reptpl'$ be representations of the same set $X$.
Then $\reptpl$ uniformly \emph{translates} (or: \emph{reduces}) to
$\reptpl'$ \emdash $\reptpl \preceq \reptpl'$ for short \emdash if
$\id_X$ is $(\reptpl,\reptpl')$-computable.
If $\id_X$ is parameterized polynomial-time $(\reptpl,\reptpl')$-computable,
then we write $\reptpl \parampleq \reptpl'$.
If $\id_X$ is even fully polynomial-time $(\reptpl,\reptpl')$-computable,
then we write $\reptpl \pleq \reptpl'$.
\end{defi}

Note that while the notation $\reptpl \preceq \reptpl'$ makes sense when read
as ``$\reptpl$ translates to $\reptpl'$'', it is counter-intuitive when read
as a reduction:
$\reptpl$ reduces to $\reptpl'$ if $\reptpl$-names carry
\emph{more} information than $\reptpl'$-names; hence, $\reptpl$-names are
\emph{harder} to compute than $\reptpl'$-names.
Reductions in classical complexity theory are usually thought the other way
around, \ie, the \emph{harder} problem being ``greater or equal'' to
\emph{easier} problems.

The intuition about representations encoding more or less information
also explains the following fact which we will use in many places throughout
this paper.

\begin{fact}
Let $\reptpl_1,\reptpl_2$ be representations of $X$, and
$\reptpl'_1,\reptpl'_2$ be representations of $X'$.
Then every $(\reptpl_1,\reptpl'_1)$-computable function is also
$(\reptpl_2,\reptpl'_2)$-computable whenever $\reptpl_2 \preceq \reptpl_1$
(providing potentially \emph{more} information about the input) and
$\reptpl'_1 \preceq \reptpl'_2$ (requiring potentially \emph{less} information
about the output) hold. The same applies if $\preceq$ is replaced with $\pleq$.
\end{fact}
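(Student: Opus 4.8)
The plan is a diagram chase: I would realize $f$ with respect to $(\reptpl_2,\reptpl'_2)$ by sandwiching a given $(\reptpl_1,\reptpl'_1)$-realizer of $f$ between the two translation realizers supplied by the hypotheses, using that $\preceq$ (and likewise $\pleq$) is a preorder which is moreover compatible with pre- and post-composition by the fixed map $f$. Concretely, I would fix an oracle-machine-computable $(\reptpl_1,\reptpl'_1)$-realizer $g \parcol \Baire \times \Sast \to \Sast$ of $f$ (it exists by assumption); an oracle-machine realizer $h$ of $\id_X$ that turns a $\reptpl_2$-name of a point $x$ into a $\reptpl_1$-name of the same $x$ (this is exactly where the \emph{extra} information carried by $\reptpl_2$-names gets consumed); and an oracle-machine realizer $h'$ of $\id_{X'}$ that turns a $\reptpl'_1$-name of a point into a $\reptpl'_2$-name of it (the translation made available by $\reptpl'_2$ requiring \emph{less} information about the output than $\reptpl'_1$).

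Next I would form the composite $g' \parcol \Baire \times \Sast \to \Sast$, $g'(\phi,s) \dfeq h'\bigl( g(h(\phi,\cdot),\cdot),\, s \bigr)$, that is, the usual composition of oracle machines in which each inner machine is simulated as the oracle of the next; type conversion (as in \cref{sec:ttm-otm-relation}) legitimizes treating $h(\phi,\cdot)$ and $g(h(\phi,\cdot),\cdot)$ as bona fide elements of $\Baire$. Chasing a $\reptpl_2$-name $\phi$ of some $x \in \dom(f)$ through $g'$ then confirms correctness: $h(\phi,\cdot)$ is a $\reptpl_1$-name of $x$, hence $g(h(\phi,\cdot),\cdot)$ is a $\reptpl'_1$-name of $f(x)$, hence $g'(\phi,\cdot)$ is a $\reptpl'_2$-name of $\id_{X'}(f(x)) = f(x)$; so $g'$ is a $(\reptpl_2,\reptpl'_2)$-realizer of $f$. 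For the ``$\pleq$'' variant I would additionally check that if $g$, $h$ and $h'$ each respect a polynomial (second-order) time bound, then so does $g'$, since such bounds compose and the answer-length conventions fixed in \cref{sec:complexity} keep the intermediate oracle answers short enough for the composed bound to remain polynomial.

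I do not expect a deep obstacle here: the actual content is just transitivity of $\preceq$ (resp.\ of $\pleq$) together with its stability under composing with $f$, essentially a one-line observation at the level of names. The only point genuinely requiring care is the bookkeeping proper to the second-order oracle model: currying the intermediate $\Baire$-valued names into oracles correctly and, in the polynomial-time case, tracking how the answer length produced by one stage feeds into the running-time bound of the next; this is routine, but it is the natural place for a careless argument to go wrong.
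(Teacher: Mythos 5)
The paper states this as a \emph{Fact} without supplying a proof, so there is nothing to compare against verbatim; your diagram chase \emdash pre-compose with a translation of the input name, post-compose with a translation of the output name, and invoke closure of (second-order polynomial-time) computability under composition \emdash is exactly the argument the author is taking for granted.

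There is, however, one discrepancy you should not pass over silently. By \cref{def:translations}, $\reptpl'_2 \preceq \reptpl'_1$ means $\id_{X'}$ is $(\reptpl'_2,\reptpl'_1)$-computable, \ie, the translation it provides goes \emph{from} $\reptpl'_2$-names \emph{to} $\reptpl'_1$-names. Your $h'$ goes the other way, from $\reptpl'_1$-names to $\reptpl'_2$-names; the hypothesis that actually yields such an $h'$ is $\reptpl'_1 \preceq \reptpl'_2$. Your choice is the one that makes the composite realizer type-check (and is what the parenthetical ``requiring potentially less information about the output'' clearly intends, and what the paper itself uses when later invoking this Fact, \eg, applying $\setrep[d] \preceq^{\KCR} \reptpl[d]$ on the codomain side in the intersection theorem). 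So the printed inequality on the output side appears to be reversed; a careful proof should say this explicitly rather than quietly flip the direction. With that noted, the rest \emdash currying intermediate $\Baire$-valued names into oracles via type conversion, and tracking second-order time bounds under composition for the $\pleq$ variant \emdash is sound and is the standard argument.
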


\emph{Convention.}
For $Y \subseteq X$ let $\reptpl \preceq^Y \reptpl'$ be an abbreviation for
$\reptpl|^Y \preceq \reptpl'|^Y$. This new representation
$\reptpl|^Y \parcol \Baire \to Y$ is the result of the restriction of
$\reptpl$'s image to $Y$.
Apply the same to $\pleq$, $\pless$, $\equiv$ and $\pequiv$.


\subsection{Technicalities}

Two subtle details need to be addressed before we can start comparing
representations with respect to polynomial-time reducibility:
\begin{enumerate*}[label=(\emph{\alph*})]
\item their dependence on the choice of norm underlying $\IR^d$, and
\item considering also 'negative' values of the precision parameter $n$,
   that is, absolute error boundy larger than one.
\end{enumerate*}
The former leads to the notion of \emph{well-behaved norms}, while the latter
introduces \emph{scale-invariant representations}.

\subsubsection{Restriction to well-behaved norms.} \label{subsec:norms}

Representations $\setrep$, $\gridrep$ and $\wmemrep$ depend on the notion of
``being close''.
Practically speaking, a point $q$ gets printed on the screen whenever
$\cball_{\normdot}(q,2^{-n})$ meets the represented set, where points $q$
were chosen from $\ID^d_n$.
The implicit assumption underlying all representations from
\cref{def:set-representations} is \emph{compatibility with the grid
$\ID^d_n$}:
The whole space, say $X$, can be covered with $\normdot$-balls with radii
$2^{-(n+c)}$ and centers from $X \cap \ID^d_n$, where $c \in \IZ$ is a
constant depending on the pair $\ndot, \ndot[\infty]$.
As an example, take $\normdot \dfeq 4 \cdot \normdot[\infty]$ with
$c \dfeq -\lb{4}$.

The implied necessity to incorporate a norm-dependent constant $c$ into
precision parameters is cumbersome and we avoid it by \emph{imposing}
the above mentioned ``compatibility'' on the respective norm.
For the same reason we disallow \emph{slanted} (or otherwise distorted) norms
like $\norm{(x_1,x_2)} \dfeq (\abs{x_1/2}^2 + \abs{x_2}^2)^{1/2}$ although
this restriction can be avoided (as we will see in \cref{rem:cov-pat}) and
is only present to simplify things.

We denote such norms satisfying both of the above motivated properties as
being \emph{well-behaved}.

\begin{defi}[well-behaved norms] \label{def:valid-norms}
A norm $\normdot$ on $\IR^d$ is said to be \emph{well-behaved} if it has
the following two properties:
\begin{enumerate}
\item $\ndot$ is invariant under 90-degree rotations.
	More precisely:
	Let $\{e_1,\dots,e_d\}$ be the canonical basis of $\IR^d$.
	Then $\norm{e_i} = \norm{e_j}$ for all $1 \leq i,j \leq d$.
\item $\ndot$-balls are not too small, \ie,
	$\cball_{\ndot[\infty]}(q,2^{-(n+1)}) \subseteq \cball_{\ndot}(q,2^{-n})$
	for all $n \in \IN$ and $q \in \ID^d_n$.
\end{enumerate}
\end{defi}

It then follows by the second condition that $\IR^d$ can be covered by
$\ndot$-balls with centers from $\ID^d_n$ and radii $2^{-n}$.
Examples of well-behaved norms include the $p$-norms
$\norm[p]{(x_1,\dots,x_d)} \dfeq (\abs{x_1} + \cdots + \abs{x_d})^{1/p}$
for $p \geq 1$, while
$\norm{(x_1,x_2)} \dfeq (\abs{x_1/2}^2 + \abs{x_2}^2)^{1/2}$
(violates the first condition) and
$3/2 \normdot[1]$ (violates the second condition) are not.

\emph{Convention.}
For the rest of this paper, we only consider well-behaved norms unless
stated otherwise.

\subsubsection{Scale-invariance.}
	\label{sec:scale-inv}


Starting with \cite{KF82}, complexity results were stated for functions
whose domains were a subset of the \emph{unit} hypercube; the same was true
for sets. This restriction rendered (at least for sets) the question about
precision parameters \emph{smaller than $0$} (\ie, absolute error bounds $>
2^{-0}$) pointless, which allowed for a complexity notion solely in the
precision parameter.
However, as we will see many times in \cref{sec:operators,sec:fun-set-ops},
algorithms for operators on sets often involve an unavoidable preprocessing
step (given the representations we have seen so far):
Given $b \in \IN$, chop $\clb[\normdot[\infty]]{0}{2^b}$ into $2^{d(b+1)}$
unit hypercubes,
pick a subset of them (usually one cube), then proceed by applying the
given $\reptpl$-name to this subset.
It is this preprocessing step which seems to be artificial and superfluous
as the \emph{real} algorithm often starts only \emph{after} this step.
For this reason the author believes that a closed subset $S$ of $[0,1]$ (or
any fixed compact set) should admit, up to a polynomial rather than
exponential in $k$, the same complexity as $S$ inflated by a factor of $2^k$.
Both sets are still \emph{structurally} the same!

To this end, let $\sisetrep$ be the extension (or: relaxation) of
representation $\setrep$ to \emph{integer precisions}, \ie,
a $\sisetrep[d]$-name $\phi$ satisfies
\[
	\phi(s) = 1
	\text{ if } S \cap \opb{q}{2^{-n}} \neq \emptyset
	\eqnsp ; \qquad
	\phi(s) = 0
	\text{ if } S \cap \clb{q}{2^{-n+1}} = \emptyset
	\eqnsp .
\]
with $s \dfeq \enc[1]{\dyrep[d](q),\uintrep(n)}$.
Recall that we agreed to equivalently write $s$ as $\enc[1]{q,0^n}$ where
$0^n$ abbreviates the ``unary encoding'' of integer $n$.
With the following Lemma we attempt to provide a way around the above
described dilemma.\\

\begin{lem}[properties of $\sisetrep$] \ \label{s:sisetrep-scale}
\begin{enumerate}
\item Scaling a closed set by factor $2^{k}$ for $k \in \IZ$ is a
	\emph{parameterized} polynomial-time operation in the \emph{absolute value}
	of $k$ (cf.~\cite[Lem.~2.7(4)]{ZM08}), that is,
	the binary length of $2^k$.
	More precisely:
	Operator $\opname{Scale} \dffn (S,k) \mapsto \st[0]{2^k x}{x \in S}$ is
	$(\setrep[d] \times \uintXrep, \setrep[d])$-computable in parameterized
	polynomial time.
\item In contrast,
	$\opname{Scale}$ is \emph{fully} polnomial-time
	$(\sisetrep[d] \times \uintXrep, \sisetrep[d])$-computable.
\item $\setrep[d] \parampleq \sisetrep[d] \pleq \setrep[d]$.
\end{enumerate}
\end{lem}


\begin{proof}[sketch]
The first statement follows by the argument hinted prior to this Lemma:
Let $q \in \ID^d$ and $n \in \IN$.
If $n-k \geq 0$, then simply query the $\setrep[d]$-name with
$\enc{2^k q,0^{n-k}}$.
If $n-k < 0$, then first split $\cball(2^k q, 2^{k-n})$ into unit-balls and
combine the queries on the center and precision $0$ on each of these balls.
For the second statement, use the argument from the above first case, namely,
query the $\sisetrep[d]$-name with precision $n-k$.
The first reduction in statement three follows immediatiely from 1.~and 2.
For the second reduction, use the split of $\cball(2^k q,2^{k-n})$ into
unit-balls and argue as in the first case of statement one.
\qedhere
\end{proof}

It follows by the previous statement that \emph{all} scaled versions of a set
are polynomially equivalent with respect to $\sisetrep$.

\begin{rem}                                             \label{rem:only-si}
As the concept of a scale-invariant representation avoids the above described
deficiencies, we like to impose it on \emph{every} representation $\reptpl$
from \cref{def:set-representations}.
Therefore, we will denote $\widehat{\reptpl}$ to be understood as the
scale-invariant version of $\reptpl$, and then associate $\reptpl$ with
$\widehat{\reptpl}$ (\ie, drop the explicit hat).
As a consequence, precision parameters shall now usually be \emph{integers}.
\end{rem}


\subsection{Topological versus computable equivalence of norms}
	\label{subsec:poly-norm-invariance}


In this section we examine the question which representations $\reptpl$ are
\emph{norm-invariant}, \ie, whether
$\wrtn{\reptpl}{\ndot} \pequiv \wrtn{\reptpl}{\ndot'}$ holds true for
\emph{all} topological equivalent well-behaved norms $\ndot, \ndot'$.
Notice that ``norm-invariance'' inherently asks about polynomial-time
equivalence:
norm-exchange is a computable operation for all representations from
\cref{def:set-representations}.

Our first result generalizes Braverman's remark
\cite[following Def.~2]{Braverman2005complexity} on the interchangeability of
$\wrtn{\setrep}{\normdot[2]}$ and $\wrtn{\setrep}{\normdot[\infty]}$.

\begin{prop}
	\label{lem:setrep-norm-equiv}
$\wrtn{\setrep[d]}{\normdot} \pequiv \wrtn{\setrep[d]}{\normdot'}$
holds in any dimension $d \in \IN$ and for any two norms
$\ndot,\ndot'$ on $\IR^d$.
\end{prop}

The key to prove this proposition is its \emph{non-uniformity} \wrt any two
well-behaved norms $\normdot,\normdot'$:
The necessary information (here: the ``coverage pattern'' of the unit
$\normdot'$-ball) for a machine to translate from
$\wrtn{\setrep[d]}{\normdot}$ to $\wrtn{\setrep[d]}{\normdot'}$ can be
directly encoded into it.

\begin{rem} \label{rem:cov-pat}
For every two norms $\ndot,\ndot'$ on $\IR^d$ exists a constant $k \in \IN$
and a finite set $D \subset \ID^d_k$ (``coverage pattern'') such that
\[
	\cball_{\ndot'}(0,1)
	\subseteq
	\bigcup\nolimits_{p \in D} \cball_{\ndot}(p,2^{-k})
	\subseteq
	\cball_{\ndot'}(0,3/2)
	\eqnsp .
\]
\end{rem}

Note that $\ndot'$ can only be approximated by $\ndot$-balls up to a
\emph{constant factor} by the above coverage pattern $D$.
Approximating the shape of a $\ndot'$-ball up to \emph{arbitrary precision},
however, might still be uncomputable.

\begin{proof}[of \cref{lem:setrep-norm-equiv}]
Let $k \in \IN$ and $\ID^d_k$ as in \cref{rem:cov-pat}.
Let further $\phi$ be a $\wrtn{\setrep[d]}{\ndot}$-name of $S \in \clset[d]$,
$q \in \ID^d$, and $n \in \IZ$.
Claim: $\phi'$, defined as
\[
	\phi'(q,0^n) \dfeq \max_{p \in D} \phi(p', 0^{n+k})
	\eqnsp , \quad
	p' \dfeq q + 2^{-(n+k)} p
	\eqnsp ,
\]
is a $\wrtn{\setrep[d]}{\ndot'}$-name of $S$.
Note that since $D$ is finite, the maximum ranges only over finitely many
values and is therefore computable in time linear in $n + k + \len{\enc{q}}$.

If $\ball_{\ndot'}(q,2^{-n}) \cap S \neq \emptyset$, then by
\cref{rem:cov-pat} there exists a point $p_0 \in D$ such that
$\cball_{\ndot}(p',2^{-(n+k)})$ meets $S$, justifying
$\phi'(q,2^{-n}) = 1$.
If, on the other hand, $\cball_{\ndot'}(q,2^{-n+1}) \cap S = \emptyset$, then
in particular $\cball_{\ndot}(p',2^{-(n+k)}) \cap S = \emptyset$ for all
$p \in D$.
Their union covers $\cball_{\ndot'}(q,2^{-n})$ which renders
$\phi'(q,2^{-n}) = 0$ to be correct.
\qedhere
\end{proof}

\emph{Noteworthy:}
Neither one of the norms has actually to be \emph{computable} \emdash a direct
consequence of the note following \cref{rem:cov-pat}.

The argument from \cref{lem:setrep-norm-equiv} generalizes to $\wmemrep[d]$
(over $\regset[d]$) and $\gridrep[d]$ (over $\compset[d]$), rendering both
representation to be norm-invariant, too. Representation $\distrep$,
however, turns out to be \emph{not} norm-invariant \emdash not even
non-uniformly (provided $\PTime \neq \NPTime$):

\begin{thm}[$\distrep$ is not polynomial-time invariant under a change of norms \emph{unless} $\PTime \neq \NPTime$]
   \label{thm:distrep-not-polytime-norm-invariant}
In any dimension $d \geq 2$ there is a set $S \in \compset[d]$ that is
polynomial-time $\wrtn{\distrep[d]}{\normdot[1]}$-computable but \emph{not}
polynomial-time $\wrtn{\distrep[d]}{\normdot[\infty]}$-computable
if and only if $\PTime \neq \NPTime$.
\end{thm}

\begin{proof}
\emph{Only if ($\PTime = \NPTime$ implies that $\distrep$ is
(non-uniformly) norm-invariant)}.
Suppose $\PTime = \NPTime$ and let $\phi_1$ be a polynomial-time
$\wrtn{\distrep[d]}{\ndot[1]}$-computable name of $S$.
Now consider the sets $N$ and $P$,
\begin{align*}
	N & \dfeq \st[1]{ \enc{p,\delta,0^n,0^m} }{
		\xusome{p' \in \ID^d_{n+2},\, \phi_1(p',0^{n+2}) \leq 2^{-(n+2)}}
		\enc{p,p',\delta,0^n,0^m} \in P }
	\eqnsp , \\
	P & \dfeq \st[1]{ \enc{p,p',\delta,0^n,0^m} }{
		\abs{\delta - \norm[\infty]{p-p'}} \leq 2^{-m} }
	\eqnsp ,
\end{align*}
which in turn are polynomial-time decidable by the above assumption.

\begin{figure}[htb]
	\centering
	\includegraphics{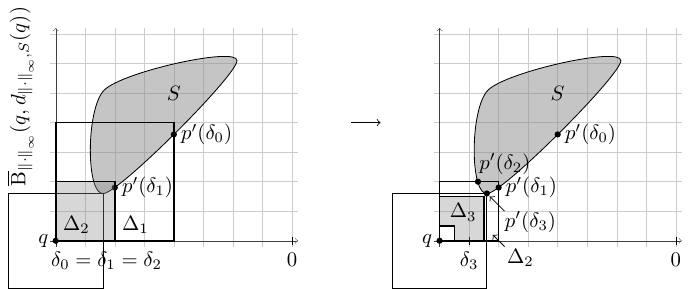}
	\caption{%
		Search for a $2^{-n}$-approximation $\delta_{n+1}$ to
		$\dist[\ndot[\infty]]{S}(q)$ by iteratively determining distances
		$\delta_i$ and associated narrowed sets $\Delta_{i+1} \dfeq
		\st{x \in [0,1]^2}{\abs{\delta_i-\norm[\infty]{x-q}} \leq 2^{-(i+1)}}$
		such that $\delta_i \leq \dist[\ndot[\infty]]{S}(q)$ and
		$\xsome{p'(\delta_i)}{\Delta_{i+1} \cap \ID^d_{n+2}}
			\enc{q,p'(\delta_i),\delta_i,0^n,0^{i+1}} \in P$.
	}
	\label{fig:distrep-binsearch}
\end{figure}

A $\wrtn{\distrep[d]}{\ndot[\infty]}$-name for $S$ can be recovered from
queries ``$\enc{q,\delta_i,0^n,0^i} \in N$?'' by the following iterative
procedure (cf.~\cref{fig:distrep-binsearch}):
Let $\delta_0 \dfeq 0$.
Then for each $1 \leq i \leq n+1$ set $\delta_i \dfeq \delta_{i-1}$ if
$\enc{q,\delta_{i-1},0^n,0^i} \in N$, and
$\delta_i \dfeq \delta_{i-1} + 2^{-i}$ otherwise.
This way,
\begin{align}
	\label{eqn:distrep-if}
	\delta_i
	\leq \dist[\ndot[\infty]]{S}(q)
	\leq \delta_i + 2^{-i} + 2 \cdot 2^{-(n+2)}
	\eqnsp ,
\end{align}
and therefore
$\abs{\dist[\ndot[\infty]]{S}(q) - \delta_{n+1}} \leq 2^{-n}$.
We prove the correctness of \cref{eqn:distrep-if} by induction.
For $i=0$ it surely is true, so consider the case $i > 0$.
If $\enc{q,\delta_{i-1},0^n,0^i} \in N$, then \labelcref{eqn:distrep-if}
holds true for $\delta_{i} \dfeq \delta_{i-1}$ by the construction of $N$.
If, on the other hand, $\enc{q,\delta_{i-1},0^n,0^i} \not\in N$,
then for all $p' \in \ID^d_{n+2}$ with $\phi_1(p',0^{n+2}) \leq 2^{-(n+2)}$
we have $\abs{ \delta_{i-1} - \norm[\infty]{q-p'} } > 2^{-i}$.
Since \labelcref{eqn:distrep-if} holds for $\delta_{i-1}$, it firstly
implies $\dist[\ndot[\infty]]{S}(q) > \delta_{i-1} + 2^{-i}$.
But then \labelcref{eqn:distrep-if} rewrites as
\[
	\delta_{i-1} + 2^{-i}
	\leq \dist[\ndot[\infty]]{S}(q)
	\leq \delta_{i-1} + 2^{-i+1} + 2^{-n}
\]
which is exactly \labelcref{eqn:distrep-if} for
$\delta_i \dfeq \delta_{i-1} + 2^{-i}$.

Consequently, $\phi(q,0^n) \dfeq \delta_{n+1}$ gives a
$\wrtn{\distrep[d]}{\ndot[\infty]}$-name of $S$.


\emph{If}.
We prove this direction only for $d = 2$, but the generalization to higher
dimensions follows by similar constructions.
Assuming $\PTime \neq \NPTime$, we construct an adversary set $A$ through a
proper encoding of an $\NPTime$-complete problem $N \subset \Sigma^\ast$ of
form $N = \st{s \in \Sigma^\ast}{\xsome{w}{\Sigma^{\len{s}}} \enc{w,s} \in P}$,
$P \in \PTime$, into $A$.
To this end, for $n \in \IN$ and $0 \leq i < 2^n$ associate the $i$-th
string $s \in \Sigma^{n}$ with the set
$A_{n,i} \subset [s_{n,i}, s_{n,i+1}] \times [0,2^{-(2n+1)}]$ where
$s_{n,0} \dfeq 1-2^{-n}$, $s_{n,i} \dfeq s_{n,0} + i \cdot 2^{-(2n+1)}$ and
(just to simplify the notation) $s_{n,2^n} \dfeq s_{n+1,0}$.
For each word $s \in \Sigma^n$ we then split its associated set $A_{n,i}$
into $2^n$ slices $A_{n,i,j}$, $0 \leq j < 2^n$, where $A_{n,i,j}$ is
associated with the $j$-th string $w \in \Sigma^{n}$.
To this end, let $s_{n,i,j} \dfeq s_{n,i} + j \cdot 2^{-(3n+1)}$ and
$s_{n,i,j+1}$. 
Whenever $\enc{w,x}$ is in $P$ we code a ``bump'' in $A_{n,i,j}$, and a
simple line otherwise; \ie, for $w,s \in \Sigma^{n}$,
$A_{n,i,j} \dfeq s_{n,i,j} + 2^{-(3n+1)} \cdot A_{\wedge}$ if
$\enc{w,s} \in P$, and $A_{n,i,j} \dfeq s_{n,i,j} + 2^{-(3n+1)} \cdot A_{-}$
otherwise; $A_{\wedge} \dfeq \st[0]{(x,y) \in [0,1]^2}{
	x - y = 0 \text{ for } x \leq 1/2,
	\text{ and } x + y = 1 \text{ for } x > 1/2}$,
$A_{-} \dfeq \st[0]{(x,y) \in [0,1]^2}{y = 0}$.
Thus $A \dfeq \bigcup_{n,i,j \in \IN,\, 0 \leq i,j < 2^n} A_{n,i,j}$
encodes $N$.

\begin{figure}[htb]
	\centering
	\textrm{(a)}\ %
	\includegraphics[scale=.285]{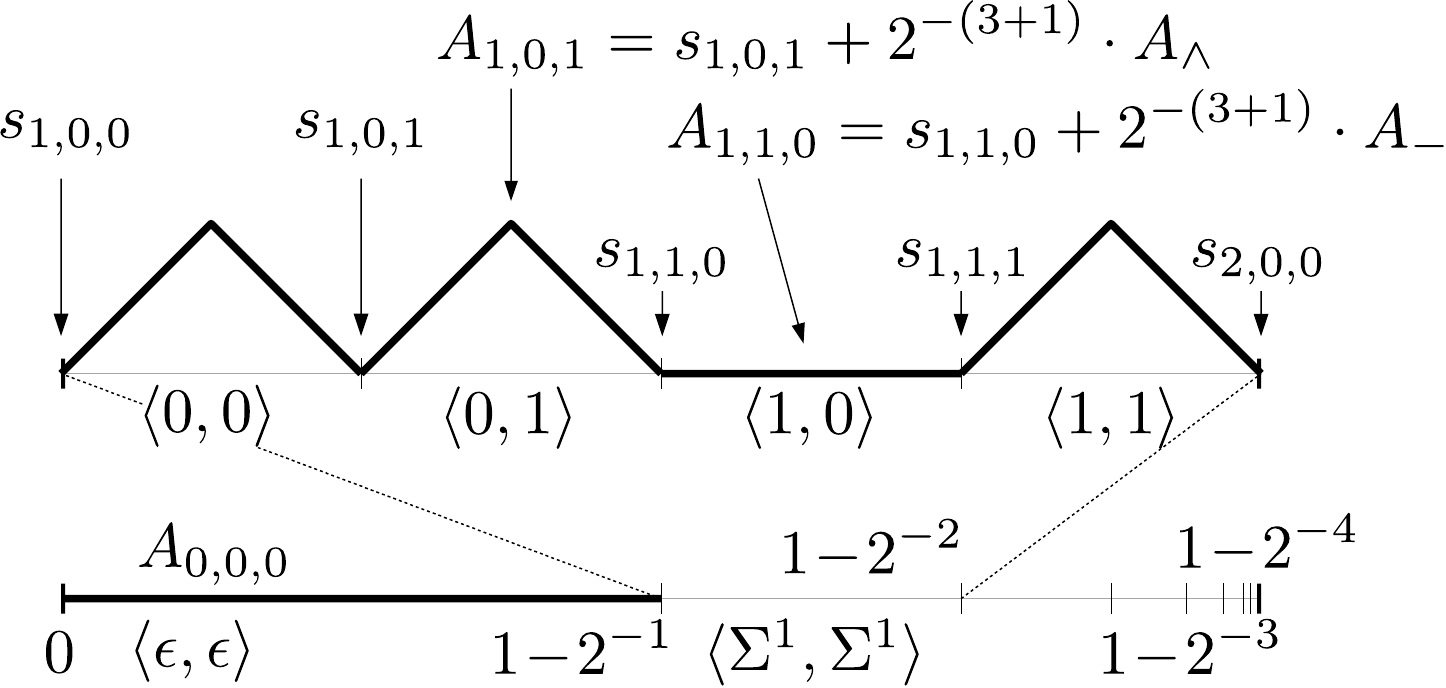} \quad
	\textrm{(b)}\ %
	\includegraphics[scale=.41]{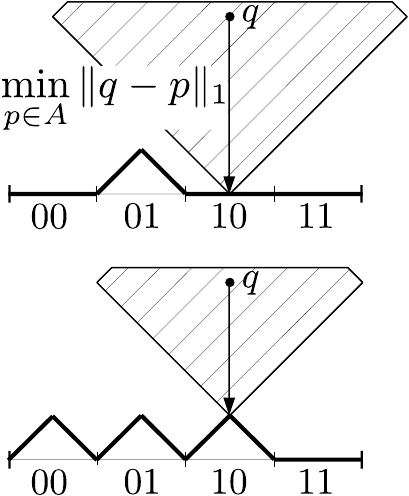}
	\textrm{(c)}\ %
	\includegraphics[scale=.41]{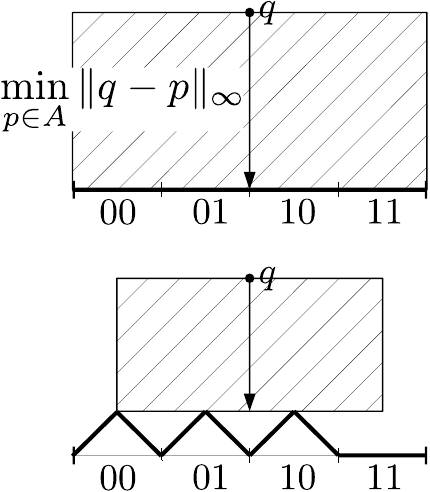}
	\caption{Encoding a certain $\NPTime$-set $N$ into a
		polynomial-time $\wrtn{\distrep[2]}{\normdot[1]}$-computable set $A$
		such that $A$ being also polynomial-time
		$\wrtn{\distrep[2]}{\normdot[\infty]}$-computable would imply
		$\PTime = \NPTime$.}
	\label{fig:distrep-no-uniform-norm-inv-detailed}
\end{figure}

Without further notational overhead associate each point
$q \in \ID^1 \cap [0,1]$ with the (lexicographically) largest triple
of indices $(n,i,j)$ such that $q$ belongs to $[s_{n,i,j},s_{n,i,j+1}]$.
As before, $\enc{w,s}$ is also uniquely identified by this triple.
Now it is easy to construct a $\wrtn{\distrep[2]}{\normdot[1]}$-name
for $A$, while it is hard (\ie, not computable in polynomial time) to
construct one \wrt $\wrtn{\distrep[2]}{\normdot[\infty]}$ (both cases are
also sketched in \cref{fig:distrep-no-uniform-norm-inv-detailed}(b) and
\cref{fig:distrep-no-uniform-norm-inv-detailed}(c), respectively).

\begin{itemize}
\item A $\wrtn{\distrep[2]}{\normdot[1]}$-name $\phi$ of $A$ can be
	constructed in polynomial time:
	\[
		\phi\big( (q_1,q_2), 0^n \big) \dfeq
		\abs[1]{
			q_2 - \chi_P\enc{w,s} \cdot \big( 2^{-(3n+2)}
			- \abs{ q_1 - (s_{n,i,j} + s_{n,i,j+1})/2 } \big)
		}
	\]
\item Now consider $\wrtn{\distrep[2]}{\normdot[\infty]}$.
	Assume there was a polynomial-time OTM $M^?$ which could compute a
	$\repnorm{\distrep[d]}{\infty}$-name $\phi'$ for $A$.
	Evaluating $\phi'$ at
	$(q'_1,q'_2) \dfeq \big(
		(s_{n,i,0} + s_{n,i,0})/2,\,
		2^{-(2n+1)}
	\big)$
	with precision $n' \dfeq 3n+4$ then decides $N$ because
	$\phi'\big( (q'_1,q'_2), 2^{-n'} \big) \geq 2^{-(2n+2)} - 2^{-(3n+3)}$
	if and only if a witness $w \in \Sigma^n$ exists with $\enc{w,s} \in P$.
\qedhere
\end{itemize}
\end{proof}


\subsection{Polynomial-time relations}      \label{subsec:poly-time-relations}

Representations $\setrep[d]$, $\gridrep[d]$ and $\wmemrep[d]$ are uniformly
poly\-nomial-time invariant under a change of norms; and so is $\reldistrep[d]$
according \cref{prop:setrep-reldrep} below---however representation $\distrep[d]$
in general is not, even non-uniformly subject to $\PTime \neq \NPTime$,
although it is computably equivalent to $\setrep[d]$ \cite[Theorem 3.12]{BW99}.
In fact, restricted to the class $\cb$ of convex bodies,
four of our five representations are known computably equivalent.

\begin{fact}[{\cite[Cor.~4.13]{Ziegler02}}]        \label{fact:all-comp-equiv}
$\distrep[d] \equiv^{\cb} \setrep[d] \equiv^{\cb} \wmemrep[d]$
in any dimension $d \in \IN$.
\end{fact}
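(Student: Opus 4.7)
The plan is to establish the cycle $\distrep \preceq^{\cb} \setrep \preceq^{\cb} \wmemrep \preceq^{\cb} \distrep$, which at once yields all three pairwise equivalences. Only the last step fundamentally exploits the convex regular structure of elements of $\cb$; the first is a direct rounding argument valid on all of $\clset$, and the second uses convexity only to certify the deep-inside case.

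For $\distrep \preceq^{\cb} \setrep$, given a $\distrep$-name $\phi$ of $S$ I would answer a $\setrep$-query $(q,0^n)$ by computing $\delta \dfeq \phi(q, 0^{n+2})$, which satisfies $|\delta - d_S(q)| \leq 2^{-(n+2)}$, and outputting $1$ iff $\delta \leq 3 \cdot 2^{-n-1}$. If $d_S(q) < 2^{-n}$ the approximation gives $\delta < 5 \cdot 2^{-n-2} < 3 \cdot 2^{-n-1}$, so we output $1$ (consistent with ``close''); if $d_S(q) > 2^{-n+1}$ we get $\delta > 7 \cdot 2^{-n-2} > 3 \cdot 2^{-n-1}$, so we output $0$ (consistent with ``far''). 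For $\setrep \preceq^{\cb} \wmemrep$ I would compute the $\wmemrep$-answer by dovetailing two searches. One tries to certify far-outside by querying $\phi(q, 0^m)$ for increasing $m$ and emitting $0$ the moment $\phi$ returns $0$: this is safe since then $\cball(q, 2^{-m+1}) \cap S = \emptyset$, which for $m \geq n+1$ implies $\cball(q, 2^{-n}) \cap S = \emptyset$. The second tries to certify the deep-inside case by picking finitely many dyadic probes $p_1, \ldots, p_k$ whose convex hull satisfies $\cball(q, 2^{-n}) \subseteq \mathrm{conv}(p_1, \ldots, p_k)$, and verifying through iterated $\setrep$-queries that each probe lies sufficiently deep inside $S$. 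Convexity of $S$ then promotes membership of the probes to containment of the hull, hence of $\cball(q, 2^{-n})$; regularity ensures that in the genuine deep-inside case such margins actually exist and are eventually detected. In the gap where neither condition holds either search may terminate, and multi-valuedness makes any answer permissible.

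The main obstacle is the third reduction $\wmemrep \preceq^{\cb} \distrep$: recovering the full distance function from deep-inside/far-outside information alone. Here I would invoke the classical convex-optimization viewpoint: given a weak-membership oracle for a convex body, the weak-distance problem is computable by cutting-plane or ellipsoid-style algorithms in the spirit of Gr\"otschel-Lov\'asz-Schrijver. Concretely, to approximate $d_S(q)$ to precision $2^{-n}$ I would bisect on a candidate radius $r$: for each $r$, decide whether $\cball(q, r) \cap S \neq \emptyset$ by searching for a witness inside $\cball(q, r) \cap S$ using $\wmemrep$-queries on a dyadic grid, or by certifying non-intersection via a weak separating hyperplane extracted from iterated negative answers. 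The technical delicacy, relative to the standard GLS setting, is that $\cb$ does not require compactness and does not come with an explicit inner point; the algorithm must therefore first locate an inner point and handle unbounded bodies, which can be arranged by a preliminary search that succeeds because $S \in \cb$ is regular and hence has non-empty interior.
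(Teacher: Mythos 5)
The paper does not prove this statement; it is cited verbatim from Ziegler (Cor.~4.13) and supported only by the intuitive remark following it plus the quantitative analogue proved later as \cref{s:convex-opt}. So you are reconstructing a proof from scratch, and the cycle $\distrep \preceq^{\cb} \setrep \preceq^{\cb} \wmemrep \preceq^{\cb} \distrep$ is a sound strategy. Your first link is correct, and your third link invokes exactly the right machinery (GLS-style weak-membership-to-optimization, which the paper itself adapts in \cref{s:convex-opt} for the polynomial-time bounded case); the unboundedness of $\cb$ is acknowledged as a delicacy but not really resolved, which is acceptable for a proof sketch of a computability-level statement.

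The middle link $\setrep \preceq^{\cb} \wmemrep$, however, contains genuine errors. First, the justification of your ``far-outside'' branch is wrong: from $\phi(q,0^m)=0$ you infer $\cball(q,2^{-m+1})\cap S=\emptyset$, but the two clauses of \cref{def:set-representations}(3) are \emph{sufficient} conditions for the respective outputs, not characterizations. The contrapositive of clause (a) only yields $\ball(q,2^{-m})\cap S=\emptyset$, i.e.\ $\dist{S}(q)\geq 2^{-m}$; the stronger statement $\dist{S}(q)>2^{-m+1}$ does not follow. (The branch still happens to be safe because $\dist{S}(q)>0$ already rules out $\cball(q,2^{-n})\subseteq S$, but your stated reason is incorrect.) Second, and more seriously, your ``deep-inside'' branch proposes to certify by $\setrep$-queries that probes $p_i$ ``lie sufficiently deep inside $S$''. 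A $\setrep$-name cannot distinguish a point in $\inner{S}$ from a point just outside $S$: both return $1$ at every precision. Whenever the enclosing probes $p_1,\dots,p_k$ (which must lie \emph{outside} $\cball(q,2^{-n})$ for their hull to contain it) happen to fall outside $S$ even though $\cball(q,2^{-n})\subseteq S$ --- e.g.\ $S=\cball(q,2^{-n}+\varepsilon)$ with tiny $\varepsilon$ --- your second search fails at high precision and your first search never triggers, so the dovetail does not terminate. The realizer must be total, so this is not a mere inefficiency. All of this machinery is also unnecessary: $\phi'(q,0^n) \dfeq \phi(q,0^{n+1})$ already defines a correct $\wmemrep$-name from a $\setrep$-name over \emph{all} regular sets, with no convexity needed --- exactly the reduction the paper uses in \cref{lem:polytime-relations-dim-one} --- because $\cball(q,2^{-n})\subseteq S$ gives $\dist{S}(q)=0$ hence $\phi(q,0^{n+1})=1$ by clause (a), while $\cball(q,2^{-n})\cap S=\emptyset$ is exactly clause (b) at precision $n+1$ hence $\phi(q,0^{n+1})=0$.
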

They are all equivalent because (intuitively speaking) points can be found
due to regularity (regular sets are full-dimensional), and can be checked
(locally) to be of the desired precision due to convexity
(check if all points in a small neighborhood are also contained in the set).

In this section we now systematically compare all representations from
\cref{def:set-representations} regarding their polynomial-time reducabilites
in
\begin{enumerate*}[label=(\emph{\alph*})]
\item dimension $d=1$ and for $d \geq 2$, and
\item over various subclasses of $\clset[d]$.
\end{enumerate*}
As a result, representations $\setrep[d] \enp{b}$, $\reldistrep[d] \enp{b}$,
and $\gridrep$ prove to be $\pleq$-equivalent over $\compset[d]$ for every
$d \in \IN$.
Taking $\setrep[d]$ as a representative for this equivalence class,
$\distrep[d] \pless \setrep[d]$ holds true for $d \geq 2$,
and $\setrep[d] \pless \wmemrep[d]$ in any dimension (both even on
$\KCR[d]$!), which leaves us in a very different situation compared to
\cref{fact:all-comp-equiv}.
However: This distinction between $\distrep[d]$ and $\wmemrep[d]$ disappears
when given the right set of additional parameters (\cref{s:convex-opt}),
yielding one equivalence class of representations for sets as the result.



\subsubsection{Polynomial-time reducibilities in dimension $d=1$.}


\begin{prop} \label{lem:polytime-relations-dim-one}
$\repnorm{\distrep[1]}{} \enp{b} \pequiv^{\compset}
 \repnorm{\reldistrep[1]}{} \enp{b} \pequiv^{\compset}
 \repnorm{\setrep[1]}{} \enp{b} \pequiv^{\compset}
 \repnorm{\gridrep[1]}{}$,
$\setrep[1] \pleq^{\regset} \wmemrep[1]$,
and $\repnorm{\wmemrep[1]}{} \enp{ar} \pleq^{\cb} \repnorm{\setrep[1]}{}$.
\end{prop}


\begin{proof}
Without loss of generality, let $\ndot \dfeq \ndot[\infty]$.
Notice that the reductions
$\distrep[1] \enp{b}
	\pleq^{\compset} \reldistrep[d] \enp{b}
	\pleq^{\compset} \setrep[1] \enp{b}
	\pleq^{\compset} \gridrep[1]$
already follow by definition of the respective representations.
%
\begin{itemize}
\item Reduction $\setrep[1] \enp{b} \pleq^{\compset} \distrep[1]$:
	Let $\enc{\phi,0^b}$ be a $\setrep[1] \enp{b}$-name of a closed
	$S \subseteq \cball(0,2^b)$.
	Further, set $b' \dfeq \max\{1,b\}$ and $c' \dfeq \lb{\max\{2,\norm{q}\}}$.
	For any $q \in \ID$ and $n \in \IZ$, test if $\phi(q,0^{n+1}) = 1$.
	If it is, then $0$ is a valid $2^{-n}$-approximation of $\dist{S}(q)$.
	If, on the other hand, $\phi(q,0^{n+1}) = 0$, then first find the smallest
	$i \in \IN+$, $1 \leq i \leq n+b'+c'+1$, with $\phi(q,0^{n+1-i}) = 1$.
	Having found $i$, continue with two binary searches, one in $[q-2^{i-n},q]$
	and the other in $[q,q+2^{i-n}]$, for points $p_-,p_+ \in \ID_{n+1}$
	eventually satisfying $\phi(p_\pm,0^{n+1}) = 1$ and minimizing
	$\norm{q - p_\pm}$.
	Then $\min \{ \abs{q-p_-}, \abs{q-p_+} \}$ consitutes a valid
	$2^{-n}$-approximation of $\dist{S}(q)$.
\item Reduction $\reldistrep[1] \enp{b} \pleq^{\compset} \distrep[1]$
	follows by $\reldistrep[1] \pleq^{\compset} \setrep[1]$ and
	$\setrep[1] \enp{b} \pleq^{\compset} \distrep[1]$ from above.
\item Reduction $\gridrep[1] \pleq^{\compset} \setrep[1]$:
	Any $\gridrep[1]$-name $\phi$ induces a $\setrep[1]$-name $\phi'$ of the
	same set by $\phi'(q,0^n) \dfeq \max_p \st[0]{ \phi(p,0^{n+1})}{
			p \in \ID_{n+1} \cap [q-2^{-n},q+2^{-n}]}$.
	Since $|\ID_{n+1} \cap [q-2^{-n},q+2^{-n}]| \leq 5$, constantly
	many queries to $\phi$ suffice to devise $\phi'$.
\item Reduction $\setrep[1] \pleq^{\regset} \wmemrep[1]$:
	Every $\setrep[1]$-name $\phi$ constitutes a $\wmemrep[1]$-name $\phi'$
	of the same set through $\phi'(q,0^n) \dfeq \phi(q,0^{n+1})$.
\item Reduction $\wmemrep[1] \enp{ar} \pleq^{\regset} \setrep[1]$:
	Given a $\wmemrep[1] \enp{ar}$-name $\enc{\phi,a,0^r}$ of
	$S \in \CR[1]$, do a binary search between $a$ and $q$ for a point
	$p \in \ID_m$, $m \dfeq \max\{n,\abs{r}\}+1$, which minimizes $\abs{q-p}$
	over all such points satisfying $\phi(p,0^m)$.
	Then $\phi'$ with $\phi'(q,0^n) \dfeq 1$ if
	$\abs{q-p} \leq 3 \cdot 2^{-(n+1)}$, and defined as $0$ otherwise,
	consitutes a $\setrep[1]$-name of $S$.
	Note that convexity is cruicial in order to perform a binary search given
	only a $\wmemrep$-name.
\qedhere
\end{itemize}
\end{proof}

\subsubsection{Arbitrary yet fixed dimension.}

Some of the formerly explained relations change onward from dimension $d = 2$.
As a first example we note a result due to Braverman.
\begin{fact}[{\cite[Thm.~3.2.1]{Braverman04}}]       \label{s:Braverman-distrep}
Let $d \geq 2$.
$\PTime = \NPTime$ holds true iff every polynomial-time
$\wrtn{\setrep[d]}{\normdot[2]}$-computable $S \in \compset[d]$
is also polynomial-time $\wrtn{\distrep[d]}{\normdot[2]}$-computable.
\end{fact}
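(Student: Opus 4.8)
The plan is to establish the two implications separately, re-using in each direction the techniques already assembled for \cref{thm:distrep-not-polytime-norm-invariant}.

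\medskip\noindent\emph{From $\PTime=\NPTime$ to the non-uniform implication.}
Fix a set $S\in\compset[d]$ that has a polynomial-time computable $\wrtn{\setrep[d]}{\ndot[2]}$-name $\phi$; since $S$ is a single fixed compact set we may also fix a constant $b$ with $S\subseteq\cball(0,2^b)$. To exhibit a polynomial-time $\wrtn{\distrep[d]}{\ndot[2]}$-name we must, given $q\in\ID^d$ and $n\in\IZ$, compute a dyadic $2^{-n}$-approximation of $\dist[\ndot[2]]{S}(q)$; because $\dist[\ndot[2]]{S}(q)\leq 2^b+\norm[2]{q}$ this means pinning down polynomially many bits of it. First I would obtain these bits by a binary search on the value of the distance, exactly as in the ``only if'' part of \cref{thm:distrep-not-polytime-norm-invariant}, with the single comparison now reading ``does there exist a grid point $p'\in\ID^d_m$ with $\norm[2]{p'-q}$ in a prescribed dyadic window and $\phi(p',0^m)=1$?''. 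Since $\phi$ is an honest polynomial-time string function (not an oracle), this comparison reduces to an $\NPTime$ predicate -- guess $p'$, verify in polynomial time -- hence to a $\PTime$ predicate under the assumption; running the search for polynomially many rounds, each a polynomial-time computation, produces the required approximation in polynomial time, and it is returned as the $\distrep$-name value.

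\medskip\noindent\emph{From the non-uniform implication to $\PTime=\NPTime$.}
Here I would argue contrapositively: assuming $\PTime\neq\NPTime$, construct for $d=2$ (higher dimensions being analogous) a compact set $A\subset[0,1]^2$ that is polynomial-time $\wrtn{\setrep[2]}{\ndot[2]}$-computable but not polynomial-time $\wrtn{\distrep[2]}{\ndot[2]}$-computable, by encoding an $\NPTime$-complete problem $N=\st{s\in\Sast}{\xsome{w}{\Sigma^{\len{s}}}{\enc{w,s}\in P}}$, $P\in\PTime$, into $A$. The scaffold is the one of the ``if'' part of \cref{thm:distrep-not-polytime-norm-invariant}: nested subintervals of $[1-2^{-n},1]$, first indexed by strings $s\in\Sigma^n$ and then by witnesses $w\in\Sigma^n$, each witness-slice carrying a tiny ``tent'' when $\enc{w,s}\in P$ and a flat segment on the axis otherwise. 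The two requirements the geometry must meet are: (i) there is a polynomial-time rule that, given $q\in\ID^2$ and a precision, outputs a bit which is a correct $\setrep$-answer for \emph{every} configuration of tents -- so that the weak-membership name never has to know which tents are present, making $A$ polynomial-time $\wrtn{\setrep[2]}{\ndot[2]}$-computable; and (ii) at one designated query point placed well above a string-block, at precision polynomial in $n$, the exact value $\dist[\ndot[2]]{A}(q)$ equals the minimum over that block's witness-slices, hence falls below a fixed threshold if and only if some $w$ has $\enc{w,s}\in P$, i.e.\ if and only if $s\in N$. A polynomial-time $\distrep$-name of $A$ would therefore decide $N$, contradicting $\PTime\neq\NPTime$. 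The checks that $A$ is closed, bounded and accumulates only at $x=1$, and that the comparison predicate in the previous paragraph lies in $\NPTime$, are routine and parallel those in \cref{thm:distrep-not-polytime-norm-invariant}.

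\medskip\noindent\emph{Where the difficulty sits.}
The hard part is the Euclidean geometry behind requirements (i) and (ii): they pull in opposite directions, since (ii) wants a query point from which a tent is \emph{noticeably} closer than a flat slice, whereas (i) needs the gain from reaching any tent instead of its base to be small enough, relative to the ambient distance, that a single weak-membership bit stays correct no matter which tents are present and at no scale ever pins a particular tent down. In \cref{thm:distrep-not-polytime-norm-invariant} the analogue of (ii) came for free because $\ndot[\infty]$-balls are axis-aligned squares that ``see around corners'', letting one point realise a minimum over many slices at once; with $\ndot[2]$ the witness-slices of a block must instead be arranged -- by placing the query point far enough above the block and choosing tent heights, block widths and slice widths in a balanced polynomial relation -- so that they are almost equidistant from the query point while a present tent still shifts the distance by strictly more than the residual error. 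Carrying out this balancing is the technical core of the construction, and is precisely the work done in \cite{Braverman04}.
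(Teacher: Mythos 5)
This statement is imported as a fact from Braverman's thesis; the paper supplies no proof of its own, so there is no in-paper argument to compare your sketch against. Your first direction is sound and is exactly the technique used in the ``only if'' half of the proof of \cref{thm:distrep-not-polytime-norm-invariant}: under $\PTime=\NPTime$, sweep in on the distance value by a binary search whose comparison step is the $\NPTime$ predicate ``does a grid point with positive $\setrep$-bit lie in a prescribed dyadic annulus around $q$?''. That transfers essentially verbatim.

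In the second direction, your requirement (i) is mis-stated in a way that then distorts your picture of the difficulty. You ask for a rule whose output bit is ``a correct $\setrep$-answer for \emph{every} configuration of tents,'' so that the name ``never has to know which tents are present.'' That cannot be right: the set $A$ is determined by the fixed predicate $P$ and really does or does not contain each individual tent, so at precision finer than the tent height a $\wrtn{\setrep[2]}{\ndot[2]}$-name \emph{must} answer differently near a present tent than near an absent one --- no configuration-independent bit is correct there. What actually makes $A$ polynomial-time $\setrep$-computable is \emph{locality}: for any $(q,m)$, only a bounded number of witness-slices can meet $\ball(q,2^{-m})$, and for each the algorithm simply evaluates $\chi_P\enc{w,s}$ for the corresponding index pair, which is cheap because $P\in\PTime$. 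This is precisely what the paper's own $\wrtn{\distrep[2]}{\ndot[1]}$-name does in the ``if'' half of the proof of \cref{thm:distrep-not-polytime-norm-invariant}. With (i) read as configuration-independence, the ``tension'' you describe with (ii) is largely artificial; the genuine Euclidean constraint is only that the distance drop caused by the nearest present tent (roughly the tent height) dominate the second-order variation $\approx\delta^2/(2q_2)$ of the baseline distance across the block, which forces a polynomial rebalancing of block widths, slice widths, tent heights and query height relative to the $\ndot[\infty]$ construction. Deferring that balancing to \cite{Braverman04} is reasonable, but the constraint your sketch attributes to (i) is not the real one.
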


In short:
Finding the distance from a point to a set only from local information
(that is, a $\setrep$-name) about the latter is as hard as solving
$\NPTime$-problems in polynomial time.
Thus, $\distrep[d]$ is \emph{richer} (\ie, it in a sense provides \emph{more}
information about closed non-empty sets) than any of the other representations
(\ie, the others are \emph{poorer}).

We note two implications, following immediately from the proof of
\cref{s:Braverman-distrep}.
\begin{itemize}
\item The statement also holds true over $\KR[d]$.
	In fact, it	uniformizes by an adversary argument as sketched in
	\cref{sec:adversary-method}; \ie, $\wrtn{\setrep[d]}{\normdot[2]}
		\enp{b} \not\pleq^{\KR} \wrtn{\distrep[d]}{\normdot[2]}$
	for $d \geq 2$.
\item \Cref{s:Braverman-distrep} is stated with respect to $\normdot[2]$,
	but it easily generalizes to arbitrary well-behaved norms $\normdot$
	by properly adapting the adversary set's shape; \ie, from
	$\normdot[2]$-balls to $\normdot$-balls.
\end{itemize}
These two statements also apply to $\gridrep$ due to the following observation.


\subsubsection{Representation $\setrep$ with outer radii.}

$\gridrep[d]$ can be reformulated as $\setrep[d] \enp{b}$
with \emph{necessary} outer radius parameter $b$ as
every $\setrep[d] \enp{b}$-name $\enc{\phi,0^b}$ constitutes a
$\gridrep[d]$-name $\phi'$ through $\phi'(\eword) \dfeq 0^b$ and
$\phi'(q,0^n) \dfeq \phi(q,0^{n+1})$ for $q \in \ID^d, n \in \IZ$.
The reverse direction requires a little bit more care:
A point $q$ which does not belong to $B_n$ might still be arbitrarily close
to the represented set, hence $\setrep[d]$-name would have to give $1$
when queried with $\enc{q,0^n}$.
However, any $\gridrep[d]$-name \emph{does} provide enough information if
only queried on a finite set of points close to $q$.

\begin{prop} \label{thm:setrep-gridrep-equiv}
$\wrtn{\setrep[d]}{\normdot} \enp{b} \pequiv^{\compset}
\wrtn{\gridrep[d]}{\normdot}$ holds in any dimension $d \in \IN$.
\end{prop}

\begin{proof}
By the above argumentation it just remains to prove the reduction
$\gridrep[d] \pleq^{\compset} \setrep[d] \enp{b}$.

Let $q \in \ID^d$ and $n \in \IZ$, and be $\phi$ a $\gridrep[d]$-name of
$S \in \compset[d]$.
Firstly, an outer radius parameter according to $\ens{b}$ can be obtained
through $\phi(\eword)$.
It thus remains to construct a $\setrep[d]$-name $\phi'$ from queries to
$\phi$.
We claim that $\phi'(q,0^n) \dfeq \max_{p \in P} \phi(p,0^{n+2})$ with
$P \dfeq \cball(q,3 \cdot 2^{-(n+1)}) \cap \ID^d_{n+2}$ is such a name.
The correctess follows by checking the two cases from definition of $\setrep$.
If $\ball(q,2^{-n}) \cap S \neq \emptyset$, then by \ref{eqn:kappa-x-d} there
must exist a $p \in P$ with $\phi(p,0^{n+2}) = 1$, which leads to
$\phi'(q,0^n) = 1$.
Now consider the second case: $\cball(q,2^{-n+1}) \cap S = \emptyset$.
We prove it by contradition.
To this end, assume $\phi'(q,0^n) = 1$.
Then there is a $p \in P$ which satisfies \ref{eqn:kappa-d-x}, \ie,
there exists an $x \in S$ such that $x \in \cball(p,2^{-(n+2)})$ which in
turn produces a contradiction because of
$\cball(p,2^{-(n+2)}) \subset \cball(q,2^{-n+1})$.
\qedhere
\end{proof}


\subsubsection{Local information and relative distance.}

On compact sets and enriched with outer radius parameter $b \in \IZ$,
representation $\reldistrep[d]$ is polynomial-time equivalent to $\setrep[d]$.%
\footnote{%
	Recall that by \cref{rem:only-si} we assume all representations to be
	scale-invariant.
	Without it, only $\sisetrep$
	would have been fully polynomial-time equivalent to $\reldistrep$,
	while $\setrep$ would have been only parameterized polynomial-time equivalent.}

\begin{prop}                                \label{prop:setrep-reldrep}
$\wrtn{\setrep[d]}{\ndot} \enp{b} \pequiv^{\compset}
	\wrtn{\reldistrep[d]}{\ndot} \enp{b}$
holds in any dimension $d \in \IN$.
\end{prop}
\begin{proof}
We prove the polynomial-time equivalence of $\reldistrep[d] \enp{b}$ and
$\setrep[d] \enp{b}$ for $\normdot \dfeq \normdot[\infty]$.
The full statement then is a direct application of
\cref{lem:setrep-norm-equiv}.

Direction $\reldistrep[d] \enp{b} \pleq^{\compset} \setrep[d] \enp{b}$:
Let $n \in \IZ$ and $q \in \ID^d$.
If $\enc{\phi,0^b}$ is a $\reldistrep[d]|^{\compset} \enp{b}$-name of some
$S \in \compset[d]$, then
\[
	\phi'\big( q, 0^n \big) \dfeq \begin{cases}
		1, & \text{if } \phi(q,0^{n+4}) \leq 5/4 \cdot 2^{-n} + 2^{-(n+4)} \\
		0, & \text{if } \phi(q,0^{n+4}) \geq 3/4 \cdot 2^{-n} + 2^{-(n+4)}
	\end{cases}
\]
yields $\enc{\phi',0^b}$ to be a $\setrep[d] \enp{b}$-name of $S$.

	\begin{figure}[htb]
		\centering
		\includegraphics{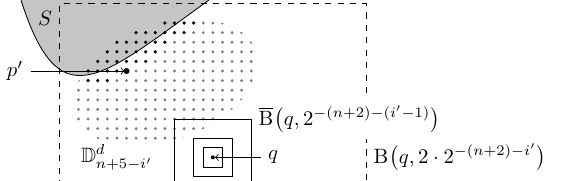}
		\caption{Reducing $\setrep[d]|^{\compset} \enp{b}$ to
			$\reldistrep[d]|^{\compset} \enp{b}$.
			Highlighted in black are points $p$ with $\phi(p,0^{n+5-i'}) = 1$.}
		\label{fig:reldistrep-sisetrep}
	\end{figure}

Direction $\setrep[d] \enp{b} \pleq \reldistrep[d] \enp{b}$:
Let $b' \dfeq \max\{1,b\}$ and $c' \dfeq \lb{\max\{2,\norm{q}\}}$.
We start by determining an initial approximation to $\dist{S}(q)$.
To this end, start with $k \dfeq 0$ and search for the
smallest value $k \leq n+b'+c'+1$ with $\phi(q,0^{n+1-k}) = 1$.
Denote this particular integer by $k'$.
Note that such a $k'$ does exist because of
$S \subseteq \clb(0){0}{2^b+\norm{q}} \subseteq \clb(0){0}{2^{b'+c'}}$.
This $k'$ then yields the bound $\dist{S}(q) \in [2^{-(n+2)+k'},2^{-n+k'}]$.

Now that we have a bound on $\dist{S}(q)$ we can decompose
$\cball(q,2^{-n+k'},2^{-(n+2)+k'})$ into a constant number of regions to
search in for a good approximation to $\dist{S}(q)$.
More precisely, let
$p' \in \ID^d_{n+5-k'} \cap \cball(q,2^{-n+k'},2^{-(n+2)+k'})$ be a dyadic
point with $\phi(p',0^{n+5-k'}) = 1$ which minimizes $\norm{q-p'}$ over
all points from the above hollow set (this argument is also depicted in
\cref{fig:reldistrep-sisetrep}).
This leads to $\abs{\dist{S}(q) - \norm{q-p'}} \leq 2^{-(n+4)+k'}$.
Moreover, $\phi'(q,0^n) \dfeq \norm{q-p'}$ satisfies \cref{eqn:def-reldistrep}.
The first half, the lower bound on $\phi'(q,0^n)$ in
\labelcref{eqn:def-reldistrep}, follows by validating that the above bound
on $\norm{q-p'}$ implies $\dist{S}(q) - 2^{-(n+4)-k'} \leq \norm{q-p'}$.
Comparing this bound with $3/4 \cdot \dist{S}(q) - 2^{-n}$ from
\labelcref{eqn:def-reldistrep} shows that
$\dist{S}(q) \geq 2^{-(n+2)+k'} - 2^{-n+2}$ has to hold in order to prove
the lower bound from \labelcref{eqn:def-reldistrep} to be true \emdash
which it does (cf.~the initial approximation we got on $\dist{S}(q)$).
The upper bound follows analogously.
Hence, $\enc{\phi',0^b}$ is a $\reldistrep[d] \enp{b}$-name of $S$.
\qedhere
\end{proof}


\subsubsection{Comparing local information.}

The situation regarding representation $\wmemrep$ is more diverse:
Although $\wmemrep$ is \emph{computably} equivalent to $\setrep$
over $\CR$-sets (\cref{fact:all-comp-equiv}),
\cref{lem:polytime-relations-dim-one} already showed that additional local
information (an inner point $a$ and an inner radius $2^{-r}$) is necessary
to reduce a $\wmemrep[1]$-name to a $\setrep[1]$-name.
The reduction itself was no more than a binary search, but the applicability
was tied to dimension $1$ and therefore does not extend to dimension
$d = 2$ onward.
Nonetheless, $\wmemrep[d]$ \emph{can} be shown to be polynomial-time
reducible to $\setrep[d]$ \emdash and even to $\distrep$! \emdash in dimension
$d \geq 2$ given enough additional information, although by a very
different argument.
We start by sketching the positive result about $\wmemrep$'s relation to
$\distrep$ (extending upon \cite[Cor.~4.3.12]{GLS88}), and then show that none
of the enrichments could have been directly computed (in polynomial-time)
from a $\wmemrep$-name alone.

\begin{thm} 
	\label{s:convex-opt}
$\wmemrep[d] \enp{ar} \enp{b} \pleq^{\bcb} \distrep[d]$ in any dimension $d$.
\end{thm}

This result follows by applying arguments from \emph{Convex Optimization}:
an adaption of the \emph{Ellipsoid Method} plus a \emph{polarity argument}.
The Ellipsoid Method allows to first reduce $\wmemrep[d] \enp{ar} \enp{b}$
to an intermediate representation $\wopt[d]$,
called \emph{weak optimization} representation
\cite[WOPT: Def.~2.1.10]{GLS88}.
A $\phi \in \Baire$ is a $\wopt[d]$-name of $S \in \KR[d]$, if for every
directional (or: cost-) vector $c \in \ID^d$ and precision $n \in \IZ$,
it satisfies
\begin{enumerate}
\item $\phi(c,0^n) = \eword$ if $\cball(S,-2^{-n})$ is empty; and
\item $\phi(c,0^n) = p$ for some $p \in \cball(S,2^{-n}) \cap \ID^d$ such that
	$\trsp{c} x \leq \trsp{c} p + 2^{-n}$ holds true for all
	$x \in \cball(S,-2^{-n})$.
\end{enumerate}
In the second case we also say that $p$ is an \emph{almost optimal point}
\wrt the cost vector $c$ (cf.~\cref{fig:wopt}).
This case can moreover be reformulated by means of \emph{halfspaces}
and \emph{hyperplanes}:
Let $c$ be some real-valued vector and $\alpha \in \IR$.
Then $H^{\leq \alpha}_{c} \dfeq \st{x \in \IR^d}{\trsp{c} x \leq \alpha}$ and,
analogously, $H^{\geq \alpha}_{c}$ are halfspaces, and their intersection
constitutes the hyperplane
$H^{=\alpha}_c \dfeq H^{\leq \alpha}_c \cap H^{\geq \alpha}$.
The aforementioned second case now reads as
$\bigcup_{x \in \cball(x,2^{-n})} H^{\leq \trsp{c} x}_c \subseteq
	H^{\leq \trsp{c} p + 2^{-n}}_c$ for $p$ as above.

\begin{figure}[htb]
	\centering
  \includegraphics{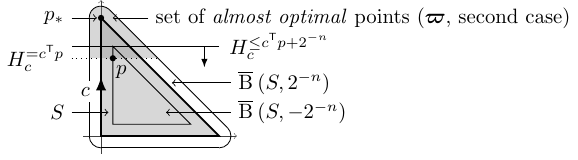}
	\caption{Weak optimization:
		Cost vector $c$, and a set $S \in \KR[2]$ along with an optimal
		solution $p_\ast$ and the set of almost optimal solutions.}
	\label{fig:wopt}
\end{figure}

\begin{rem}                                     \label{rem:wopt-properties}
Like for $\wmemrep$, representation $\wopt$ only makes sense for
regular sets since the first condition would otherwise always hold true,
\eg, for singletons.
The additional restriction to bounded sets moreover is necessary for the
existence of a point $p$ almost optimizing over $S$ in direction of $c$.
To get the ``usual'' notion of optimization in direction of $c$, first
approximate the normalized vector to $c$ (\ie, compute
$c \cdot \norm[2]{c}^{-1}$ up to the desired precision) and then apply the
$\wopt[d]$-name.

Further note that we tied representation $\wopt$ to the Euclidean norm:
The second condition in the definition of $\wopt$ is stated by means of the
the scalar product $\enc{\cdot,\cdot}$ induced by $\ndot[2]$ (\ie,
$\trsp{x} y = \enc{x,y} = 1/4 (\norm[2]{x+y}^2 - \norm[2]{x-y}^2)$).
This does not lead to the most generic definition of $\wopt$, however, it is
a sensible choice because $\ndot[2]$ is the \emph{only} norm on $\IR^d$
amongst the $p$-norms
$\norm[p]{(x_1,\dots,x_d)} \dfeq \big( \sum_{i=1}^d \abs{x_i}^p \big)^{1/p}$
(for $1 \leq p \leq \infty$)
that induces a scalar product.
We therefore only write $\wopt[d]$, but obviously mean
$\wrtn{\wopt[d]}{\ndot[2]}$
\end{rem}

Using $\wopt$ and the following \cref{s:gls-wmem-wopt}, we translate an
$\wmemrep$-name of a set $S$ to a $\distrep$-name of its \emph{polar set}
$\polar{S}$ (a related but in most instances not the same set), and then use
this as an intermediate step to prove the above \nameCref{s:convex-opt}.

\begin{fact}[{\cite[Cor.~4.3.12]{GLS88}}] \label{s:gls-wmem-wopt}
	$\wmemrep[d] \enp{ar} \enp{b} \pleq^{\bcb} \wopt[d]$.
\end{fact}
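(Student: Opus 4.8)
The plan is to realise this reduction by a suitably instrumented ellipsoid method, exactly as in \cite[Chap.~4]{GLS88}; it decomposes into first turning the weakly guaranteed membership oracle into a (weak, possibly shallow) \emph{separation} oracle, and then running the central-cut ellipsoid method to optimise.

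\emph{From weak membership to weak separation.}
Fix a $\wmemrep[d]\enp{ar}\enp{b}$-name $\enc{\phi,a,0^{r},0^{b}}$ of $S\in\KCR[d]$, so that $\cball(a,2^{-r})\subseteq S\subseteq\cball_{\ndot[2]}(0,2^{b})$; since $S$ is a bounded convex body with non-empty interior these guarantees are mutually consistent. The goal is to manufacture from $\phi,a,r,b$ an oracle-polynomial-time routine that, on input $y\in\ID^{d}$ and $m\in\IN$, either certifies that $y$ is within $2^{-m}$ of $S$, or outputs $\gamma\in\ID^{d}$ with $\norm[2]{\gamma}\approx 1$ such that $\trsp{\gamma}x\le\trsp{\gamma}y$ holds, up to error $2^{-m}$, for every $x$ lying $2^{-m}$-deep in $S$. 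If $\phi$ (queried at slightly higher precision) already reports $y$ deep in $S$, take the first alternative. Otherwise $y$ is genuinely off $S$ while $a$ sits $2^{-r}$-deep in it, and a binary search along the segment $[a,y]$, calling $\phi$ with geometrically increasing precision, isolates a point $z$ within $2^{-\poly{m,r,b}}$ of $\boundary S$; combining $z\in S$, the inner ball $\cball(a,2^{-r})\subseteq S$ and convexity then yields an (approximate, possibly shallow) outward normal $\gamma$ at $z$ separating $y$ from a $2^{-m}$-shrinkage of $S$ (this is the part worked out in detail in \cite[\secref{4.3}]{GLS88}).

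\emph{From weak separation to weak optimisation by the ellipsoid method.}
Given a cost vector $c\in\ID^{d}$ and a precision $n\in\IZ$, set $E_{0}\dfeq\cball_{\ndot[2]}(0,2^{b})\supseteq S$ and maintain a decreasing chain $E_{0}\supseteq E_{1}\supseteq\cdots$ of ellipsoids together with the best feasible objective value certified so far. At step $k$, with $E_{k}$ of centre $\omega_{k}$, call the separation routine at $\omega_{k}$ with precision $\poly{n,r,b,\len{c}}$: in the feasible case record $\omega_{k}$ as current best point and cut $E_{k}$ by the objective halfspace $\{x:\trsp{c}x\ge\trsp{c}\omega_{k}\}$; in the separating case cut $E_{k}$ by $\{x:\trsp{\gamma}x\le\trsp{\gamma}\omega_{k}\}$; in either case replace the resulting half-ellipsoid by its minimum-volume enclosing ellipsoid, rounding the ellipsoid's data to polynomially many bits in the usual way. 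After $\bigO\big(d^{2}(b+r+n+\len{c})\big)$ steps the volume of $E_{k}$ drops below that of $\cball(a,2^{-n-1})$; at that point, if no centre was ever reported feasible then $\cball(S,-2^{-n})$ must be empty and we output $\eword$, and otherwise the recorded best point $p$ (rounded into $\ID^{d}$) satisfies $\trsp{c}x\le\trsp{c}p+2^{-n}$ for all $x\in\cball(S,-2^{-n})$ — exactly the two alternatives a $\wopt[d]$-name must provide.

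\emph{Complexity, and where the difficulty lies.}
Since the enrichments $\ens{ar}$ and $\ens{b}$ encode $r$ and $b$ in \emph{unary}, the values $r,b$ contribute to the length of the input name, so the iteration count $\bigO\big(d^{2}(b+r+n+\len{c})\big)$ together with the $\poly{n,r,b,\len{c}}$-bit arithmetic per step give a running time polynomial in the total input length; hence the reduction is \emph{fully} polynomial-time and $\wmemrep[d]\enp{ar}\enp{b}\pleq^{\bcb}\wopt[d]$ follows. Both guarantees are essential: the outer radius supplies $E_{0}$ and bounds the bit-sizes handled, while the inner ball supplies both the geometry behind the separation step and the volume threshold at which the loop may halt. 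The main obstacle is the error bookkeeping of the first step — one must choose all intermediate precisions so that the single binary search, which on its own does not pin the supporting normal when $\boundary S$ is far from smooth, nonetheless yields a $\gamma$ separating $y$ from the $2^{-m}$-shrinkage (and not merely from $S$), and so that the consequent shallowness of the cuts still leaves the ellipsoid volume shrinking at the required geometric rate.
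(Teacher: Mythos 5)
This statement appears in the paper as a cited \emph{Fact} (\cite[Cor.~4.3.12]{GLS88}) with no proof of its own, so you are in effect reconstructing the Gr\"otschel--Lov\'asz--Schrijver argument rather than matching a proof in the text. Your skeleton (weak membership $\to$ weak separation/violation $\to$ weak optimisation, ellipsoid method for the last step) is the correct decomposition, and your observation about \emph{why} the result lands as a \emph{fully} polynomial-time reduction here --- the enrichments $\ens{ar}$ and $\ens{b}$ place $r$ and $b$ in unary, so the iteration count and arithmetic precision are polynomial in the name's length rather than merely ``oracle-polynomial'' in the loose GLS sense --- is exactly the point that turns the cited corollary into an instance of $\pleq$.

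The genuine gap is in your account of the step from weak membership to weak separation. A binary search on $[a,y]$ does locate a point $z$ near $\boundary S$, but knowing one boundary point together with the inner ball does \emph{not} let you read off an (even approximate, even shallow) separating normal: the direction $(y-z)/\norm{y-z}$ need not separate $y$ from the $2^{-m}$-shrinkage of $S$. Concretely, let $S$ be the triangle with vertices $(0,0)$, $(10,0)$, $(0,10)$, take $a=(1,1)$ and $y=(3,-\delta)$ for small $\delta>0$: the search yields $z\approx(2.8,0)$ and $\gamma\propto y-z$ has large positive first coordinate, so $\trsp{\gamma}x$ at $x\approx(10,0)$ exceeds $\trsp{\gamma}y$ by a constant, and $\gamma$ is not a valid weak-separation output. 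What \cite[Thm.~4.3.2]{GLS88} (the Yudin--Nemirovskii result underlying Cor.~4.3.12) actually does is run an ellipsoid method \emph{driven by the weak-membership oracle itself}, extracting the separating hyperplane from the geometry of the terminal ellipsoid rather than from a single boundary point; this is the technically hard core of Chapter~4 and the place where the dependence on $r$ and $b$ genuinely enters. Your second step and the complexity bookkeeping are sound once this first step is replaced by the real argument.
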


\begin{lem} \label{lem:wmem-wopt}
Define $\polar{} \colon \closedset[d] \to \closedset[d]$ through
$S \mapsto \polar{S} \dfeq
	\st[0]{y \in \IR^d}{\xall{x}{S} \trsp{y} x \leq 1}$,
and call $\polar{S}$ the \emph{polar} of $S$
(a well-known concept in convex geometry and optimization;
cf.~\cite[\secref{4.1}]{BL00}, \cite[\secref{8.2}]{BCKO08}).
Further, call $S$ \emph{centered} if $0 \in \inner{S}$.
\begin{enumerate}
\item For all $S \in \clset[d]$ and $r,b \in \IZ$,
	$\cball(0,2^{-r}) \subseteq S$ implies
	$\polar{S} \subseteq \cball(0,2^{r})$
	and $S \subseteq \cball(0,2^{b})$ implies
	$\cball(0,2^{-b}) \subseteq \polar{S}$.
\item Let $S \in \KCR[d]$ be centered.
	Then $\polar{S}$ is also contained in $\KCR[d]$, centered,
	and $S = \dpolar{S}$.
\item Let $\mathcal{Z} \dfeq
	\st[0]{S \in \KCR[d]}{S \text{ centered}\,}$.
	Further define an enrichment (``inner radius of centered sets'')
	$\ens{r_0} \dffn S \mmapsto
		\st[0]{ \uintrep(r) }{ \cball(0,2^{-r}) \subseteq S }$.
	Then $\polar{}|_{\mathcal{Z}}$ is
	$\big(
		\wmemrep[d] \enp{r_0} \enp{b}, \distrep[d] \enp{r_0} \enp{b}
	\big)$-computable in polynomial time.
\end{enumerate}
\end{lem}

\begin{proof}[of \cref{s:convex-opt}]
Let $\enc[1]{\phi,a,0^r,0^b}$ be a $\wmemrep[d] \enp{ar} \enp{b}$-name of
$S \in \mathcal{Z}$.
Then $\phi'(q,0^n) \dfeq \phi(q-a,0^n)$ gives a
$\wmemrep[d] \enp{r_0} \enp{b}$-name $\enc[1]{\phi',0^{r},0^b}$ of
$S' \dfeq \st{x-a}{x \in S}$.
Since $S'$ is centered and therefore contains $0$ as an inner point,
\cref{lem:wmem-wopt}(3) can be applied to get a
$\distrep[d] \enp{r_0} \enp{b}$-name $\enc[1]{\phi'',0^{r''},0^{b''}}$ of
$\polar{S'}$ out of $\phi'$ with $r'' \dfeq b$ and $b'' \dfeq r$.
Use the reduction $\distrep[d] \pleq \wmemrep[d]$ and apply
\cref{lem:wmem-wopt}(3) once again to get a $\distrep[d]$-name $\phi'''$
of $\dpolar{S'} = S'$ (by \cref{lem:wmem-wopt}(2)) out of $\phi''$.
The final translation $S' \mapsto \st{x+a}{x \in S'}$ through
$\psi(q,0^n) \dfeq \phi'''(q+a,0^n)$ yields a $\distrep[d]$-name $\psi$ of $S$.
\qedhere
\end{proof}

\begin{figure}
	\centering
	\begin{minipage}[t]{0.39\textwidth}
      \centering
      \includegraphics{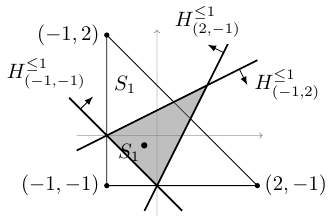}
      {\small (a) Illustrating the definition of $\polar{S}$.}
    \end{minipage}
    \qquad
    \begin{minipage}[t]{0.5\textwidth}
      \includegraphics{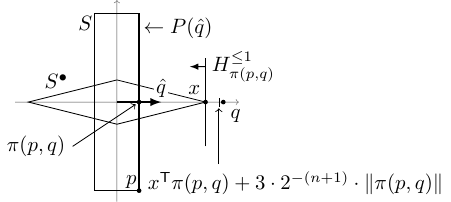}\\
      {\small (b)
		Optimization in direction of $\hat{q}$:
		Pick an arbitrary point $p \in P(\hat{q})$
		\emdash the set of optimal points \wrt $\hat{q}$ \emdash
		and project it onto $\hat{q}$.
		The resulting point $\pi(p,q)$ then allows to recover
		the distance of $q$ to $\polar{S}$ from
		$H^{=1}_{\pi(p,q)}$.
	}
    \end{minipage}
	\caption{Construction of and argumentation using polar sets.}
    \label{fig:dual-set-def-and-polar-close}
\end{figure}

The key ingredient in the proof of \cref{lem:wmem-wopt} is to take the ratio
$2^{-r}/2^b$ into account.
If done correctly, this then ensures that we get a sufficiently good
approximation of a bounding hyperplane $H^{\leq 1}_{p'}$ from which the
distance \emdash and hence a $\distrep$-name \emdash can be easily calculated.
We wrap the necessary technical details in the following statement.

\begin{prop} \label{prop:wopt-details}
Let $S \in \KCR[d]$ be centered.
Further let $r \in \IZ$ and $b \in \IZ$ be inner and outer radius parameter,
respectively.
\begin{enumerate}
\item Let $\phi$ be a $\wopt[d]$-name of $S$.
	Then $p \dfeq \phi(q,0^m)$ satisfies
	\[
		\some{p_\ast}{S}{
			p_\ast \in \cball(p,2^{-n})
			\text{ and }
			\xall{x}{S} \trsp{q}x \leq \trsp{q}p_\ast
		}
	\]
	if $m \geq n + \abs{b} + \abs{r} + 1$.
\item Denote by $\pi \dffn (p,q) \mapsto
	\trsp{p} q \cdot 1/(\trsp{q} q) \cdot q$ the projection of
	$p \in \IR^d$ onto the line spanned by $q \in \IR^d$.
	If $\pi(p,q)$ with $p \in \boundary{S}$ is approximated by
	$p' \in \ID^d$ with precision $m \geq n + \abs{b} + \abs{r} + 1$, then
	\[
		\dH\big(
			\cball(0,2^b+2^r) \cap H^{=1}_{p'},\,
			\cball(0,2^b+2^r) \cap H^{=1}_{\pi(p,q)}
		\big) \leq 2^{-n}
		\eqnsp .
	\]
	Stated differently, vectors $\pi(p,q)$ and $p'$ describe approximately the
	same hyperplane (with respect to bounds $b$ and $r$).
\end{enumerate}
\end{prop}

\begin{proof}[of \cref{lem:wmem-wopt}]
Note that the polar of a closed set is closed, too, as per definition
it is the intersection of closed halfspaces, \ie,
$\polar{S} = \bigcap_{x \in S} H^{\leq 1}_x$ with halfspaces
$H^{\leq 1}_x \dfeq \st{y \in \IR^d}{\trsp{x} y \leq 1}$.
Statement (2) now is a special case of the \emph{Bipolar Theorem}
(cf.~\cite[Thm.~4.1.5]{BL00}), while statement (1) follows by examining the
proof of the aforementioned theorem (cf.~\cite[Exercise 4.1(5)]{BL00}).

Concerning (3):
Let $\enc{\phi,0^r,0^b}$ be a
$\wmemrep[d]|^{\mathcal{Z}} \enp{r_0} \enp{b}$-name of $S \in \mathcal{Z}$,
$q \in \ID^d$ and $n \in \IN$.
Apply \cref{s:gls-wmem-wopt} to the above name to get a
$\wopt[d]|^{\mathcal{Z}} \enp{r_0} \enp{b}$-name $\enc{\phi',0^r,0^b}$
of $S$.

Notice beforehand that \cref{prop:wopt-details} allows us to describe all of
the following steps in terms of \emph{exact} computations while they actually
have to be carried out approximately.
To get the approximative (and hence correct) version, use the aforementioned
results and the closure of polynomial-time function computation under
composition.

As already mentioned in \cref{rem:wopt-properties}, optimization in a certain
direction in the usual sense is obtained from a $\wopt$-name by
first normalizing the respective cost vector; \ie,
$q' \dfeq q/\norm{q}$ in this setting.
Now take the $\wopt$-name $\phi'$ and apply it to $q'$ to obtain an
optimal point $p \dfeq \phi'(q',0^m)$.
The point $p$ itself usually does not describe the distance between $q$ and
$\polar{S}$ appropriately as depicted in \cref{fig:dual-set-def-and-polar-close}b, but its
projection onto $q'$ encodes precisely this information.
To this end, let $p' \dfeq \pi(p,q')$ (use \cref{prop:wopt-details}(2) to
get a good approximations) and observe that the distance of
$q$ from $\polar{S}$ can be obtained from the distance of $q$ to the
hyperplane $\st[0]{y \in \IR^d}{\trsp{y} p' = 1} = H^{=1}_{p'}$ under the
premise that $\trsp{q} p' \geq 1$.
More concretely, a valid $\distrep[d]|^{\mathcal{Z}}$-name $\varphi$ of
$\polar{S}$ can be defined as follows:
\begin{align*}
	\label{eqn:wmem-wopt-proj}
	\varphi(q,0^n) \dfeq 0
		\text{ if }
		\trsp{q} p' \leq 1 + 3 \cdot 2^{-(n+1)} \cdot \norm{p'}
	\eqnsp ; \quad
	\varphi(q,0^n)
		\dfeq (\trsp{q} p' - 1) \cdot \norm{p'}^{-1}
		\text{ otherwise}
	\eqnsp .
\end{align*}
\qedhere
\end{proof}

\begin{proof}[of \cref{prop:wopt-details}]
Considering (1):
First note that $\dH\big( S, \cball(S,-2^{-m}) \big) \leq 2^{-(n+1)}$
if $m \geq n + \abs{b} + \abs{r} + 1$ which follows by a geometric argument:
Observe that $S$ does contain a filled right-angled triangle $T$ with
adjacent side of length $\leq 2^b$ and opposite side of length $\geq 2^{-r}$.
The ratio $2^{-r}/2^b$ bounds how ``steep'' this triangle can be.
Stated differently, for all $x \in \boundary{T}$ there exists a
$y \in \clb{T}{-2^{-m}}$ with $\norm{x-y} \leq 2^{-(n+1)}$ for $m$ as above;
which implies the above statement about the Hausdorff-distance of
$\clb{S}{-2^{-m}}$ to $S$.

The definition of $\wopt$ now implies that each almost optimal
$p \in \cball(S,2^{-m})$ fulfills $\trsp{q}x \leq \trsp{q}p + 2^{-m}$
for all $x \in \cball(S,-2^{-m})$.
Combine this bound with the first argument over the Hausdorff distance to
obtain the claimed result, namely that there exists an optimal point
$p_\ast \in S \cap \cball(p,\delta)$ with
$\delta \dfeq (2^{-(n+1)} + 2 \cdot 2^{-m})/2 < 2^{-(n+1)}$ \wrt optimization
direction $q$.

Considering (2):
First note that $p \in \boundary{S}$ implies $\norm{p} \geq 2^{-r}$, and also
$\norm{p} \leq 2^{b}$ due to $S \subseteq \cball(0,2^b)$.
Without loss of generality, let $\pi(p,q) \dfeqrev (\lambda,0,\dots,0)$ and
$p' \dfeq (\lambda \pm 2^{-m},0,\dots,0)$ (the latter being a boundary case of
$p' \in \cball(\pi(p,q),2^{-m})$) with $2^{-r} \leq \lambda \leq 2^b$
(as noted before).
In this particular case $\pi(p,q)$ and $p'$ are codirectional which simplifies
the following argument.
Note that $\trsp{\pi(p,q)} x = 1$ if $x_1 = 1/\lambda$,
and $\trsp{p'}x' = 1$ if $x'_1 = 1/(\lambda+2^{-m})$.
Then the codirectionality of $\pi(p,q)$ and $p'$ imply that
$H^{=1}_{\pi(p,q)}$ and $H^{=1}_{p'}$ are parallel, and they are of Hausdorff
distance $\abs{1/\lambda - 1/(\lambda+2^{-m})}$.
By $2^{-r} \leq \lambda \leq 2^b$.
Therefore, $\abs{1/\lambda - 1/(\lambda+2^{-m})} \leq 2^{-n}$ by
$m \geq n + \abs{r} + \abs{b} + 1$, and thus implies
\[
	\dH\big(
		\cball(0,2^{b}+2^{r}) \cap H^{=1}_{p'},\,
		\cball(0,2^{b}+2^{r}) \cap H^{=1}_{\pi(p,q)}
	\big) \leq 2^{-n}
	\eqnsp .
	\vspace*{-2em}
\]
\qedhere
\end{proof}

Both the enrichments ($a$, $r$ and $b$) as well as the restriction to bounded
convex bodies $\bcb$ were necessary to make \cref{s:convex-opt} work, as
we summarize in the following statement.

\begin{prop}[enrichments of {$\wmemrep$}] \label{thm:enrichments-wmem}
For all $d \in \IN$ we get the following negative results:
\begin{enumerate}
\item The multi-valued operation
	$\mathsf{Bound} \colon \KR[d] \mto \IZ$,
	$\KR[d] \ni S \mmapsto \st{b \in \IZ}{S \subseteq \cball(0,2^b)}$
	is $(\wmemrep[d] \enp{ar}, \unaryrep)$-discontinuous.
	The analogous fact holds for $\setrep[d]$
	(cf.~\cite[Exercise 5.2.4]{Weih00}).
\item Convexity is crucial for \cref{s:convex-opt} to hold:
	$\wmemrep[d] \enp{ar} \enp{b} \not\preceq^{\KR} \distrep[d]$.
\item \label{item:two-third-advice-parameter}
	Addendum to the previous point:
	Convexity helps only in the presence of all of the above
	enrichments; \ie, there is \emph{no} machine operating on $\bcb[d]$
	that provided with $\wmemrep[d] \enp\chi_1 \enp\chi_2$ computes
	$\chi_3$ in \emph{polynomial time} for any permutation
	$\{ \chi_1,\chi_2,\chi_3 \}$ of $\{ \ens{a},\ens{r},\ens{b} \}$.
\item Over $\bcb[d]$, advice parameters $\ens{a}$, $\ens{r}$ and $\ens{b}$
	are uniformly computable from $\setrep[d]$.
	The same statement fails, however, for computability in
	\emph{polynomial time}.
\end{enumerate}
\end{prop}

\begin{proof}\leavevmode
\begin{enumerate}
\item Let $M^?$ be a hypothetical OTM to compute $\mathsf{Bound}$.
	Further, let $\enc{\phi,a,0^r}$ be a $\wmemrep[d]|^{\KR} \enp{ar}$-name
	for $S \in \KR[d]$.
	Machine $M^?$ terminates (in finite time) and produces a potential bound
	$b$.
	During its computation it can only have made finitely many queries to
	$\phi$ and thus has checked only points in, say,
	$\cball_{\normdot}(0, 2^{b'})$ for some $b' \in \IZ$.
	Therefore, $M^?$ would have produced the same potential bound $b \leq b'$
	if $S$ were replaced with the set
	$S' \dfeq S \cup \cball_{\normdot}(p, 1)$ for some point $p$ satisfying
	$\cball_{\normdot}(0, 2^{b'}) \cap \cball_{\normdot}(p,1) = \emptyset$.
\item We prove the stronger statement
	$\wmemrep[d] \enp{ar} \enp{b} \not\preceq^{\KR} \setrep[d]$.

	Let $S \dfeq \cball_{\normdot}(0, 2^b, 2^b-2^{-3})$, and be
	$\phi \dfeq \enc{\phi',a,0^r,0^b}$ with $\phi'(q',n') \dfeq \chi_{S}(q')$
	a concrete $\wmemrep[d]|^{\KR} \enp{ar} \enp{b}$-name of $S$.
	Further let $M^?$ be a hypothetical OTM translating any $\phi$ into a
	$\setrep[d]|^{\KR}$-name.
	The discrete inputs (tailor-made for the adversary argument) are
	$q \dfeq (0,\dots,0)$ and $n \dfeq 3$.
	On this input, $M^?$ does asks queries of precision at most $m \geq \abs{r}$.
	Therefore, it states ``0'' as the correct answer a
	$\setrep[d]$-name would have given on $\enc{q,0^n}$
	because of $\cball(0, 2^{-3+1}) \cap S = \emptyset$.
	$M^?$ surely produces the right answer for $S$, but it also does so on the
	slightly modified (adversary) set $S' \dfeq S \cup \cball(0, 2^{-(m+2)})$
	for all $\wmemrep[d]|^{\KR}$-names $\phi''$ for $S'$ with
	$\xall{p}{\ID^d} \xall{k}{\IN,\, k \leq m} \phi'(p,0^k)=\phi''(p,0^k)$;
	thus misleading $M^?$ to produce the wrong answer ($0$ instead of $1$).
\item Parameter $b$ can not be computed in polynomial time from $r$, $a$
	(and $n$) because the outer radius of a set $S$ is simply not bounded in
	this (local) information about $S$.
	Finding an inner point $a$ from $b$ and $r$ requires to query a
	$\wmemrep[d]|^{\KCR}$-name $\phi$ in roughly $2^{d \cdot \max\{0,b+r\}}$
	many points.
	To see why, consider the collection of adversary sets
	$\st[0]{S_{p} \dfeq \cball(p,2^{-r})}{p \in \cball(0,2^b) \cap \ID^d_{r+1}}$
	and observe that $S_p$ can only be distinguished from any other $S_{p'}$
	if $\phi$ is evaluated in $p$ \emph{and} $p'$ with precision $r+1$.
	An analogous argument shows why an inner radius parameter $r$ can not be
	bounded in terms of $a$ and $b$ only.
\item Computability of $b$, $a$ and $r$:
	Let $\phi$ be a $\wrtn{\setrep[d]}{\normdot}$-name of $S \in \KCR[d]$.
	An outer radius parameter $b$ exists since $S$ is bounded, and it is
	computable from $\phi$ by exploiting convexity:
	Starting at $0$, systematically ask queries $\phi(p,0^1)$ with
	$p \in \ID^d_1$ in order to find a value $b \geq 1$ such that
	(a) $\xsome{p}{\ID^d_1 \cap \clb(0){0}{2^{b-1}}} \phi(p,0^1) = 1$ and
	(b) $\xall{p}{\ID^d_1 \cap \clb(0){0}{2^b}} \phi(p,0^1) = 0$.
	It then follows $S \cap \clb(0){0}{2^{b-1}} \neq \emptyset$,
	and $S \subset \clb(0){0}{2^b}$ is implied by using the convexity of $S$.

	From $b$ one can find a point $a$ and also an inner radius parameter $r$
	by gradually increasing the precision:
	Starting with $r' \dfeq -b+3$, increase $r'$ until a point
	$p' \in \ID^d_{r'}$ in $\clb(0){0}{2^b}$ is found such that all
	$p \in \ID^d_{r'} \cap \clb(0){p'}{2^{-r'+3}}$ satisfy
	$\phi(p,0^{r'}) = 1$.
	Then $\clb(0){p'}{2^{-r'}} \subseteq S$ follows by convexity of $S$.
	Now choose $a \dfeq p'$ and $r \dfeq r'$.

	Non-polynomial-time computability:
	Any (deterministic) computation of $b$ and $r$ must necessarily be
	unbounded in $n$ (and, obviously, $|\phi|$), simply because the
	\emph{values} of both $b$ and $r$ are usually unbounded in $n$.
	The same is true for an inner point $a$ since it depends on (the
	unknown) inner radius parameter $r$; take $S = \cball(a,2^{-r})$,
	$a \in \ID^d_r \setminus \ID^d_{r-1}$ as an example.
\qedhere
\end{enumerate}
\end{proof}

\Cref{thm:enrichments-wmem}(2) covers, in fact,
\emph{several} constellations of enrichments of $\wmemrep$ because it asserts that,
informally, \emph{``if we can not deduce $\chi_3$ from $\wmemrep[d]$ and
two-thirds of other information ($\chi_1$ and $\chi_2$), then particularly
neither none nor one-third of it would help, too''}.

\section{Geometric operations on sets} \label{sec:operators}

By definition, both the computability and complexity of an operator is
inextricably linked to the choice of representations of elements it is based
on; examples can be found in \cite{Brattka99}, \cite[Thm.~5.1.13]{Weih00},
\cite{Ziegler02}, \cite{ziegler2004linalg} (for computability), and
\cite{ZM08} (for complexity). While the computability is pretty well-studied,
the complexity has been left
behind as, again, a result of the missing generic framework to formulate
explicit complexity bounds in.
In this section, we do our small part to shine a light on the complexity of
$\dsoch$ (finding \emph{some} point in a set), set operators $\dsocup$,
$\dsocap$ and $\mathsf{Projection}$, and basic function operators $\dsoinv$
(local inverse of a function) and $\dsoimg$.

\subsection{\texorpdfstring{{$\dsoch$}}{Choice}: Finding a point in a set}

We analyze the complexity to compute \emph{some} (multi-valued) member of a
set $S$, given only a name of $S$; \ie, the complexity of the in general
uncomputable (cf.~\cite{BG11,BBP12}) operator
$\dsoch \colon \closedset \mto \bigcup_{d\in\IN} \IR^d$,
$\closedset \ni S \mmapsto S$.
It is an interesting operator because, intuitively, at least this operator
should be (parameterized) polynomial-time computable for reasonable
representations of sets; like the operator
$\mathsf{Evaluation} \colon (f,x) \mapsto f(x)$ is in the realm of continuous
functions \cite{KawamuraCook}.

The following statement indeed proves parameterized complexity results for
$\dsoch$.
In particular, $\setrep$ enriched with $\ens{b}$ suffices, while even
more information is necessary for $\wmemrep$.

\begin{thm}[complexity of {$\dsoch$}] \ \label{thm:choice-point}
\begin{enumerate}
\item On compact sets, $\dsoch|_{\compset}$ is
	fully polynomial-time $(\setrep[d] \enp{b}, \realrep[d])$-computable.
\item 
	$\dsoch|_{\KR}$ is $(\wmemrep[d] \enp{r} \enp{b},\realrep[d])$-computable
	in time polynomial exponential in $\abs{b} + \abs{r}$.
	This bound also is sharp (\ie, no fully polynomial-time bound holds).
\end{enumerate}
\end{thm}

\begin{proof}
\begin{enumerate}
\item Let $S \in \compset[d]$, $\enc{\phi,0^b}$ be a $\setrep[d] \enp{b}$-name
	of $S$ with $b \geq 0$, and $n \in \IZ$.
	A point $q \in \ID^d$ with $\phi(q,0^n) = 1$ can then be found by the
	following iterative procedure.
	First, let $p_{0} \dfeq 0$.
	Now assume that $p_{i-1}$ for $1 \leq i \leq n+b+2$ is already given.
	Then deterministically pick one point $p_i$ out of
	$\cball(p_{i-1},2^{b-(i-1)}) \cap \ID^d_{i-b}$ with
	$\phi(p_i,0^{i-b}) = 1$.
	Then $p_{n+b+2}$ is guaranteed to be $2^{-n}$-close to $S$.
\item Let $\enc{\phi,0^r,0^b}$ be an $\wmemrep[d]|^{\KR} \enp{r} \enp{b}$-name
	of $S \in \KR[d]$.

	Upper bound:
	Perform an exhaustive search on $\ID^d_{r+1}(2^b)$. This way
	$\phi(p,0^{r+1}) = 1$ is guaranteed by $2^{-r}$ being an inner
	radius of $S$ for some point $p \in \ID^d_{r+1+n}(2^b)$. Moreover, such
	a point will be found and is close-enough (in the sense of
	representation $\realrep[d]$) to $S$.

	Sharpness:
	Consider the class of sets
	$\mathcal{B} \dfeq \st{ \cball(a,2^{-r}) }{
		a \in \ID^d_{r+n'}(2^b), n' \in \IZ } \subset \KR[d]$.
	The sharpness then is a consequence of \cref{thm:enrichments-wmem}(2),
	$\wmemrep[d] \enp{r} \enp{b} \not\pleq^{\mathcal{B}} \setrep[d]$:
	Exponentially many points $p \in \ID^d$ have to be considered in order to
	tell any of the above sets apart.
\qedhere
\end{enumerate}
\end{proof}

\subsection{Binary union}

\cite[Lem.~2.7]{ZM08} proved $\dsocup$ to be polynomial-time computable over
$\gridrep$ \wrt an output-sensitive measure of complexity:
Given two $\gridrep[d] \pequiv^{\compset} \setrep[d] \enp{b}$-names
$\phi_1$ and $\phi_2$, taking the maximum over the outer radii parameter as
well as the maximum over the answers at any point,
$\phi(q,0^n) = \max_i \phi_i(q,0^n)$, constitutes a name of the union.
Linear-time algorithms for $\setrep[d]$ and $\distrep[d]$ follow analogously.

However, the same method applied to $\wmemrep[d]$ over regular sets does
\emph{not} yield a valid $\wmemrep[d]$-name of the union.
As it turns out, $\dsocup$ is even \emph{uncomputable} over $\wmemrep[d]$.
Convexity, again, proves to be the key to render $\dsocup$ computable,
even in \emph{polynomial time}.

\begin{thm}\ %
\begin{enumerate}
\item $\dsocup|_{\KR \times \KR}$ is
	$\big((\wmemrep[d] \enp{r} \enp{b}) \times (\wmemrep[d] \enp{r} \enp{b}),
   	\wmemrep[d]\big)$-\emph{discontinuous}.
\item On $\cb[r]$, however, $\dsocup|_{\cb \times \cb}$
	becomes polynomial-time
	$(\wmemrep[d] \times \wmemrep[d], \wmemrep[d])$-computable.\footnote{%
		Keep in mind that the result may not be convex.
	}
\end{enumerate}
\end{thm}

\begin{figure}
   \centering
	\begin{minipage}[t]{0.46\textwidth}
		\includegraphics{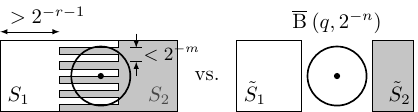}\\
		\small (a)
        $\dsocup|_{\KR \times \KR}$ is $\wmemrep[d]$-discontinuous:
        The adversary set on the left is indistinguishable from the right one
		provided the ``stripes'' are just small enough to \emph{not} contain a
		ball of radius $2^{-m}$.
      \end{minipage}
	\qquad
    \begin{minipage}[t]{0.42\textwidth}
      \includegraphics{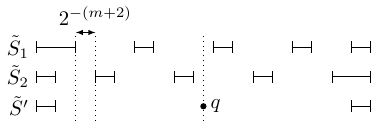}\\
      \small
      (b) Alternately cut small pieces of ``size''
		$2^{-(m+2)}$ off of two copies of the unit interval (in general:
		unit hypercube) to obtain an empty intersection around point $q$.
      \end{minipage}
	\caption{Adversary arguments, proving the discontinuity of
      (a) $\dsocup$ over $\wmemrep$ and (b) $\dsocap$ over $\setrep$.}
    \label{fig:union-wmem-disc-and-intersection-setrep-adv}
\end{figure}

\begin{proof}
\begin{enumerate}
\item The basic adversary construction of sets $S_i$ and $\tilde{S}_i$ is
	depicted in \cref{fig:union-wmem-disc-and-intersection-setrep-adv}a.
	First choose $S_i$ such that $\cball(q,2^{-n+1})$ has an empty intersection
	with $S_1 \cup S_2$ (\eg, as a simple rectangle/cuboid as depicted).
	Further construct $\tilde{S}_i$ as follows:
	\begin{enumerate*}[label=(\emph{\alph*})]
	\item $r_i$ is an inner radius parameter of $\tilde{S}_i$;
	\item the ``teeth'', being of length $> 2^{-n+2}$, are placed
		around $q$ as depicted;
	\item each rectangle/cuboid is of width $\leq 2^{-(m+1)}$,
	\end{enumerate*}
	where $m \in \IN$ marks the maximal precision a hypothetical OTM
	$M^?$ for $\dsocup$ asks on input $\enc{q,0^n}$.
	Now the only (and in this case correct) choice $M^{\enc{\phi_1,\phi_2}}$
	started with $\enc{q,0^n}$ has is to assert $0$ since
	$\cball(q,2^{-n+1}) \cap (S_1 \cup S_2) = \emptyset$.
	Now exchange the names for $S_i$ by names $\tilde{\phi}_i$ for
	$\tilde{S}_i$ which coincide with the previous ones on all queries up to
	precision $m$.
	Then $M^{\enc{\tilde{\phi}_1,\tilde{\phi}_2}}$ sill asserts $0$ in this
	case, although now $\ball(q,2^{-n}) \subset \tilde{S}_1 \cup \tilde{S}_2$
	proves $1$ to be the only correct answer.
\item Let $\phi_i$ be $\wmemrep[d]$-names of $S_i$, $i = 1,2$.
	Claim: Then $\phi'$, defined through
	\[
		\phi'(q,0^n) \dfeq \max_{i = 1,2} \st[1]{ \phi_i(p,0^{n+2}) }{
			p \in B \dfeq \cball(q, 3 \cdot 2^{-(n+2)}) \cap \ID^d_{n+2}
		}
		\eqnsp ,
	\]
	constitutes a $\wmemrep[d]$-name of $S' \dfeq S_1 \cup S_2$.

	Let $\cball(q,2^{-n}) \cap S' = \emptyset$.
	Then $\bigcup_{p \in B} \cball(p,2^{-(n+2)}) \subset S'$ must also have
	empty intersection with $S'$, hence $\phi_i(p,0^{n+2}) = 0$ for all
	$p \in B$ and $i \in \{1,2\}$.
	Now let $\cball(q,2^{-n}) \subseteq S'$.
	We prove the correctness of $\phi'(q,0^n) \dfeq 1$ by contradiction.
	To this end, assume $\phi'(q,0^n) = 0$, \ie,
	$\cball(p,2^{-(n+2)}) \not\subseteq S_1, S_2$ would have to hold for all
	$p \in B$.
	Because of $\cball(p,2^{-(n+2)}) \subset \cball(q,2^{-n}) \subseteq S'$
	and the convexity of $S_i$, there must be a $p' \in P$ such that
	$\cball(p',2^{-(n+2)})$ is contained entirely either in
	$S_1 \cap S_2$, $S_1 \setminus S_2$ or $S_2 \setminus S_1$.
	If $\cball(p',2^{-(n+2)})$ were contained in the first (convex) set, then
	we would get a contradiction because of
	$\cball(p',2^{-(n+2)}) \not\subseteq S_1,S_2$.
	If it were contained in (one of the connected regions of)
	$S_1 \setminus S_2$, then we would also get a contradiction to the
	assumption that $\cball(p',2^{-(n+2)}) \not\subseteq S_1$.
	The analogous argument also holds for the third set, thus proving
	$\wmemrep[d](\phi') = S'$.
\qedhere
\end{enumerate}
\end{proof}

\subsection{Binary intersection}

$\dsocap$ proves to be discontinuous for $\distrep$ over the class
$\closedset$ of closed sets \cite[Ex.~5.1(14)]{Weih00} by the usual adversary
argument:
Whenever a (hypothetical) algorithm decides upon a certain point $x$ to be a
member of the intersection, we can slightly modify the original sets by
excluding points from a small neighborhood of $x$, rendering $x$ to be far
off the actual intersection and therefore leading any hypothetical OTM to
produce a wrong answer.

Even when requiring the intersection of two \emph{regular} sets to be regular
again, this discontinuity remains \cite[\secref{3}]{Ziegler02}.
We show how convexity helps to establish \emph{computability}, and
how the complexity is bounded in terms of an \emph{inner radius of the
intersection}.

\begin{thm}
Let
\begin{align*}
	\mc{D} & \dfeq \st[1]{(S_1,S_2) \in \regset[d] \times \regset[d]}{
		S_1 \cap S_2 \in \regset[d]}
	\eqnsp ; \\
	\mc{E} & \dfeq \st[1]{(S_1,S_2) \in \KCR[d] \times \KCR[d]}{
		S_1 \cap S_2 \in \KCR[d]}
	\eqnsp ,
\end{align*}
and further define the enrichment $\ens{r'}$ to encode an inner radius
parameter of the intersection of two sets, \ie,
$\ens{r'} \dffn (S_1,S_2) \in \mc{D} \mmapsto
\st[1]{\uintrep(r')}{\xsome{a}{\IR^d} \cball(a,2^{-r'}) \subseteq S_1 \cap S_2}$.
\begin{enumerate}
\item $\dsocap|_\mc{D}$ is
	$(\reptpl[d] \times \reptpl[d], \reptpl[d])$-\emph{dis}continuous for
	all representations $\reptpl$ from \cref{def:set-representations}.
\item $\dsocap|_{\mc{E}}$ is
	$(\reptpl[d] \times \reptpl[d], \reptpl[d])$-computable for all
	representations $\reptpl$ from \cref{def:set-representations}.
\item $\dsocap|_{\mc{E}}$ is
	parameterized polynomial-time
	$\big((\reptpl[d] \enp{b}) \times (\reptpl[d] \enp{b}) \enp{r'},
		\reptpl[d]\big)$-computable
	for $\reptpl[d] \dfeq \setrep[d]$, and
	even
	fully polynomial-time computable for
	$\reptpl[d] \dfeq \wmemrep[d]$.
\end{enumerate}
\end{thm}

Notice the duality in the $\wmemrep[d]$-result for $\dsocap|_\mc{E}$
compared with $\dsocup$:
While for the first a correct answer was easy to produce when the point was
not deep-enough in at least one of the two sets, it was easy to produce a
correct answer for the latter if the point resided deep in both sets.
Further note that this is the direct opposite of what holds for $\setrep$;
an indication that $\setrep$ is dual to $\wmemrep$, just like the union of
sets is the lattice-dual operation of intersection.

\begin{proof}
\begin{enumerate}
\item We only show the $\setrep[d]$-discontinuity of $\dsocap$ (the proof
	loosely follows \cite[Thm.~5.1.13]{Weih00});
	the remaining statements follow by the same construction.

	Let $S_1, S_2 \dfeq [-1,1]^d$ be the sets provided to $\dsocap$
	through $\setrep[d]$-names $\phi_1,\phi_2$.
	It follows that any $\setrep[d]$-name $\phi'$ of
	$S' \dfeq S_1 \cap S_2 = [-1,1]^d$ has to satisfy $\phi'(q,0^n) = 1$
	for all $q \in \ID^d \cap [-1,1]^d$ and $n \in \IN$.
	Assume that a name for $S'$ is computed by a hypothetical OTM $M^?$
	for $\dsocap$, and let $m \in \IN$ be the maximal precision of queries
	$M^{\enc{\phi_1,\phi_2}}$ asks when started on input $\enc{q,0^n}$
	with $q \dfeq (0,\dots,0)$.
	Now exchange $\phi_i$ by a $\setrep[d]$-name $\tilde{\phi}_i$ for
	$\tilde{S}_i$ (depicted in \cref{fig:union-wmem-disc-and-intersection-setrep-adv}b)
	which fulfills $\tilde{\phi}_i(p,0^k) = \phi_i(p,0^k)$ for all
	$p \in \ID^d$ and $k \leq m$.
	Then $M^{\enc{\tilde{\phi}_1,\tilde{\phi}_2}}\enc{q,0^n}
		= M^{\enc{\phi_1,\phi_2}}\enc{q,0^n} = 1$
	although $\tilde{S}' \dfeq \tilde{S}_1 \cap \tilde{S}_2
		= \cball_{\ndot[\infty]}(q,1,1-2^{-n+1})$ and therefore
	$\cball(q,2^{-n+1}) \cap \tilde{S}' = \emptyset$ for $n \geq 3$.
\item Apply $\reptpl[d] \preceq^{\KCR} \setrep[d] \enp{r} \enp{b}$
	(\cref{fact:all-comp-equiv} + \cref{thm:enrichments-wmem}(3))
	to the domain-side and $\setrep[d] \preceq^{\KCR} \reptpl[d]$
	(\cref{fact:all-comp-equiv}) to the codomain-side, then use statement 3.
\item The second part, \ie with $\reptpl[d] \dfeq \wmemrep[d]$, has been
	proved in \cite[p.129]{GLS88}.

	\begin{figure}[htb]
		\centering
		\includegraphics{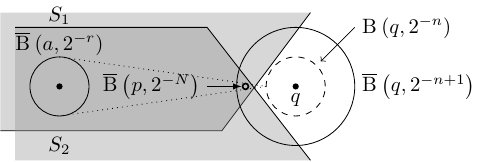}
		\caption{%
			Intersecting two \emph{convex} bodies when additional information
			about their intersection (an inner ball and an outer radius) is
			given.
		}
		\label{fig:intersection-psi-convset}
	\end{figure}

	Let $\enc[1]{\phi_1,\phi_2,0^{b_1},0^{b_2},0^{r'}}$ be a
	$( \setrep[d] \enp{b} \times \setrep[d] \enp{b} ) \enp{r'}$-name of
	$(S_1,S_2) \in \mc{E}$.
	Due to convexity, $S' \dfeq S_1 \cap S_2$ only meets $\clb{q}{2^{-n}}$ if
	$S'$ contains a polyhedron with precisely one vertex lying in $\ball(q,2^{-n})$.
	Therefore, we derive a lower bound $N \in \bigO(n + b_1 + b_2 + r')$ on
	the inner radius parameter of $S'$ close to $q$;
	\ie, a radius that guarantees the
	existence of a ball, say $\clb{p}{2^{-N}}$, which is contained in $S'$ and
	also is sufficiently close to $q$; \ie,
	$\clb(0){p}{2^{-N}} \subseteq S' \cap \clb(0){q}{2^{-n+1}}$.
	This argument is also depicted in \cref{fig:intersection-psi-convset};
	and it yields the following bound on $N$:
	\begin{align*}
		3/2 \cdot 2^{-n} \cdot \Big(
			2 \cdot 2^{-r'} \cdot (2^{\max\{b_1,b_2\}+1} - 2^{-r'} - 2^{-n})^{-1}
		\Big)
		\geq 2^{-(n + r' + \max\{b_1,b_2\})}
		\geq 2^{-N}
		\eqnsp .
	\end{align*}
	Finally, construct $\phi'$ by a local search around $q$:
	\begin{align*}
		\phi'(q, 0^n) \dfeq \max \st[2]{
				\min_{i = 1,2} \phi_i \big( p, 0^N \big)
			}{
				p \in \cball(q, 3/2 \cdot 2^{-(n+1)}) \cap \ID^d_N
			}
		\eqnsp .
	\end{align*}
	Due to the locality of this search, the number of points to be considered
	is exponential in $r$ and $\max\{b_1,b_2\}$, but polynomial in $n$.
\qedhere
\end{enumerate}
\end{proof}

\subsection{Projection operator}

For $d \in \IN$ and $d \geq e$ let $\dsoproj$ be the operator
\begin{align*}
  \dsoproj_{d,e} &\dffn \compset[d] \to \compset[e],\\
	S \mapsto \dsoproj_{d,e}(S) &\dfeq \st{x \in \IR^d}{\xsome{y}{\IR^{d-e}} (x,y) \in S}
	\eqnsp ,
\end{align*}
pointwise projecting a subset $S$ of $d$-dimensional Euclidean space down to
dimension $e$.\footnote{%
 	Note that the projection of a compact/convex/regular set is again compact/convex/regular.
}
Convexity again turns out to be the key to prove polynomial-time bounds.

\begin{fact}[{\cite[Thm.~3.2+Lem.~3.3]{ZM08}}]
	\label{thm:gridrep-projection}
\begin{enumerate}
\item Let $d \geq 2$.
	The statement ``if a set $S \in \compset[d]$ is polynomial-time
	$\gridrep[d]$-computable, then operator $\dsoproj_{d,1}(S)$ is
	polynomial-time $\gridrep[1]$-computable'' is equivalent to
	$\PTime = \NPTime$.
\item $\dsoproj_{2,1}|_{\KC}$ is polynomial-time
	$(\gridrep[2], \gridrep[1])$-computable.
\end{enumerate}
\end{fact}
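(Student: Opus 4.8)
I treat the two parts separately; the real work is the second one.

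For part~1, the equivalence with $\PTime=\NPTime$ follows by the adversary-encoding technique of \cref{s:Braverman-distrep} and of the proof of \cref{thm:distrep-not-polytime-norm-invariant}. For the hard direction, fix an $\NPTime$-complete set $N=\st{s\in\Sast}{\xsome{w}{\Sigma^{\len{s}}}\enc{w,s}\in P}$ with $P\in\PTime$, and place, for each $n$ and each $s\in\Sigma^n$, a small non-convex gadget over a dedicated $x_1$-subinterval of width $2^{-\poly{n}}$, subdivided into witness slices; shape the gadget so that the $x_1$-extent of $S$ inside $s$'s region is strictly --- and quantitatively, at scale $2^{-\poly{n}}$ --- larger when some $w$ satisfies $\enc{w,s}\in P$ than when none does. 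With $S$ the closure of the union of these gadgets together with one fixed segment making $S$ compact, $S$ is polynomial-time $\gridrep[2]$-computable --- a query at precision $m$ only touches strings of length $\bigO(m)$, for which membership in $P$ is decidable in time $\poly{m}$ --- whereas one query at precision $\poly{n}$ to a hypothetical polynomial-time $\gridrep[1]$-name of $\dsoproj_{2,1}(S)$ reads off whether $s\in N$, forcing $\PTime=\NPTime$. The converse is the binary search of \cref{thm:distrep-not-polytime-norm-invariant}: assuming $\PTime=\NPTime$, an $\NPTime$-oracle answers the $\exists$-question ``is some dyadic grid point certified near $S$ and to the right of $x_1=t$?'' for the given $\gridrep[2]$-name, and iterating this over thresholds $t$ and precisions yields the grid data of $\dsoproj_{2,1}(S)$.

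For part~2, fix $S\in\KC[2]$ with $\gridrep[2]$-name $\phi$. Its projection $\dsoproj_{2,1}(S)$ is a compact interval $[\alpha,\beta]$, and $\phi(\eword)$ is at once an outer radius parameter for it. Since a $\gridrep[1]$-name need only supply, for each precision $n$, \emph{some} $2^{-n}$-Hausdorff approximant in $\ID^1_n$, it suffices to approximate $\alpha=\min\st{x_1}{(x_1,x_2)\in S}$ and $\beta=\max\st{x_1}{(x_1,x_2)\in S}$ to error $\bigO(2^{-n})$ in time polynomial in $n$ and the outer radius parameter, and to output an appropriate discretization of the resulting interval. As in the proof of \cref{thm:setrep-gridrep-equiv}, $\phi$ answers approximate-membership queries --- distinguishing $\dist{S}(q)\le 2^{-m}$ from $\dist{S}(q)>3\cdot 2^{-m}$ --- with $\bigO(1)$ grid probes, so one effectively works with a $\setrep[2]\enp{b}$-name, and \cref{thm:choice-point}(1) furnishes a point $p_0$ lying (almost) in $S$ in polynomial time. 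I would then compute $\beta$ (the case of $\alpha$ being symmetric) by a bisection on a threshold $t$, at each step deciding ``does $S$ reach past $x_1=t$?'' Convexity is what makes such a step cheap: if $S$ does reach past $x_1=t$ at a point $q^\ast$, then the whole segment $[p_0,q^\ast]\subseteq S$ crosses the line $\{x_1=t\}$, so a point of $S$ on that line --- hence the verdict --- can be found by a companion one-dimensional bisection against the approximate-membership oracle, using only polynomially many probes.

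The main obstacle is precisely this inner step, and it is where convexity has to do all the work. A $\gridrep$-name is a purely ``positive-information'' oracle --- it exhibits, at each precision, finitely many grid points $2^{-m}$-close to $S$ and conversely covering $S$ to within $2^{-m}$, but it provides no separating hyperplanes --- so a blind search for the fibre of $S$ over a prescribed $x_1$-value would cost exponentially many probes, and indeed \emph{without} convexity the problem is $\NPTime$-hard by part~1. One could also hope to route through the weak-optimization machinery of \cref{s:convex-opt} (optimizing the linear functional $x\mapsto x_1$), but that seems to demand an inner ball, which is not polynomial-time available from $\phi$ by \cref{thm:enrichments-wmem}; so the argument instead exploits the two-dimensional structure --- the boundary of $S$ is a single convex curve, fibres are segments, and extreme points are convex combinations of located ones --- to replace exhaustive search by a logarithmic-depth one. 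What then remains is routine: reconciling the various ``$\bigO(2^{-m})$''-accuracy bounds with the Hausdorff tolerance in the definition of $\gridrep[1]$, and disposing of the degenerate (lower-dimensional) $S\in\KC[2]$, \ie a point or a segment, by hand.
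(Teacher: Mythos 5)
This statement is a \emph{Fact} that the paper cites as established from \cite[Thm.~3.2+Lem.~3.3]{ZM08} (Zhao \& M\"uller 2008); the paper does not give its own proof, only the one-line remark that the second item extends to $\dsoproj_{d,d-1}$ by the same method. So there is no internal proof to compare your attempt against; what I can do is assess your reconstruction on its own terms.

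Your part~1 sketch is plausible and follows the standard template already used in the paper (the ``bump''-encoding of a witness search, as in the proof of \cref{thm:distrep-not-polytime-norm-invariant} and in \cref{s:Braverman-distrep}); the reverse direction via an $\NPTime$-oracle deciding the existential ``is there a $y$ with $(q,y)$ certified near $S$'' is fine, since for non-uniform polynomial-time computability the given name is itself a polynomial-time function.

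Part~2, however, has a genuine gap, and you flag it yourself without closing it. After establishing (correctly) that it suffices to decide ``does $S$ reach past $x_1=t$'' and that, when it does, convexity puts some point of $S$ on the vertical line $\{x_1=t\}$, you claim that such a point ``can be found by a companion one-dimensional bisection against the approximate-membership oracle, using only polynomially many probes.'' That step does not go through as stated: the predicate ``$(t,y)$ is close to $S$'' is not monotone in $y$ --- as $y$ ranges over $[-2^b,2^b]$ the answers are \emph{no}, then \emph{yes} on the unknown fibre $[c(t),d(t)]$, then \emph{no} again --- so a binary search has no invariant to maintain and need not converge. Knowing that the segment $[p_0,q^\ast]$ crosses the line does not help unless you know $q^\ast$, which is precisely what you are trying to locate. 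This is exactly the obstacle you then describe in the next paragraph (``a blind search for the fibre\ldots would cost exponentially many probes''); but the resolution you offer --- ``the argument instead exploits the two-dimensional structure\ldots to replace exhaustive search by a logarithmic-depth one'' --- is an assertion, not an argument. What is needed here, and what Zhao--M\"uller actually supply, is a concrete scheme that starts from the already-located fibre over $p_0$'s column (found by genuine bisection, since there the starting point $a_2$ sits \emph{inside} the interval and one-sided searches are monotone) and propagates fibre information outward with only polynomially many steps --- for example, using the cone from the known fibre to bound where the target fibre must lie and how wide it must be, combined with a doubling/binary-search schedule on the $x_1$-threshold. Without spelling out such a propagation step (and handling the degenerate case where the fibre over $p_0$ already has zero width), the proof of part~2 is incomplete.
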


The proof of the second argument can be extended to $\dsoproj_{d,d-1}$ \emdash
which by composition $\dsoproj_{e+1,e} \circ \cdots \circ \dsoproj_{d,d-1}$
implies the polynomial-time $(\gridrep[d],\gridrep[e])$-computability of
$\dsoproj_{d,e}$.
Moreover, the second result carries over to the seemingly poorest
representation $\wmemrep$, but only if restricted to bounded convex
\emph{bodies} $\KCR$.

\begin{prop}[$\wmemrep$-computability \& -complexity of $\dsoproj$]
Let $d > e \in \IN$.
\begin{enumerate}
\item $\dsoproj|_{\KR}$ is
	$(\wmemrep[d] \enp{r} \enp{b}, \wmemrep[e])$-\emph{discontinuous}
	for $d \geq 2$.
\item Nonetheless:
	$\dsoproj|_{\KCR}$ is $(\wmemrep[d] \enp{b}, \wmemrep[e])$-computable, and
	even $(\wmemrep[d] \enp{ar} \enp{b}, \wmemrep[e])$-computable in parameterized
	polynomial time.
\end{enumerate}
\end{prop}

\begin{proof}
\begin{enumerate}
\item Apply the adversary argument we have seen several times before.
	The discontinuity then follows by cutting the unit hypercube $[0,1]^d$
	up via a ``chess-board''-like pattern.
\item Use $\wmemrep[d] \enp{ar} \enp{b} \pleq^{\KCR} \setrep[d]$
	(\cref{s:convex-opt}) and
	$\setrep[d] \enp{b} \pequiv^{\compset} \gridrep[d]$
	\cref{thm:setrep-gridrep-equiv}) on the domain side,
	$\gridrep[e] \pleq^{\compset} \setrep[e] \pleq^{\KR} \wmemrep[e]$ on
	the co-domain side, and then apply the comment on the complexity of
	$\dsoproj_{d,e}|_{\KC}$ following \cref{thm:gridrep-projection}.
\qedhere
\end{enumerate}
\end{proof}

\section{Function inversion and image computation}     \label{sec:fun-set-ops}

In this section we discuss the complexity of function inversion and image
computation; \ie, of $(f,S) \mapsto (f|_{S})^{-1}$ for the former, and
$(f,S) \mapsto f[S]$ for the latter.

Recall that, for the representations in \cref{def:set-representations},
a name would return either a bit ($\wmemrep$, $\setrep$) or a dyadic rational
($\distrep$, $\reldistrep$) and/or an integer ($\gridrep$), all bounded in binary
length by that of the query and/or parameter.
This becomes different when encoding (approximations to) arbitrary continuous
real functions $f$.
To this end, we refine the previous notion of complexity
(\cref{def:cantor-polytime,def:baire-polytime}) and measure the running time in
both the discrete argument and the length of the name encoding $f$.
This generalization is covered by \cref{def:baire-polytime}(3) and permits
to bound the complexity of the aforementioned operators.

For convenience and supported by the results from \cref{sec:comparisons}, we
formulate the following definitions and results with respect to
$\normdot \dfeq \normdot[\infty]$.

\subsection{Prerequisites}

Following \cref{def:baire-polytime} we already discussed the need to add
advice parameters in order to state the complexity of operators solely in
the coding length of their discrete arguments.
As an example we saw $\distrep|^{\compset}$ with advice parameter $b$.
This approach works for sigma-compact metric spaces, but not for the space of
continuous real functions:
According to Arzela-Ascoli, its compact subsets are parameterized by a modulus
of equicontinuity, that is, an integer sequence as opposed to a single integer.
The following definition of \emph{second-order polynomials} and
\emph{second-order polynomial time} (devised and investigated in a sequence
of papers \cite{Mehlhorn76,KapronCook96,Lambov06,KawamuraCook}) provides a solution by defining a
notion of length in both the discrete argument \emph{and} the oracle.
A recent attempt to generalize from second-order to higher-order complexity
can be found in \cite{FH13}.

\begin{defi}[second-order polynomials and complexity;
	cf.~{\cite[\secref{3.2}]{KawamuraCook}}] \ %
	\begin{enumerate}
	\item A total function $\phi \colon \Sast \to \Sast$ is
		\emph{length-monotone} if $\len{\phi(s)} \leq \len{\phi(t)}$ holds true
		whenever $\len{s} \leq \len{t}$ for $s,t \in \Sast$. We denote the set
		of length-monotone functions by $\Reg$.
	\item On $\phi \in \Reg$ define a notion of \emph{length} through
      \begin{displaymath}
		\len{\phi}(m) \dfeq \len{\phi(0^m)} =
			\max_{s \in \Sast.\,\len{s} \leq m} \len{\phi(s)}.%
		\footnote{Notice the overloading of the length-function $\len{\cdot}$:
			Depending on the context, it denotes the length of either words
			or length-monotonic functions.
			But this overloading ``behaves well'' in the sense that
			every word $s \in \Sast$ can be associated with the constant function
			$\phi_s \dffn t \mapsto s$ so that the length of $s$ coincides with
			the length of $\phi_s$, \ie, $\len{s} = \len{\phi_s}$.}
        \end{displaymath}
	\item A \emph{second-order polynomial} $P \colon (\IN \to \IN) \to (\IN \to \IN)$
		in arguments $L \colon \IN \to \IN$ and $n \in \IN$ is defined inductively:
		Every constant $m \in \IN$ is a second-order polynomial, as well as
		variable $n$; assuming $Q$ and $Q'$ are second-order polynomials, then
		$Q+Q'$, $Q\cdot Q'$ and $L(Q)$ are, too.
	\end{enumerate}
\end{defi}

We make a few remarks why the above definitions are useful, and how they
subsume \cref{def:baire-polytime}.

\begin{rem}\ %
\begin{enumerate}
\item As by construction, the class of second-order polynomials is
	closed under addition, multiplication and composition
	(just like its first-order counterpart, $\IN[X]$).
\item \cref{def:second-order-rep} already showed how to encode multiple
	length-monotone functions $\phi_1,\phi_2$ into one,
	$\phi \dfeq \enc{\phi_1,\phi_2}$.
	This way, $\len{\phi}(m+1) = \len{\phi_1}(m) + \len{\phi_2}(m) + 1$
	for all $m \in \IN$.
\end{enumerate}
Let $\reptpl$ and $\reptpl'$ be second-order representations of sets $X$
and $X'$, respectively, and $\ens{E} \colon X \mto \Sast$ be a multi-valued
function.

\begin{enumerate}[resume]
\item As per \cref{def:second-order-rep}(3), any
	$\reptpl \times \reptpl'$-name is of form $\enc{\phi,\phi'}$ for
	$\phi,\phi' \in \Reg$, and thus itself in $\Reg$ by the previous
	point.
\item Similarly, any $\reptpl \enp{E}$-name $\phi = \enc{\phi_1,\phi_2}$
	is of length
	$\len{\phi}(m+1) = \len{\phi_1}(m) + \len{\ens{E}(\reptpl(\phi_2))} + 1$.
\end{enumerate}
As per the last two points, fully polynomial-time computability implies
computability in second-order polynomial-time.
\end{rem}

We are now ready to define representations for functions and lengths of names
thereof.

\begin{defi}[moduli, and representations for total functions]
Let $X \in \compset[d]$, and $f \colon X \to \IR^{e}$ be a continuous function.
\begin{enumerate}
\item A function $\modcont \colon \IN \to \IN$ is called \emph{modulus
	of (uniform) continuity of $f$} if $\norm{x - y} \leq 2^{-\modcont(n)}$
	implies $\norm{f(x) - f(y)} \leq 2^{-n}$ for all $x,y \in \dom(f)$ and
	precisions $n \in \IN$.\footnote{%
		The concepts and arguments in this section generalize to integer
		parameters.
	}
\item A $\tuple{\phi,\varphi} \in \Reg$ is a $\funrep[d,e][X]$-name of $f$ if
	\begin{enumerate}[label=(\emph{\alph*})]
	\item $\phi$ satisfies
		\begin{align} \label{eqn:cond-total}
			\xall{q}{\ID^d \cap X} \xall{n}{\IN}
			\vnorm{\phi(q,0^n) - f(q)} \leq 2^{-n}
		\end{align}
		and
	\item $\varphi$ encodes a modulus of continuity $\modcont$ of $f$, \ie,
		$\varphi \dffn s \in \Sast \to 0^{\modcont(\len{s})}$.
	\end{enumerate}
	In order to simplify the notation we associate $\varphi$ with $\modcont$
	and just write $\enc{\phi,\modcont}$ instead of $\enc{\phi,\varphi}$.
\item A function $\modsu \colon \IN \to \IN$ is called \emph{modulus of
	(uniform) unicity of $f$}
	(cf.~\cite[\secref{4.1}]{Ko91};
	introduced by Kohlenbach in \cite{Kohlenbach90,Kohlenbach93}, although
	in a more general way than we need it here)
	if $\norm{f(x) - f(y)} \leq 2^{-\modsu(n)}$
	implies $\norm{x - y} \leq 2^{-n}$ for all $x,y \in \dom(f)$ and
	precisions $n \in \IN$.
\item A $\enc{\phi,\modcont,\modsu} \in \Reg$ is a
	$\invrep[d,e][X]$-name of $f$ if $\modcont$ and $\modsu$ are, respectively,
	moduli of continuity and unicity of $f$, and
	$\funrep[d,e][X] \big( \enc{\phi,\modcont} \big) = f$.
\end{enumerate}
\end{defi}

Representations $\funrep$ and $\invrep$ only cover subclasses of total
functions with \emph{a priori known domains};
thus asking about the $(\funrep[d,d][X],\funrep[d,d][Y])$-computability
and -complexity of function inversion would only make sense if they were
restricted to total injective and \emph{surjective} functions of signature
$X \to Y$ only.
Phrased differently, formulating function inversion over a class $\mathcal{F}$
of functions and \wrt $\funrep$ only makes sense in case that for all functions
$f \in \mathcal{F}$ the (a priori known) codomain matches $\img(f)$;
thus, the inverses of functions in $\mathcal{F}$ had to be total and
(more importantly) had to share the same domain.
This is too restrictive a requirement
In general, the inverse $g$ of an injective function $f \dffn X \to Y$ is a
\emph{partial} function from $Y$ to $X$;
but \cref{eqn:cond-total} does not work in case of partial functions:
Any $\phi \in \Reg$ satisfying \cref{eqn:cond-total} and associated to a
partial function $g \parcol X \to Y$ is only defined for
\emph{dyadic points in $\dom(g)$}, but $\dom(g)$ does not necessarily contain
\emph{any} dyadic point.

By relaxing on the first universal quantification in \cref{eqn:cond-total} we
obtain new representations $\parfunrep[d,e]$ and $\parinvrep[d,e][X]$
(\ie, \emph{multi-representation}; cf.~\cite{GWXu08}) which extend
$\funrep[d,e]$ and $\invrep[d,e]$, respectively, and are tailor-made for
\emph{partial functions}.
They render any name to be defined on \emph{all} dyadic inputs (not only
those from $X \cap \ID^d$), but only give good approximations
(in the usual sense) if the input is close to the domain of the respective
function (specializing \cite[Ex.~1.19(h)]{KMRZarXiv}).

\begin{defi}[representing partial functions] \label{def:parinvrep}
	Let $f \parcol \IR^d \to \IR^{e}$ be a (possibly partial) function with
	compact domain.
	A $\enc{\phi,\modcont,\modsu}$ is a $\parinvrep[d,e]$-name of $f$ if
	$\modcont$ and $\modsu$ are moduli of $f$, respectively, and $\phi$ satisfies
	\begin{equation}
		\label{eqn:cond-partial}
		\begin{aligned}
		\xall{q}{\ID^d}
		\xall{n}{\IN}
		\Big(
	&	\dom(f) \cap \cball\big(q, 2^{-\modcont(n+1)}\big) \neq \emptyset\\[-1ex]
	&	\implies
		\xsome{x}{\dom(f) \cap \cball\big(q, 2^{-\modcont(n+1)}\big)}
		\norm{\phi \big(q,0^n\big) - f(x)} \leq 2^{-(n+1)}
		\Big).
		\end{aligned}
	\end{equation}
Similarly define $\parfunrep[d,e]$ as the generalization of $\funrep[d,e]$
to continuous partial functions.
\end{defi}

Note that by the above construction, every $\parinvrep[d,e]$-name $\phi$ of
some \emph{total} function $f$ in particular is a $\invrep[d,e]$-name of $f$,
too:
For each $q \in \dom(f)$ there is an
$x \in \dom(f) \cap \cball(q,2^{-\modcont(n+1)})$ such that
$\norm{f(q) - f(x)} \leq 2^{-(n+1)}$.
Applying \labelcref{eqn:cond-partial} then yields
\[
	\vnorm{\phi(q,0^{n}) - f(q)}
	\leq \vnorm{\phi(q,0^{n}) - f(x)} + \vnorm[0]{f(x) - f(q)}
	\leq 2^{-(n+1)} + 2^{-(n+1)}
	= 2^{-n}
	\eqnsp .
\]

\subsection{Function inversion: some upper and lower bounds}
	\label{sec:dsoinv}

The $\dsoinv$ operator takes a function $f$ and a subset
$\clset \ni S \subseteq \dom(f)$, and (under the assumption
on $f$ having a local inverse on $S$) maps $(f,S)$ to the inverse of $f|_S$.
In this section we focus on the \emph{parameterized complexity} of this
operator.

While $\dsoinv$ is polynomial-time
computable for injective functions \emph{from $[0,1]$ to $\IR$}
\cite[Thm.~4.6]{Ko91}, its complexity is linked to the existence of one-way
functions from dimension two onwards \cite[Thm.~4.23+4.26]{Ko91}.
If $f$ is \emph{bi-Hölder continuous} (\ie, both $f$ and its
inverse are Hölder continuous), then $\dsoinv$ still is only
\emph{computable in exponential time}, but becomes
\emph{parameterized polynomial-time} computable for
\emph{bi-Lipschitz functions} (\cref{s:inv-lip}).
It turns out that this bound is actually the best we can achieve:
There is \emph{no} parameterized polynomial-time algorithm for $\dsoinv$ over
bi-H\"older functions that are not bi-Lipschitz assuming that \emph{one-way
permutations} exist (\cref{s:owp-bi-hoelder}; an assumption stronger than
the existence of one-way string functions underlying contemporary cryptography).

We start to formally prove the above claims by reviewing a few non-uniform
bounds on function inversion.
The first fact is a uniform reformulation of the above mentioned inversion
result, \cite[Thm.~4.6]{Ko91}, for one-dimensional functions.

\begin{fact}
$\dsoinv$ is polynomial-time
$\big( \invrep[1,1][{[0,1]}], \parinvrep[1,1][\IR] \big)$-computable.
\end{fact}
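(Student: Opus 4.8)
The plan is a uniform monotone bisection. Write $\enc{\phi,\modcont_f,\modsu_f}$ for the given $\invrep[1,1][{[0,1]}]$-name, so $f$ is continuous and injective on $[0,1]$, hence strictly monotone. First I would reduce \wlg{} to the strictly increasing case: for $m_0$ suitably large depending only on $\modsu_f(1)$, injectivity and the modulus of unicity give $\abs{f(0)-f(1)}>2^{-\modsu_f(1)}$, so comparing $\phi(0,0^{m_0})$ with $\phi(1,0^{m_0})$ decides the direction of monotonicity; if $f$ is decreasing, run the procedure below on $-f$ and negate its outputs. The inverse $g\dfeq f^{-1}$ is then a partial function with compact domain $\dom(g)=[f(0),f(1)]$, image $[0,1]$, and is strictly increasing.

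For the moduli of the $\parinvrep[1,1][\IR]$-name of $g$ I would take, \emph{without any computation}, $\modcont_g\dfeq\modsu_f$ and $\modsu_g\dfeq\modcont_f$: substituting $u=f(x)$, $v=f(y)$ with $x,y\in[0,1]$ turns ``$\abs{u-v}\le 2^{-\modsu_f(n)}\Rightarrow\abs{x-y}\le 2^{-n}$'' into a modulus of continuity of $g$, and ``$\abs{x-y}\le 2^{-\modcont_f(n)}\Rightarrow\abs{u-v}\le 2^{-n}$'' into a modulus of unicity of $g$.

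It remains to compute, for $q\in\ID$ and $n\in\IN$, a value $\phi'(q,0^n)$ satisfying \labelcref{eqn:cond-partial} with $g$ in place of $f$. Put $m\dfeq\modsu_f(n+1)+1$ and $[a_0,b_0]\dfeq[0,1]$, and for $i=1,\dots,n+1$ let $c\dfeq(a_{i-1}+b_{i-1})/2$ and branch: if $\phi(c,0^m)>q+2^{-m}$ set $[a_i,b_i]\dfeq[a_{i-1},c]$; if $\phi(c,0^m)<q-2^{-m}$ set $[a_i,b_i]\dfeq[c,b_{i-1}]$; otherwise halt and output $c$. If the loop completes, output the midpoint of $[a_{n+1},b_{n+1}]$. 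For correctness one uses $\abs{\phi(c,0^m)-f(c)}\le 2^{-m}$ from \labelcref{eqn:cond-total}, so the first branch forces $f(c)>q$ and the second $f(c)<q$; hence whenever $q\in\dom(g)$ the invariant $g(q)\in[a_i,b_i]$ is preserved, the final midpoint lies within $2^{-(n+2)}\le 2^{-(n+1)}$ of $g(q)$, and $x\dfeq q$ witnesses \labelcref{eqn:cond-partial}. If we halt in the ``otherwise'' branch then $\abs{\phi(c,0^m)-q}\le 2^{-m}$, hence $\abs{f(c)-q}\le 2^{1-m}=2^{-\modsu_f(n+1)}=2^{-\modcont_g(n+1)}$; for $q\in\dom(g)$ this reads $\abs{f(c)-f(g(q))}\le 2^{-\modsu_f(n+1)}$, so $\abs{c-g(q)}\le 2^{-(n+1)}$ and again $x\dfeq q$ works. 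Finally, if $q\notin\dom(g)$ but $\dom(g)\cap\cball(q,2^{-\modcont_g(n+1)})\ne\emptyset$, then $q$ lies within $2^{-\modsu_f(n+1)}$ of an endpoint $v\in\{f(0),f(1)\}$; a short case analysis (no branch can ever point away from $g(v)$, since that would force $f(c)$ past $v$) shows the output stays within $2^{-(n+1)}$ of $g(v)\in\{0,1\}$, so \labelcref{eqn:cond-partial} holds with $x\dfeq v$.

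The running time is $n+1$ iterations, each performing one oracle call $\phi(c,0^m)$ with $\abs{c}=\bigO(n)$ and $m=\modsu_f(n+1)+1$ (bounded by the length function of the input name), together with one comparison of dyadic rationals; so the whole procedure is bounded by a second-order polynomial in $n$ and the length of the input name, which is what ``polynomial-time $(\invrep[1,1][{[0,1]}],\parinvrep[1,1][\IR])$-computable'' asks for. I expect the main obstacle to be the boundary analysis in the last paragraph: one must choose the query precision $m$ so that the ``ambiguous'' bisection outcome already certifies a $2^{-(n+1)}$-approximation via $\modsu_f$, and then verify that this choice still meets the \emph{existential} clause of \labelcref{eqn:cond-partial} when $q$ falls just outside $\img(f)$ rather than strictly inside $\dom(g)$.
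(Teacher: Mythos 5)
Your proof is correct and follows the same binary-search strategy as the paper (which itself points to Ko's Thm.~4.6 and gives only a brief sketch). You also settle details that the paper's sketch glosses over: reading the moduli of the inverse directly off the input name by swapping $\modcont$ and $\modsu$, choosing the oracle query precision $m = \modsu_f(n+1)+1$ so that the ``ambiguous'' bisection outcome itself certifies a $2^{-(n+1)}$-approximation, and handling the partiality of $g = f^{-1}$ required by $\parinvrep$. (Incidentally the paper's phrase ``stops when $p$ is of precision roughly $\modcont(\modsu(n))$'' does not match the cleaner iteration count you use; your version seems closer to what actually makes the argument go through.) The one place you flag as potentially problematic, the boundary case in the ambiguous branch, does in fact hold: with $q$ outside $\dom(g)$ on one side of the endpoint $v$ and $c$ strictly inside $[0,1]$, $f(c)$ and $f(v)$ lie on the same side of $q$, so $\abs{f(c)-f(v)} \le \abs{f(c)-q} \le 2^{-\modsu_f(n+1)}$ directly, without the lossy triangle inequality; alternatively one can sidestep that paragraph entirely by noting that in the ambiguous branch $f(c)$ itself lies in $\dom(g)\cap\cball\big(q,2^{-\modcont_g(n+1)}\big)$ and $g(f(c))=c$ is exactly the output, so $x\dfeq f(c)$ is always a valid witness for \labelcref{eqn:cond-partial} regardless of whether $q\in\dom(g)$.
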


Notice the \emph{necessity} of adding an inverse modulus $\modsu$ to make this
result work.
The algorithm behind the proof is based on \emph{trisection on $[0,1]$}
\cite[Ex.~6.3.6]{Weih00}:
For a given point $q$ in the range of $f$, start with $p = 1/2$ as a candidate
for a $2^{-n}$-approximation to $f^{-1}(q)$ and use that injectivity implies
strict monotonicity for injective functions $f \colon [0,1] \to \IR$ to
determine whether to continue this binary search in $[0,p]$ or $[p,1]$.
This algorithm stops and returns $p$ when it is of precision roughly
$\modcont(\modsu(n))$. By unrolling the definitions of both $\modcont$ and
$\modsu$ one verifies that this indeed gives a $2^{-n}$-approximation to
$f^{-1}(q)$.
This approach, however, fails from dimension two on due to lack of total order.

The following two results recall known lower and upper bounds on the
complexity of non-uniform function inversion.\newpage

\begin{fact}[non-uniform bounds for function inversion;
	{\cite[Thm.~4.23+4.26]{Ko91}}] \ %
	\label{s:ko-inv}
	\begin{enumerate}
		\item If $\PTime = \NPTime$, then $f^{-1}$ is polynomial-time
			$(\realrep[2],\realrep[2])$-computable on $\range(f)$
			whenever $f \colon [0,1]^2 \to \IR^2$ is injective,
			$(\realrep[2]|^{[0,1]^2},\realrep[2])$-computable in polynomial time
			and $\modsu$ is polynomially bounded.\footnote{Note that $f$ being
				polynomial-time computable already implies $\modcont$ to be
				polynomially bounded.}
		\item If $\PTime \neq \UPTime$, then there exists an injective, polynomial-time
			$(\realrep[2],\realrep[2])$-computable function $f \colon [0,1]^2 \to [0,1]^2$
			with polynomial modulus of unicity $\modsu$ for which $f^{-1}$ is
			\emph{not} $(\realrep[2],\realrep[2])$-computable in polynomial time on
			$\dom(f^{-1}) = \range(f)$.
	\end{enumerate}
\end{fact}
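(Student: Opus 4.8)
Both halves are essentially Ko's; I outline how I would carry the arguments in the present uniform formulation. For the first item I would turn inversion into a bit-search guided by an $\NPTime$ oracle. Assume $\PTime = \NPTime$, and let an $\invrep[2,2][{[0,1]^2}]$-name $\enc{\phi,\modcont,\modsu}$ of $f$ together with a $\realrep[2]$-name of some $y \in \range(f)$ be given; on input precision $n$ I want to emit a dyadic point $2^{-n}$-close to $x \dfeq f^{-1}(y)$. The facts I would exploit are that $\modsu$ is itself a modulus of continuity of $f^{-1}$, that $\modcont$ is polynomially bounded because $f$ is polynomial-time computable, and that $\modsu$ is polynomially bounded by hypothesis. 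Accordingly I would fix an $m$ polynomial in $n$, large enough that any dyadic $p$ whose $\phi$-value lies within a constant multiple of $2^{-m}$ of a good dyadic approximation $\tilde{y}$ of $y$ is forced to satisfy $\norm{f(p) - y} \le 2^{-\modsu(n)}$, hence $\norm{p - x} \le 2^{-n}$ by the modulus of unicity; then set $\ell \dfeq \modcont(m+1)$, also polynomial in $n$. A dyadic $\ell$-bit approximation of $x$ is such a $p$, so some suitable $p \in \ID^2_\ell \cap [0,1]^2$ exists, but there are $2^{2\ell}$ candidates — too many to enumerate. Instead, the predicate ``some $p \in \ID^2_\ell \cap [0,1]^2$ extending a prescribed leading bit-pattern has $\norm{\phi(p,0^{m+2}) - \tilde{y}}$ below the chosen tolerance'' lies in $\NPTime$: the missing bits of $p$ are the witness, checked by a single evaluation of $\phi$ at $p$ to precision $m+2$. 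Under $\PTime = \NPTime$ I decide this predicate in polynomial time and reconstruct $p$ one bit at a time, then output a rounding of $p$ to $\ID^2_n$.

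For the second item I would encode an unambiguous-but-hard problem into an injective function. Fix $A \in \UPTime \setminus \PTime$, witnessed by some $B \in \PTime$ and a polynomial $p$ so that $s \in A$ precisely when there is a \emph{unique} $w \in \Sigma^{p(\len{s})}$ with $\enc{w,s} \in B$. Following the ``bump'' layout already used in the proof of \cref{thm:distrep-not-polytime-norm-invariant}, I would subdivide $[0,1]^2$ into cells indexed by pairs $(s,w)$, and on the cell of $(s,w)$ let $f$ fold that cell (affinely, up to scaling) onto a canonical sub-region of the cell indexed by $s$ when $\enc{w,s} \in B$, and onto a disjoint ``parking'' region otherwise. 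Uniqueness of the accepting witness is exactly what makes the individual folding maps assemble into a \emph{globally injective} $f \colon [0,1]^2 \to [0,1]^2$; taken piecewise affine, $f$ is polynomial-time $(\realrep[2]|^{[0,1]^2},\realrep[2])$-computable with polynomially bounded moduli $\modcont$ and $\modsu$. Evaluating a hypothetical polynomial-time $f^{-1}$ at (a name of) the canonical point of the $s$-cell then reads off the witness $w$ and decides $s \in A$; since $A \notin \PTime$, no such polynomial-time algorithm for $f^{-1}$ on $\range(f)$ can exist.

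The easy part will be the first item, which is little more than binary search repackaged as an $\NPTime$ query. The main obstacle will be the geometric bookkeeping in the second item: laying out the cells and parking regions so that the folding maps glue into a single map that is continuous \emph{and} injective on all of $[0,1]^2$, not just on the union of the open cells, while keeping $\modcont$ and $\modsu$ polynomial. As in the earlier construction I would address continuity at cell boundaries by shrinking each active bump strictly into the interior of its cell, and I would reserve a thin buffer frame around each canonical target so that images originating from $B$-cells never abut images of the parking type.
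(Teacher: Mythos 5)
The statement in question is a \emph{Fact} cited verbatim from Ko's book~\cite[Thm.~4.23+4.26]{Ko91}; the paper does not prove it, and only sketches Ko's argument for item~(2) in the prose following the Fact. So there is no detailed ``paper proof'' to compare against, but the paper's sketch does commit to a specific route, which differs from yours in one place.

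For item~(1), your reduction to an $\NPTime$ predicate (bit-by-bit reconstruction of a preimage from the existentially-quantified remaining bits, verified by one evaluation of $\phi$) is exactly the standard argument, and it is what Ko does; no concerns there.

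For item~(2) there is a genuine divergence. The paper (following Ko) does \emph{not} encode a $\UPTime$-language $A$ via its verifier $B$; it first invokes the characterization ``total one-way functions exist iff $\PTime\neq\UPTime$'' (the separate Fact cited from~\cite{Ko85,GS88}) and then embeds a total \emph{one-way function} $\varphi\colon\Sast\to\Sast$ into the domain of $f$. This matters for two reasons. First, injectivity of the one-way function $\varphi$ is what makes the piecewise-linear $f$ globally injective essentially for free; in your variant, injectivity of $f$ must be extracted from the \emph{uniqueness} of accepting witnesses, which only controls what happens \emph{within} a single $s$-block, and you still have to lay out the $(s,w)$-cells, the canonical $s$-target and all the parking regions so that distinct blocks never collide --- you flag this as the ``main obstacle'' but it is not resolved, and it is exactly the step that Ko's route finesses. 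Second, your decision procedure queries a hypothetical poly-time realizer of $f^{-1}$ at the canonical point of the $s$-cell, but for $s\notin A$ that point is not in $\range(f)$, and a realizer of a partial function is under no obligation to halt, let alone answer usefully, outside $\dom$; you would need to clock the realizer (using the assumed polynomial bound) and verify any output by applying $f$, which should be added. Finally, note that the paper's later \cref{s:owp-lengthpreserving} and \cref{s:owp-bi-hoelder} are stated as modifications of \emph{Ko's one-way-function construction}; your UP-encoding variant would not plug into that later argument, so even if it can be completed it does not serve the paper's downstream use of this Fact.
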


The second statement has been proved using the following result that connects
the $\PTime$ vs.~$\UPTime$ question with the existence of
\emph{one-way functions}
(which we discuss thereafter).

\begin{fact}[\cite{Ko85,GS88}]
	Total one-way functions exist if and only if
	$\PTime \neq \UPTime$.
\end{fact}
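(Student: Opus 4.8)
The statement is the classical Grollmann--Selman characterization, so the plan is to reproduce its standard proof, taking the intended meaning of a \emph{total} one-way function to be a total, injective, polynomially honest, polynomial-time computable $f \colon \Sast \to \Sast$ for which no polynomial-time machine computes an inverse on $\range(f)$ (i.e.\ no polynomial-time $g$ with $f(g(y)) = y$ for all $y \in \range(f)$). Both directions proceed through the correspondence between unambiguous nondeterministic computation and injective polynomial-time string functions; I would prove the two implications separately.

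\emph{From $\PTime \neq \UPTime$ to one-way functions.} First I would fix $A \in \UPTime \setminus \PTime$, witnessed by an unambiguous nondeterministic polynomial-time machine $M$ with path-length bound $p$. Using a polynomial-time computable pairing bijection $\Sast \to \Sast \times \Sast$, I would define $f \colon \Sast \to \Sast$ by $f(\enc{x,w}) \dfeq 1x$ if $w$ encodes an accepting computation of $M$ on $x$ (which forces $\len{w} \le p(\len{x})$), and $f(\enc{x,w}) \dfeq 0\enc{x,w}$ otherwise. Then $f$ is polynomial-time computable (verifying a computation path is polynomial time) and total; it is honest because in the first case $\len{\enc{x,w}}$ is polynomial in $\len{x} = \len{f(\enc{x,w})} - 1$ and in the second case the input is a factor of the output; and it is injective because the two cases produce strings with different leading bits while, within the first case, unambiguity of $M$ yields at most one accepting $w$ per $x$. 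Finally, if some polynomial-time $g$ inverted $f$ on its range, then on input $x$ one would compute $g(1x)$ and accept iff $f(g(1x)) = 1x$, deciding $A$ in polynomial time --- a contradiction. Hence $f$ is a total one-way function.

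\emph{From one-way functions to $\PTime \neq \UPTime$.} Conversely, given a total one-way $f$ with honesty polynomial $p$, I would form the prefix-of-preimage language $A \dfeq \{ \enc{y,w} : \text{there is } z \in \Sast \text{ with } \len{z} \le p(\len{y}),\ w \text{ a prefix of } z,\ \text{and } f(z) = y \}$. A nondeterministic machine that guesses such a $z$ and checks $f(z) = y$ runs in polynomial time, and since $f$ is injective it has at most one accepting path on any input, so $A \in \UPTime$. If $A$ were in $\PTime$, one could invert $f$ in polynomial time: on $y \in \range(f)$, reconstruct the preimage bit by bit, starting from the empty prefix and at each step testing whether $w0$ or $w1$ is still a prefix of some preimage, stopping once neither is --- which by injectivity happens exactly when $w$ equals the unique preimage. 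This polynomial-time inverter contradicts one-wayness, so $A \in \UPTime \setminus \PTime$.

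\emph{Where the care is needed.} The machine simulations are routine; the delicate points are bookkeeping. In the forward direction I must ensure honesty and injectivity hold \emph{simultaneously} --- this is precisely what the ``prepend a disambiguating bit'' trick buys --- and I would spell out the honesty polynomial explicitly. In the backward direction I must check that the bit-by-bit search terminates with the \emph{full} preimage and not a proper prefix, which again rests on injectivity (the preimages of a fixed $y$ form a single leaf, so the tree of prefixes above $y$ is a path). The main thing to verify up front is that the notion of ``total one-way function'' adopted in the paper matches the one these two constructions consume and produce; if the paper instead demands a length-increasing rather than merely honest one-way function, I would adjust the forward construction by padding the output of the first branch.
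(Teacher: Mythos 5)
The paper states this as a \emph{Fact} cited to Ko~\cite{Ko85} and Grollmann--Selman~\cite{GS88} and gives no proof of its own, so there is nothing internal to compare against. Your argument is a correct reproduction of the standard one from those references: the forward direction encodes an $\UPTime \setminus \PTime$ witness language into an injective, honest, polynomial-time function via the disambiguating-bit trick (unambiguity giving injectivity on the accepting branch), and the backward direction places the prefix-of-preimage language in $\UPTime$ and shows membership in $\PTime$ would yield a bit-by-bit polynomial-time inverter. Your closing worry about length-increasing versus honest one-way functions is moot here: the paper's definition, given just below the Fact, requires only polynomial honesty ($\len{\phi^{-1}(s)} \le p(\len{s})$ on the range), which your forward construction satisfies as written.
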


Notice the emphasis on totality (and implicitly on injectivity) since there
are other types of one-way functions whose existence, in contrast, are not
always connected to just $\PTime$ vs.~$\UPTime$ \cite[Thm.~3.2]{HT03}.
An injective polynomial-time computable function
$\phi \parcol \Sast \to \Sast$ is said to be a \emph{(worst-case)
one-way function} if
\begin{enumerate*}[label=(\emph{\alph*})]
	\item some polynomial $p$ exists such that
		$\len{\phi^{-1}(s)} \leq p(\len{s})$ whenever $s \in \range(\phi)$
		(\emph{polynomial honesty}); and
	\item if no polynomial-time computable function $\psi$ satisfies
		$\psi(\phi(s')) = s'$ for all $s' \in \dom(\phi)$ (not
		\emph{polynomial-time invertible}).
\end{enumerate*}

Now we are equipped to talk about the proof of \cref{s:ko-inv}(2):
Assume $\PTime \neq \UPTime$, and let $\phi \colon \Sast \to \Sast$ be a
total one-way function.
Based on $\phi$, construct a piecewise-linear function $f$ with the
properties described in \cref{s:ko-inv}(2) which is hard to invert if $\phi$ is.
This is achieved by encoding the image of $\phi$ into the domain of $f$ in a
way which only allows to recover the inverse $s = \phi^{-1}(t)$ from
$t$ and $f$ if $\phi$ is polynomial-time computable.
The moduli (of continuity/unicity) of $f$ are, moreover, polynomials.
More precisely, $\modcont(n) = \modsu(n) = cn + p(n) + \const$, where
$p(\len{s}) = \len{\phi^{-1}(s)}$ for any $s \in \Sigma^\ast$.
Since $p(n)$ is super-logarithmic\footnote{If not, one could just
try out all of the $2^{\lb(\len{s})}$ many possible preimages for
$s$ under $\phi$, thus computing $\phi$ in polynomial time,
contradicting the existence of one-way functions (since $\phi$ is
an arbitrary one), thus implying $\PTime = \UPTime$.},
the moduli are bounded linear (from below) in $n$. This suggests that
$\dsoinv$ could be polynomial-time computable for the class of Lipschitz-
or even H\"older-continuous functions; which we prove to be almost
correct in \cref{s:inv-lip}; and it can not be generalized to
\emph{arbitrary} polynomially bounded moduli (\cref{s:ko-inv}(2)).

The inversion algorithm we devise in \cref{s:inv-lip} will involve partial
injective functions, encoded using $\invrep[d,e]$ together with $\setrep[d]$
for their domain:

\begin{defi}[representation {$\setinvrep$}]
Let $f \parcol \IR^d \to \IR^{e}$ be a (possibly partial) function with
compact domain, and let $S \in \compset[d]$.
A $\enc{\phi,\modcont,\modsu,\psi} \in \Reg$ is a
$\setinvrep[d,e][\IR^d]$-name of $(f,S)$ if
\begin{enumerate*}[label=(\emph{\alph*})]
	\item $S \subseteq \dom(f)$;
	\item $\psi$ is a $\gridrep[d]$-name of $S$;
	\item $\enc{\phi,\modcont,\modsu}$ is a $\parinvrep[d,e]$-name of $f$.
\end{enumerate*}
\end{defi}

Recall that a function $f \colon X \to Y$ on normed spaces
$(X,\ndot[X])$ and $(Y,\ndot[Y])$ is \emph{$(\alpha,C)$-Hölder (continuous)}
with \emph{Hölder exponent} $0 < \alpha \leq 1$ and
\emph{Hölder constant} $C > 0$ if it satisfies
\[
	\norm[Y]{f(x)-f(y)} \leq C\cdot\norm[X]{x-y}^\alpha
\]
for any two $x,y \in X$.

In particular, an $(\alpha,C)$-Hölder function has modulus of continuity
$\modcont(n) \dfeq (n + \lb(C)) \cdot \alpha^{-1}$, and any $L$-Lipschitz
(continuous) function is in fact $(1,L)$-Hölder.
Take $[0,1] \ni x \mapsto \sqrt{x}$ as an example:
It is $(\alpha, 1)$-Hölder for $\alpha \leq 1/2$, and its inverse
$[0,1] \ni y \mapsto y^2$ is even $(1,1)$-Hölder (hence $1$-Lipschitz).

If the inverse of a Hölder function $f$ exists and if it moreover is a Hölder
function, than we call $f$ \emph{bi-Hölder}.
If $f$ is bi-Hölder, then there exist bounds $0 < \alpha,\alpha' \leq 1$
and $C,C' > 0$ such that
\[
	1/C' \cdot \norm[X]{x-y}^{1/\alpha'}
	\leq \norm[Y]{f(x) - f(y)}
	\leq C \cdot \norm[X]{x-y}^\alpha
	\eqnsp .
\]
If $\alpha = \alpha' = 1$, then we call $f$ \emph{bi-Lipschitz}.
\footnote{%
	Exponents $\alpha \in \{ 0 \} \cup (1,\infty)$ excluded by purpose:
	$\alpha = 0$ if the respective function is bounded, and
	$\alpha > 1$ if it is constant.
}

For convenience, denote by $\hfun$ the class of partial functions
$f \parcol \IR^d \to \IR^e$ that are also bi-Hölder, and by $\lfun$
the class of those $f$ being bi-Lipschitz.
Now we are equipped to state our result about inversion.

\begin{thm}[complexity of {$\dsoinv$}]
	\label{s:inv-lip}
Operator $\dsoinv$ is $\big(\setinvrep[d,e],\parinvrep[e,d] \big)$-computable
and its time complexity
is bounded \emph{exponentially} in $\modcont \circ \modsu \circ \modcont(n)$.
This exponential dependence still holds true when restricted to $\hfun$,
but leads to a \emph{parameterized polynomial-time} bound when further
restricted to $\lfun$.
\end{thm}

The exponential dependence on Hölder parameters in the above theorem is
actually optimal unless $\PTime = \UPTime \cap \co\UPTime$.
To see why, we consider the notion of \emph{one-way permutations},
that is, bijective one-way functions.
It is known by \cite[Thm.~3.1]{HT03} that total one-way permutations exist if
and only if $\PTime \neq \UPTime \cap \co\UPTime$.
Recall that the moduli of the function specifically constructed to prove
\cref{s:ko-inv}(2) were polynomially bounded in $n$.
Assuming the existence of total one-way permutations, they even become
\emph{linear} in $n$.
Noting that $f$ is a Hölder function if and only if it has a linearly bounded
modulus then implies the claimed optimality of \cref{s:inv-lip}.

\begin{lem} \label{s:owp-lengthpreserving}
Let $\varphi$ be a total one-way permutation.
Then a \emph{partial} one-way permutation \\ $\psi \parcol \Sast \to \Sast$ with
the following properties can be constructed from $\varphi$:
\begin{enumerate}
\item $\psi$ is length-preserving, \ie,
	$\psi[\Sigma^m] \subseteq \Sigma^m$ for all $m \in \IN \eqnsp ;$
\item $\psi \in \FPTime \eqnsp ;$
\item $\psi^{-1} \in \FPTime \implies \varphi^{-1} \in \FPTime \eqnsp .$
\end{enumerate}
\end{lem}

\begin{cor} \label{s:owp-bi-hoelder}
Assume $\PTime \neq \UPTime \cap \co\UPTime$.
Then there exists an injective, polynomial-time
$(\realrep[2],\realrep[2])$-computable function with moduli of continuity and
unicity $\modcont,\modsu$ both of the form $n \mapsto an+b$ with $a,b \in \IN$
for which $f^{-1}$ is \emph{not} $(\realrep[2],\realrep[2])$-computable in
polynomial time on $\dom(f^{-1}) = \range(f)$.
\end{cor}

\Cref{s:owp-bi-hoelder} leads to the conclusion that the exponential-time
bound in \cref{s:inv-lip} for $\dsoinv$ restricted to Hölder functions
is optimal (assuming $\PTime \neq \UPTime \cap \co\UPTime$) as a continuous
function is Hölder continuous if and only if it admits a linear modulus of
continuity.

We now sketch how to prove \cref{s:inv-lip} (also illustrated in
\cref{fig:inv-sketch}) and postpone the respective proofs of the last two
statements until the end of this subsection.

\begin{figure}[htb]
	\centering
  \includegraphics{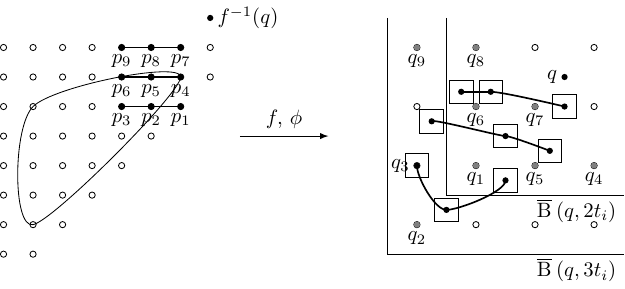}
	\caption{Points $p_j$ along with their correct and approximate images
		$f(p_j)$ and $q_j \dfeq \phi \big( p_j, 0^{m_i+1} \big)$,
		respectively. All approximate images $q_j$, except for $q_2, q_3$ and
		$q_9$, are close enough to $q$ (all that lies within the
		blue-highlighted ball), thus being candidates of being an approximate
		inverse image of $q$ in round $i+1$.}
	\label{fig:inv-sketch}
\end{figure}

Let $q$ be the point to compute
$(f|_S)^{-1}(q)$ for. Testing for all points $p$ on a fine grid, say
$\ID^d_{k(n)}$, whether their image approximate image is close to $q$ would be
a pure brute-force approach, and as such having an exponential running time.
Instead we search iteratively:
Start with a coarse grid $\ID^d_{k(0)}$ and keep
all these points from this (coarse) grid whose images are not too far from $q$.
The key idea in this step, which will lead to a low(er) complexity, is that
the number of points that have to be kept in this step can be bounded in terms
of both moduli ($\modcont$ and $\modsu$):
\begin{itemize}
	\item Any two distinct points $p,p' \in \ID^d_{\modcont(n)}$ are
		$\norm{p - p'} > 2^{-(\modcont(n) + 1)}$ apart,
	\item thus (by definition of the inverse modulus) their images satisfy
		\[
			\cball\big( f(p), 2^{-\modsu(\modcont(n)+1)-1} \big) \cap
			\cball\big( f(p'), 2^{-\modsu(\modcont(n)+1)-1} \big) = \emptyset
		\]
	\item which implies that only finitely many points from a fixed grid can
		be close to $q$ \emdash and we can bound their number in terms of
		$\modcont$, $\modsu$ and $n$.
\end{itemize}
For the next iteration, the grid will be refined to $\ID^d_{k(1)}$. But
instead of checking all these points, we will consider only those being close
to a point $p$ from the former grid $\ID^d_{k(0)}$ whose image has turned out
to be not too far from $q$. The complexity of this algorithm for finding a
good approximation to $f^{-1}(q)$ will then be of form
$\bigO(n \cdot \#\text{of points that have to be kept in each iteration})$.


\begin{proof}[of \cref{s:inv-lip}]
Let $\enc{\phi, \modcont, \modsu, \phi', 0^b}$ be a $\parinvrep[d,e]$-name
of $(f,S)$.
Further, let $n \in \IN$ and $q \in \ID^e \cap S$;
we postpone the discussion about the general case where
$\dist{f[S]}(q) \leq 2^{-\modsu(n+1)}$ to a later stage in this proof.
Without loss of generality, we assume $\modcont(n+1)-\modcont(n) \geq 1$
and $\modsu(n+1)-\modsu(n) \geq 1$.\footnote{%
	Hölder functions with Hölder exponent $\alpha \in (0,1]$
	have this property since
	$\mu(n+1)-\mu(n) = 1/\alpha \in [1,\infty)$
	for $\mu(n) = 1/\alpha \cdot (n + \lb{H})$.
}
Moreover, we prove the theorem only for $b \dfeq 0$ (just for convenience)
although the arguments extend to arbitrary outer radii parameter $b$.

To shorten the frequently used terms, we define precisions
$k_i \dfeq \modcont(\modsu(i)+1)+1$ and $m_i \dfeq \modsu(i)$, radii
$r_i \dfeq 2^{-k_i+1}$ and $t_i \dfeq 2^{-m_i}$, as well as approximations
$q_{p,i} \dfeq \phi(p,0^{k_i})$.
The proof is centered around the following sets:
\begin{align*}
	S_{0} & \dfeq \st{p \in \ID^d_{k_0}}{\phi'(p,0^{m_0}) = 1}
	\eqnsp , \\
	C_{i} & \dfeq \st[1]{p \in S_i}{
		p \in S_i \text{ and } \norm{q - q_{p,i}} \leq 2 t_i
	}
	\eqnsp , \\
	S_{i+1} & \dfeq \bigcup_{p \in C_i} S_{p,i+1}
	\eqnsp , \quad
	S_{p,i+1} \dfeq \st[1]{p' \in \cball(p,r_i) \cap \ID^d_{k_{i+1}}}{
		\phi'(p',0^{m_{i+1}}) = 1}
	\eqnsp .
\end{align*}
All we now have to do is to iteratively compute the \emph{candidate sets
$C_i$} and finally deterministically pick a point $p \in C_{n+2}$.
We claim that such a $p$ exists and that it is a $2^{-n}$-approximation to
$f^{-1}(q)$.

\emph{An important note before we continue.}
Since $f$ is a partial function the term ``$f(p)$'' might be undefined for
some $p \in S_i$.
We nonetheless want to talk about objects like ``$\cball(f(p),\cdot)$''.
The definition of $\invrep[d,e]$ solves this problem:
For $i \in \IN$ and $p \in \ID^d_{k_i}$ let $x_{p,i}$ be \emph{any}
point from $S \cap \cball(p,r_i)$ as in \labelcref{eqn:cond-partial}.
Then $f\big[\cball(p,r_i) \cap S\big] \subseteq \cball(f(x_{p,i}),t_i)$,
and we will therefore always reason about $\cball(f(x_{p,i}),\delta)$
instead of the maybe undefined $\cball(f(p),\delta/2)$.

Correctness:
We have to show that $C_i \neq \emptyset$ for all $0 \leq i \leq n+2$, and
that $f^{-1}(q) \in \cball(p,2^{-n})$ for any $p \in C_{n+2}$.
Instead of the statment ``$C_i \neq \emptyset$'' we prove the stronger
proposition ``$\xsome{p_i}{C_i} q \in \cball(f(x_{p_i,i}),t_i)$''.

For $i=0$ we first note that $\bigcup_{p \in S_0} \cball(p,r_0)$ is a
superset of $S$.
This plus the definition of $\modcont$ imply
$f[S] \subseteq \bigcup_{p \in S_0} \cball(f(x_{p,0}),t_0)$.
Therefore, there must exist a point $p_0 \in S_0$ whose image is close to $q$
in the sense that $q \in \cball(f(x_{p_0,0}),t_0)$.
Hence, $\norm{q - q_{p_0,0}} \leq 2t_0$ which gives $C_0 \neq \emptyset$.

Now let $i \geq 1$.
By construction of $C_{i-1}$ and $S_i$ it holds that
\[
	q \in
	\bigcup_{p \in C_{i-1}} f\big[ \clb{p}{r_{i-1}} \cap S \big]
	\subseteq
	\bigcup_{p \in C_{i-1}} \bigcup_{p' \in P_{p,i}}
		f\big[ \clb{p'}{r_i} \cap S \big]
	\eqnsp .
\]
Therefore the exists a $p' \in P_i$ with $\norm{q - f(x_{p',i})} \leq t_i$,
implying $\norm{q - q_{p',i}} \leq 2t_i$.
Thus, $C_i \neq \emptyset$.

In the end (\ie, for $i = n+2$), the definition of $\modsu$ implies that for
any $p \in C_{n+2}$ holds $q \in \cball(q_{p,n+2},2t_{n+2})$, which first
leads to $q \in \cball(f(x_{p,n+2}),3t_{n+2})$.
Using that $\modsu(n+2) - \modsu(n) \geq 2$ implies
$3t_{n+2} < 4t_{n+2} \leq t_n$
finally allows to conclude $f^{-1}(q) \in \cball(p,2^{-n})$.

\emph{A note on the general case of $\dist{f[S]}(q) \leq t_{n+1}$}:
By assumption, there exists an $x \in S$ such that
$\norm{f(x) - q} \leq t_{n+1}$.
Therefore, $\norm{f(x) - f(x_{p,n+2})} \leq t_{n+2}$ holds true for all
$p \in \ID^d_{k_{n+2}} \cap \clb{x}{r_{n+2}}$, implying
$\norm{f(x) - q_{p,n+2}} \leq 2t_{n+2}$.
Combining both bounds then gives
$\norm{q_{p,n+2} - q} \leq 4t_{n+2} \leq t_{n}$.
\\[.5em]
Complexity:
We have to bound the number of points in $S_{0}, C_i$ and $S_{i+1}$ for
$i \in \IN$.
The set $S_{0}$ contains at most $2^{d(b + k_0)}$ many points\footnote{%
	This is true modulo details:
	The exponential dependence on $k_0 = \modcont(\modsu(0)+1)+1$ only leads
	to an exponential dependence on the respective H\"older exponents.
	This exponential bound, however, can be reduced to linear:
	Extend the definition of $\modcont$ and $\modsu$ to integers and,
	instead of $k_0$, start with $k_{-j}$ for $j \in \IN_+$ being maximal
	with property $k_{-j} \geq 0$.
	Such a $j$ can be found in time logarithmic in the absolute value of
	both H\"older constants.
},
and $|S_{i+1}|$ is bounded by
\[
	|S_{i+1}|
	\leq \sum_{p \in C_i} \big| \cball(p,r_i) \cap \ID^d_{k_{i+1}} \big|
	\leq |C_i| \cdot (2r_i/r_{i+1})^d
	\leq |C_i| \cdot 2^{d(k_{i+1} - k_i + 2)}
	\eqnsp .
\]
The bound on $|C_i|$ requires a bit more care (as hinted prior to this proof):
Any two distinct points $p,p' \in C_i$ have the property that
$\cball(p,r_i/4)$ and $\cball(p',r_i/4)$ are disjoint.
It then follows by definition of $\modsu$ that
$\cball(x_{p,i},2^{-\modsu(k_i+2)})$ and $\cball(x_{p',i},2^{-\modsu(k_i+2)})$
are also disjoint.
This fact now allows to bound $|C_i|$ by counting how many disjoint
balls of radius $2^{-\modsu(k_i+2)}$ fit into $\clb{q}{2t_i+t_i}$:
\[
	|C_i|
	\leq (2 \cdot 3t_i/2^{-\modsu(k_i+2)})^d
	< (4 \cdot 2^{\modsu(k_i+2) - m_i})^d
	\eqnsp .
\]
The above describe procedure for computing $\dsoinv$ therefore checks at most
\begin{align*}
	\bigO\Big(
		\sum_{i=0}^{n+2}\nolimits \big|C_i\big| + \big|S_i\big|
	\Big)
\end{align*}
many points.
Their number is bounded by (and thus further simplifies to)
\begin{align}
	\label{eqn:inv-cpl-bound-general}
	\bigO\big(
		n \cdot 2^{\modsu(k_{n+2}+2)-m_{n+2}+k_{n+2}-k_{n+1}}
	\big)
	\eqnsp .
\end{align}
If $f|_S$ is bi-Hölder continuous, then its moduli are of form
$\modcont(n) = \ol{\alpha}^{-1}(n+\ol{c})$ and
$\modsu(n) = \ul{\alpha}^{-1} (n + \ul{c})$ with $\ol{c} \dfeq \lb\ol{C}$,
$\ul{c} \dfeq \lb\ul{C}$.
Moreover,
\[
	\modsu(k_{n+2}+2) - m_{n+2}
	= n \cdot ((\ol{\alpha}\ul{\alpha}^2)^{-1} - \ul{\alpha}^{-1})
		+ 2 \cdot (\ol{\alpha}\ul{\alpha}^2)^{-1}
		+ k_0/\ul{\alpha}
\]
and $k_{i+1} - k_{i} = (\ol{\alpha}\ul{\alpha})^{-1}$.
Assuming $\ol{\alpha}\ul{\alpha} = 1$ (which holds exactly for
bi-Lipschitz functions) allows to rewrite \cref{eqn:inv-cpl-bound-general}
to
$\bigO\big( n \cdot 2^{k_0} \big)$
by applying the identities we just obtained.

Note that the encoding length of each $p$ and $q_{p,i}$ is bounded linearly
in $b + k_{n+2} + \len{\enc{q}}$.
Finally, this bound combined with the former bound on the number of points
to check gives the claimed parameterized polynomial-time bound for $\dsoinv$
over $\lfun$.
\qedhere
\end{proof}

\begin{proof}[of {\cref{s:owp-bi-hoelder}}]
Follows directly from the proof of \cref{s:ko-inv}(2) by replacing the
one-way function with a partial one-way permutation as in
\cref{s:owp-lengthpreserving}.
Since $\psi$ is length-preserving it satisfies
$p(\len{s}) = \len{\psi^{-1}(s)}$ with $p \dfeq \id$.
By the remarks following \cref{s:ko-inv}(2), the moduli of the function
constructed to prove this direction are of form
$\mu(n) = cn + p(n) + \mathrm{const}$
\emdash a bound linear in $n$.
\qedhere
\end{proof}

\begin{proof}[of {\cref{s:owp-lengthpreserving}}]
Let $\varphi$ be a total one-way permutation and $p \in \IN[X]$ such that
$\len{s} \leq p(\len{\varphi(s)})$ for all $s \in \Sast$.
Set \[
	\Gamma_n \dfeq \sum\nolimits_{i=0}^n \big( p(i) + 2 \big)
	\eqnsp , \quad
	\gamma_n \dfeq \Gamma_n - \big( p(n) + 2 \big)
	\eqnsp , \quad
	\delta_{s,n} \dfeq p(n) - \len{s}
	\eqnsp ,
\]
and construct a partial function $\psi \parcol \Sast \to \Sast$ by
\[
	\psi \dffn
	0^{\gamma_n} \, 1 \, 0^{\delta_{s,n}} \, 1 \, s
	\longmapsto
	0^{\Gamma_n - (n+1)} \, 1 \, \varphi(s)
	\quad
	\text{for } \varphi(s) \in \Sigma^n
	\eqnsp .
\]
The idea behind the construction of $\psi$ is to first pad the all
arguments to $\varphi$ with length-$n$ images to be of length $\Gamma_n$,
and then to pad the image of each $t \in \Sigma^{\gamma_n}$ also to length
$\Gamma_n$.
This way, $\psi$ will be length-preserving.

Concerning (2):
Given a $t \in \Sast$, use $\len{t}$ to determine whether $t$ is contained
in $\Sigma^{\Gamma_n}$ for some $n$.
To this end, check if $t$ is of form $0^{\gamma_n}\,1\,0^{\delta_{s,n}}\,1\,s$
for some $s \in \Sigma^{\leq p(n)}$ and also if $\varphi(s) \in \Sigma^n$.
Note that the respective $n$ is bounded from above by $\len{t}$.
If $t$ is not of this particular form, then $t \not\in \dom(\psi)$ follows
immediately.
If, on the contrary, $t$ is of this form, but $\varphi(s) \not\in \Sigma^n$,
then $t \not\in \dom(\psi)$ follows, too.
If, however, $\varphi(s) \in \Sigma^n$,
then the (easy to compute) string $0^{\Gamma_n-(n+1)}\,1\,\varphi(s)$
is the image of $t$ under $\psi$.

Concerning (3):
Let $\psi^{-1} \in \FPTime$.
Given $t \in \Sast$, construct $t' \dfeq 0^{\Gamma_n}\,1\,t$.
Note that by surjectivity of $\varphi$ we know that elements of $\dom(\psi)$
can only be of the above form.
It thus suffices to compute
$s' \dfeq \psi^{-1}(t') = 0^{\gamma_n}\,1\,0^{\delta_{s,n}}\,1\,s$ and extract
$s$ from it which by construction of $\psi$ satisfies $\varphi^{-1}(t) = s$.
\qedhere
\end{proof}

\subsection{\texorpdfstring{{$\dsoimg$}}{Image}}
	\label{sec:dsoimg}

The operator $\dsoimg \dffn \usubseteq \cfn(\IR^d,\IR^e) \times \compset[d]
	\ni (f,S) \mapsto f[S] \in \compset[e]$ has been proven to be
$(\parfunrep[d,e] \times \gridrep[d], \setrep[e])$-computable
\cite[Thm.~6.2.4(4)]{Weih00} which, however, fails if we relax the restriction
on $S$ from compact to closed \cite[Thm.~6.2.4(3)]{Weih00}.
The respective proof unfortunately does not yield any complexity bounds.
However: Restricting $\dsoimg$ to Hölder functions does give parameterized
bounds.
To this end, 
define a representation $\imgfunrep[d,e]$ as follows:
A $\phi'$ is a $\imgfunrep[d,e]$-name of
$(f,S) \in \hfun \times \compset[d]$ if $\phi' = \enc{\phi,\varphi}$ with
$\parfunrep[d,e](\phi) = f$, $\gridrep[d](\varphi) = S$, and
$S \subseteq \dom(f)$.
Further denote by $\ens{\upalpha C}$ the enrichment by Hölder parameters, \ie,
\[
	\ens{\upalpha C} \dffn \hfun \mto \Sast
	\eqnsp , \quad
	\ens{\upalpha C} \dffn f \mmapsto \st
		{ \enc{ \unatrep(1/\alpha), \bnatrep(C) } }
		{ f \text{ is } (\alpha,C)\text{-Hölder continuous} }
	\eqnsp .
\]
Then the complexity of $\dsoimg$ restricted to Hölder functions follows
immediately from \cref{s:inv-lip}.

\begin{cor}
	$\dsoimg|_{\hfun \times \compset}$ is \emph{parameterized}
	polynomial-time $(\imgfunrep[d,e] \enp{\upalpha C}, \gridrep[e])$-computable,
	and $\dsoimg|_{\lfun \times \compset}$ is \emph{fully} polynomial-time
	$(\imgfunrep[d,e] \enp{\upalpha C}, \gridrep[e])$-computable.
\end{cor}

For the proof it essentially suffices to modify the proof of \cref{s:inv-lip}
as follows:
Replace all $k_i$ with $\modcont(i+1)+1$, $m_i$ with $i$, and instead of
deterministically picking a point $p \in S_{n+2}$ we check whether $S_{n+2}$
is empty.
If $S_{n+2}$ is empty, then it is a witness for $d_S(q) \geq 2^{-n}$,
while a non-empty $S_{n+2}$ witnesses $d_S(q) \leq 2^{-n+1}$.

\section{Future research}

\Cref{sec:operators,sec:fun-set-ops} can be understood as the base for further
interesting questions about operators and parameters that render them to be
polynomial-time computable; like the complexity of preimage
$\mathsf{PreImage} \colon (f,S') \mapsto f^{-1}[S']$
\cite[Lem.~24]{ziegler2004linalg}, or generalizations of the solution operator
for Poisson equations to arbitrary compact domains \cite{KSZ13}.
We also left open questions raised about the complexity of $\dsoinv$ for more
restricted classes of functions (continuous, smooth, Gevrey
\cite{LLM:Gevrey,KMRZarXiv}) and about improvements of \cref{s:ko-inv}.
For example:
Can Ko's construction be modified to produce a smooth function instead of
only a continuous one?
And do Ziegler and McNicholl's computability results on the implicit and
inverse function theorem \cite{Ziegler06,McNicholl08} (parameterized)
polynomial-time if restricted to a subset of $\cfn^2$ or Gevrey functions?

\section*{Acknowledgements}

I am grateful to Martin Ziegler for continuous advice and many helpful
suggestions; and to Akitoshi Kawamura, Ulrich Kohlenbach, Robert Rettinger,
and Florian Steinberg for seminal discussions, hints and lots of advice.
I also like to express my gratitude to the anonymous referees who have provided
invaluable suggestions on how to improve both the structure of this paper and
the presentation of results plus their respective proofs.

\end{document}